\documentclass[acmsmall,screen]{acmart}
\settopmatter{printfolios=true,printccs=true,printacmref=true}

\setcopyright{rightsretained}
\acmPrice{}
\acmDOI{10.1145/3473580}
\acmYear{2021}
\copyrightyear{2021}
\acmSubmissionID{icfp21main-p78-p}
\acmJournal{PACMPL}
\acmVolume{5}
\acmNumber{ICFP}
\acmArticle{75}
\acmMonth{8}

\bibliographystyle{ACM-Reference-Format}
\citestyle{acmauthoryear}   

\usepackage{amsfonts}
\usepackage{amsmath}
\usepackage{amsthm}
\usepackage{extarrows}
\usepackage{graphicx}
\usepackage{ifoddpage}
\usepackage{listings}
\usepackage{marginnote}
\usepackage{mathtools}
\usepackage{mathpartir}
\usepackage{semantic}
\usepackage{skull}
\usepackage{todonotes}
\usepackage{xcolor}
\usepackage{xspace}
\usepackage{stmaryrd}

\usepackage{booktabs}   
\usepackage{subcaption} 



\newcounter{WarnCounts}

\newcommand{\mute}[1]{\ifdraftx{#1} \else{} \fi}
\newcommand{\decorateWC}{
  \stepcounter{WarnCounts}
  \checkoddpage
  \ifoddpageoroneside
     \mbox{}\marginnote{\textcolor{red}{$\skull_{\theWarnCounts}$}}
     \else
     \reversemarginpar
     \mbox{}\marginnote{\textcolor{red}{$\skull_{\theWarnCounts}$}}
     \normalmarginpar
   \fi
}

\newcommand{\signedComment}[3]
           {\mute{\decorateWC\textcolor{#2}{(#1: {#3})}}}

\newcommand{\an}[1]{\signedComment{Aleks}{red}{#1}}


\newcommand{\eqdef}{\mathrel{\:\widehat{=}\:}}
\newcommand{\cmtte}{ECMTT\xspace}
\newcommand{\reduces}[1]{\mapsto_{#1}}
\newcommand{\redrel}[2]{#1 \reduces{} #2}


\newcommand{\nhandlertype}[4]{#1 [#2] \xRightarrow{#3} #4}
\newcommand{\kbinding}{\sim\hspace{-0.45em}\vcentcolon\,}
\newcommand{\rotatedsim}{\rotatebox[origin=c]{45}{$\sim$}}

\newcommand{\ectxsym}{\cdot}

\newcommand{\ejdgmt}[1]{\vdash #1}
\newcommand{\enjdgmt}[1]{\nvdash #1}
\newcommand{\jdgmt}[3]{#1; #2 \vdash #3}
\newcommand{\jdgmtsim}[1]{\jdgmt{\Delta}{\Gamma}{#1}}

\newcommand{\pjdgmt}[3]{#1 \vdash \tbind{#2}{#3}}
\newcommand{\pjdgmtsim}[2]{\pjdgmt{\Delta}{#1}{#2}}

\newcommand{\tbind}[2]{#1\,{:}\,#2}
\newcommand{\chypbind}[3]{#1\,{\div}\,#2 \Rightarrow #3}
\newcommand{\cbind}[2]{#1 \div #2}
\newcommand{\khypbind}[4]{#1\,{\kbinding}\,#2 \xRightarrow{#3} #4}
\newcommand{\mhypbind}[3]{#1\,{::}\,#2 [#3]}

\newcommand{\hbind}[5]{#1 \div \nhandlertype{#2}{#3}{#4}{#5}}
\newcommand{\hseqbind}[4]{#1 \div #2 [#3] \Rrightarrow #4}
\newcommand{\sbind}[2]{#1 \div_{\mathsf{s}} #2}
\newcommand{\esbind}[2]{{#2}\,{\shortrightarrow}\,{#1}}

\newcommand{\boolt}[0]{\textit{bool}}
\newcommand{\funt}[2]{#1 \rightarrow #2}

\newcommand{\boxt}[2]{[#1] #2}
\newcommand{\unitt}[0]{\textit{unit}}
\newcommand{\intt}[0]{\textit{int}}

\newcommand{\appl}[2]{#1 ~ #2}
\newcommand{\lam}[3]{\lambda \tbind{#1}{#2}.~#3}
\newcommand{\lamx}[2]{\lam{x}{#1}{#2}}
\newcommand{\tbox}[2]{\texttt{box}~#1.~#2}
\newcommand{\tboxemp}[1]{\texttt{box}~#1}
\newcommand{\teval}[2]{\texttt{eval}~[#1]~#2}

\newcommand{\tunit}{()}
\newcommand{\mvar}[2]{{#1}\,\langle{#2}\rangle}

\newcommand{\compappl}[4]{\sappl{\sop{#1}{#2}}{#3}{#4}}
\newcommand{\contappl}[5]{\sappl{\sscont{#1}{#2}{#3}}{#4}{#5}}
\newcommand{\letbox}[3]{\texttt{let box}~#1~\texttt{=}~#2~\texttt{in}~#3}
\newcommand{\letboxu}[2]{\letbox{u}{#1}{#2}}
\newcommand{\happl}[6]{\sappl{\shandle{#1}{#2}{#3}{#4}}{#5}{#6}}
\newcommand{\ret}[1]{\texttt{ret}~#1}
\newcommand{\sappl}[3]{#2 \leftarrow #1;~#3}
\newcommand{\tfix}[6]{
  \texttt{let fix}~#1~(\tbind{#2}{#3})~\texttt{=}~\texttt{box}~#4.~#5~\texttt{in}~#6
}
\newcommand{\cifel}[3]{\texttt{if}~#1~\texttt{then}~#2~\texttt{else}~#3}

\newcommand{\sop}[2]{#1~#2}

\newcommand{\sscont}[3]{\mathtt{cont}~#1~#2~#3}
\newcommand{\shandle}[4]{\mathtt{handle}~#1~[ #2 ]~#3~#4}
\newcommand{\shandlesimp}[3]{\mathtt{handle}~#1~#2~#3}

\newcommand{\handbase}[3]{\textit{return}\,(#1, #2) \shortrightarrow #3}

\newcommand{\handid}[1]{\textit{id}_{#1}}
\newcommand{\handoparr}[5]{#1\,(#2, #3, #4) \shortrightarrow #5}
\newcommand{\hextend}[2]{#1, #2}
\newcommand{\heval}[1]{\llbracket #1 \rrbracket}

\newcommand{\hseqbase}[0]{\bullet}
\newcommand{\hseqclause}[4]{(#1, #2, #3.~#4)}
\newcommand{\hseqextend}[5]{#1, \hseqclause{#2}{#3}{#4}{#5}}

\newcommand{\hndl}[3]{{\langle #1 \rangle}^{#2}_{#3}}
\newcommand{\hcont}[3]{#1.#2.~#3}
\newcommand{\hndlseq}[2]{{\llparenthesis #1 \mid #2 \rrparenthesis}}
\newcommand{\subst}[3]{#1[#2/#3]}
\newcommand{\substc}[3]{\{#1/#2\}#3}
\newcommand{\substk}[5]{#1[\hcont{#2}{#3}{#4} \rotatedsim{} #5]}
\newcommand{\substkxy}[3]{\substk{#1}{x}{y}{#2}{#3}}
\newcommand{\substm}[4]{#1[#2. #3/\!/#4]}

\begin{document}

\lstset{basicstyle=\ttfamily}

\title{Contextual Modal Types for Algebraic Effects and Handlers}         
\subtitle{Extended Version}                     


\author{Nikita Zyuzin}
\affiliation{
  \position{PhD Student}
  \institution{IMDEA Software Institute}            
  \streetaddress{Campus de Montegancedo s/n}
  \city{Pozuelo de Alarcon}
  \state{Madrid}
  \postcode{28223}
  \country{Spain}                    
}
\email{nikita.zyuzin@imdea.org}          
\affiliation{
  \institution{Universidad Politécnica de Madrid}            
  \country{Spain}                    
}

\author{Aleksandar Nanevski}
\affiliation{
  \position{Associate Professor}
  \institution{IMDEA Software Institute}            
  \streetaddress{Campus de Montegancedo s/n}
  \city{Pozuelo de Alarcon}
  \state{Madrid}
  \postcode{28223}
  \country{Spain}                    
}
\email{aleks.nanevski@imdea.org}         

\begin{abstract}
  Programming languages with algebraic effects often track the
computations' effects using type-and-effect systems.
%
%
In this paper, we propose to view an algebraic effect theory of a
computation as a variable context; consequently,
we propose to track algebraic effects of a computation
with \emph{contextual modal types}. We develop \cmtte, a
novel calculus which tracks algebraic effects by a contextualized
variant of the modal $\Box$ (necessity) operator, that it inherits
from Contextual Modal Type Theory (CMTT).


Whereas type-and-effect systems add effect annotations on top of a
prior programming language, the effect annotations in \cmtte are
inherent to the language, as they are managed by programming
constructs corresponding to the logical introduction and elimination
forms for the $\Box$ modality. Thus, the type-and-effect system
of \cmtte is actually just a type system.

Our design obtains the
properties of local soundness and completeness, and determines the
operational semantics solely by $\beta$-reduction, as customary in
other logic-based calculi. In this view, effect handlers arise
naturally as a witness that one context (i.e., algebraic theory) can
be reached from another, generalizing explicit substitutions from
CMTT.


To the best of our knowledge, \cmtte is the first system to relate
algebraic effects to modal types.  We also see it as a step towards
providing a correspondence in the style of Curry and Howard that may
transfer a number of results from the fields of modal logic and modal
type theory to that of algebraic effects.

\end{abstract}

\begin{CCSXML}
<ccs2012>
   <concept>
       <concept_id>10003752.10003790.10011740</concept_id>
       <concept_desc>Theory of computation~Type theory</concept_desc>
       <concept_significance>500</concept_significance>
       </concept>
   <concept>
       <concept_id>10011007.10010940.10010971.10011682</concept_id>
       <concept_desc>Software and its engineering~Abstraction, modeling and modularity</concept_desc>
       <concept_significance>500</concept_significance>
       </concept>
   <concept>
       <concept_id>10011007.10011006.10011008.10011024.10011027</concept_id>
       <concept_desc>Software and its engineering~Control structures</concept_desc>
       <concept_significance>500</concept_significance>
       </concept>
   <concept>
       <concept_id>10003752.10003790.10003793</concept_id>
       <concept_desc>Theory of computation~Modal and temporal logics</concept_desc>
       <concept_significance>500</concept_significance>
       </concept>
 </ccs2012>
\end{CCSXML}

\ccsdesc[500]{Theory of computation~Type theory}
\ccsdesc[500]{Software and its engineering~Abstraction, modeling and modularity}
\ccsdesc[500]{Software and its engineering~Control structures}
\ccsdesc[500]{Theory of computation~Modal and temporal logics}

\keywords{algebraic effects, effect handlers, modal logic, modal types}  

\maketitle

\section{Introduction}\label{sec:intro}
Languages with algebraic effects
\cite{plotkin2001,plotkin2002,plotkin2003} represent effects as calls
to operations from algebraic theories.  Effect handlers
\cite{pretnar2013} specify how these operations should be interpreted
when the computation using them executes.  Together, algebraic effects
and handlers provide a flexible way for composing effects in a
computation, and for evaluating a computation into a pure value by
handling all of its effects.

For example, to combine effects of state and exception in a
computation it suffices to merely use the operations from the
algebraic theory of state alongside the operations from the algebraic
theory of exceptions.  The task of interpreting the combination of
effects is delegated to the point when the computation executes
through a handler for the effects of both theories.  This is in
contrast to monads, where the semantics is provided upfront, and
remains unchanged through execution.

Typically, a type system for algebraic effects keeps track of the
operations of a program through effect annotations, deriving a
type-and-effect system.  A call to an operation is annotated with the
invoked effect, which propagates to the type of the enclosing
computation. Handling removes the effect from the
annotation. Crucially, the search for the best type-and-effect system
for algebraic effects is still open. For example, the recent work on
effect handlers in OCaml~\cite{sivaramakrishnan2021} highlights the
current lack of a design in which effect annotations combine well with
advanced language features such as polymorphism, modularity and
generativity.






In this paper, we propose an entirely type-based design for a language with
algebraic effects.
We consider the operations used by a
computation as its \emph{effect context}, and track it through the typing
mechanisms that we adapt from Contextual Modal Type Theory
(CMTT)~\cite{nanevski2008}.  In particular, we derive a novel calculus
for algebraic effects rooted in modal logic and modal type theory,
which we name Effectful Contextual Modal Type Theory, or \cmtte for
short.




CMTT derives from a contextual variant of intuitionistic modal
logic S4 through the annotation of the inference rules with proof terms and
a computational interpretation in the style of Curry-Howard correspondence.
Modal logics in general reason about truth in a universe of possible
worlds.  In the specific case of S4, the key feature is
the propositional constructor $\Box$ (called ``necessity'', or ``box''
for short).  The proposition $\Box{A}$ is considered proved in the
current world, if we can produce a proof of $A$ in every possible
future
world~\cite{Simpson94,benton98,alechina01categorical,pfenning2001}.
The derived computational interpretations resulted in type systems for
staged computation~\cite{Davies01jacm} and run-time code
generation~\cite{wickline98modal}.

\newcommand{\ctx}{\Gamma}
\newcommand{\bctx}{\Psi}

The contextual variant further indexes, or grades, the necessity
constructor with a context of propositions $\bctx$.  The proposition
$\boxt{\bctx}{A}$ is considered proved in the current world if we can
produce a proof of $A$ in every possible future world, using
\emph{only} the propositions from $\bctx$ as hypotheses. The modality
$\Box{A}$ is recovered when the context $\bctx$ is empty. The
computational interpretation transforms propositions to types, and
considers the type $\boxt{\bctx}{A}$ as classifying programs of type
$A$ that admit free variables from the variable context $\bctx$, and
no others.  The typing discipline has obtained calculi for
meta-programming with open code~\cite{Nanevski05jfp}, for dynamic
binding~\cite{nanevski03ppdp}, for typed tactics and proof
transformations in proof
assistants~\cite{veriml,stampoulisphd,pientka:flops10}, and has
explained the notion of meta variable and related optimizations in
logical frameworks~\cite{pientka:cade03}.

\cmtte applies the contextual discipline by taking the context $\Psi$
to be an algebraic theory---a signature of effect
operations\footnote{In this paper we consider only \emph{free}
  algebraic theories that contain operations, but no equations between
  operations.}---locally bound in a computation of type
$\boxt{\bctx}{A}$.
It allows \cmtte to integrate effects much stronger into the language than a
typical type-and-effect system would.
A typical type-and-effect system annotates the types of some existing
language without changing the terms.
In contrast, algebraic theories in \cmtte are manipulated at the level of
terms as well.
The constructs for context (algebraic theory in our view) binding and
instantiation logically correspond to introduction and elimination forms for the
modality.
We prove that these constructs are in
harmony~\cite{pfenning2009} by establishing their local soundness and
completeness. The latter properties ensure that the typing rules for
the modality are self-contained and independent of the other language
features, and in turn imply the desired modularity of the type system
design.



By connecting the disciplines of modal logic and of algebraic effects,
we see \cmtte as a step towards transferring the ideas between
them.  For example, the work on modal type theory has
considered polymorphism over contexts as first-class
objects~\cite{pientka:popl08,cave:lfmtp13}, which should have direct
equivalent in polymorphism over effect theories
\cite{hillerstrom2016,leijen2017,biernacki2019,zhang2019,brachthauser2020oopsla}.

Because of the foundation in modal logic, we also expect that \cmtte
will admit important logical properties and developments such as:
Kripke semantics and normalization by
evaluation~\cite{Simpson94,bak:imla17,ilik13,gratzer:icfp19}, proof
search via cut-free sequent calculus (corresponding computationally to
synthesizing programs with algebraic effects), clean equational
theory, proof of strong normalization, scaling to dependent
types~\cite{ahman2018, gratzer:icfp19}, etc.  We leave all of these
considerations for future work.

\subsection{Introducing Contextual Modality}

To see how contextual types apply to algebraic effects, consider the
algebraic theory of state with a single integer cell.  As in related
work on algebraic effects and handlers~\cite{bauer2018, bauer2015},
this theory consists of the operations $\mathit{get}: \mathit{unit}
\to \mathit{int}$ and $\mathit{set}: \mathit{int} \to \mathit{unit}$.
%
%
The following program increments the state by 1, and has type
$\mathit{int}$.
\[
\tbind{
  \sappl{\sop{\mathit{get}}{\tunit{}}}{x}{
    \sappl{\sop{\mathit{set}}{(x + 1)}}{\_}{
      \ret{x}
    }
  }
}{\intt{}}
\]
Intuitively, we model this theory as a \emph{variable context}
$\mathit{St} \eqdef \mathit{get}: \mathit{unit} \to \mathit{int},
\mathit{set}: \mathit{int} \to \mathit{unit}$,\footnote{As we shall
  see in Section~\ref{sec:overview}, the technical details of our
  variable typing will differ somewhat, and will make use of several
  different typing judgements, but the above types for $\mathit{get}$
  and $\mathit{set}$ in the context $\mathit{St}$ are sufficiently
  approximate for now.}  which is \emph{bound locally} in the scope of
the computation, and is also listed in the computation's type:
\begin{equation}
\mathit{incr} \eqdef{}
\tbind{
  \tbox{\mathit{St}}{
    \sappl{\sop{\mathit{get}}{\tunit{}}}{x}{
      \sappl{\sop{\mathit{set}}{(x + 1)}}{\_}{
        \ret{x}
      }
    }
  }
}{\boxt{\mathit{St}}{\intt{}}} \label{def:incr}
\end{equation}
The term constructor \texttt{box} is inherited from CMTT, and
\emph{simultaneously} binds all the variables from the supplied
context $\mathit{St}$. In this case, the effect operations
$\mathit{get}$ and $\mathit{set}$ from $\mathit{St}$ become available
for use within the scope of \texttt{box}.

Operationally, \texttt{box} \emph{thunks} the enclosed programs.
The boxed program doesn't execute until explicitly forced, and is considered
pure.
Semantically, the type $\boxt{\mathit{St}} A$ classifies computations that use
operations from the theory
$\mathit{St}$---\emph{but no other operations}---and return a value of
type $A$ upon termination. Of course, in general, we admit an
arbitrary variable context (resp.~algebraic theory) $\bctx$ to be
bound by \texttt{box} and $\boxt{\bctx}{A}$, not just the concrete one
$\mathit{St}$. Crucially, the \texttt{box} constructor is the
\emph{introduction form} for $\boxt{\bctx}{A}$.

\subsection{Context Reachability}

In a contextual type system one has to describe how the propositions
from one context $\bctx$, can be proved from the hypotheses in
another context $\bctx'$. When such proofs can be constructed, one
says that $\bctx$ is reachable from $\bctx'$, or that $\bctx'$ reaches
$\bctx$. As we discuss in Section~\ref{sec:background}, CMTT models
context reachability using explicit substitutions.
In \cmtte, we model context reachability somewhat differently, using
\emph{effect handlers}.

For example, the following is one possible handler for the algebraic
theory $\mathit{St}$.
\[
\begin{array}{r@{\ }c@{\ }l}
\mathit{handlerSt} & \eqdef & 
  (\!\!\!\begin{array}[t]{l}
      \handoparr{\textit{get}}{x}{k}{z}{
        \sscont{k}{z}{z}
      }, \\
      \handoparr{\textit{set}}{x}{k}{z}{
        \sscont{k}{\tunit{}}{x}
      }, \\
      \handbase{x}{z}{
        \ret{(x, z)}
      } )
    \end{array}
\end{array}
\]
As customary in languages for algebraic effects, an effect handler
consists of a number of clauses showing how to interpret each effect
operation upon its use in a computation.

In the case of
$\mathit{handlerSt}$, the clause for $\mathit{get}$ takes $x$ (of type
$\mathit{unit}$) as the call parameter, $k$ as the continuation at the
call site, and $z$ as the handler's current state at the call site.
It then calls $k$ with $z$ and $z$, thus modeling that $\mathit{get}$
returns the current value of the state (the first $z$), and proceeds
to execute handling from the unmodified state (the second $z$).  For
$\mathit{set}$, the call parameter $x$ has type $\mathit{int}$, and
the handler calls $k$ with $\tunit{}$ and $x$, thus modeling that
$\mathit{set}$ returns the (only) value $\tunit{}$ of $\unitt{}$ type,
and sets $x$ as the new state for the rest of the handling. The
\textit{return} clause applies when the handled computation returns a
value $x$ and ends in state $z$. In $\mathit{handlerSt}$, $x$ and $z$
are paired up and returned to the enclosing scope by $\ret{(x, z)}$.

This $\mathit{handlerSt}$ describes how to transform a program that
uses the theory $\mathit{St}$ into a program with no effects, and it
models state by passing it explicitly via continuation calls.  Because
its clauses use no operations, we can say that $\mathit{handlerSt}$ is
a witness for $\mathit{St}$ reaching the \emph{empty} algebraic theory
(i.e., the empty context).  Of course, it's perfectly possible for
handler clauses to invoke operations from a non-empty theory as
well. We shall see examples of this in Section~\ref{sec:overview}.

\subsection{Eliminating Contextual Modality}

Following CMTT, we adopt a $\mathtt{let}$-style constructor as an
elimination form for $\boxt{\bctx}{A}$. To illustrate 
this form, consider the following program that handles the
$\mathit{St}$ effects in $\mathit{incr}$ using
$\mathit{handlerSt}$.\footnote{In this paper, we don't consider
  handler variables. Thus, when we write \textit{handlerSt} in the
  code, the reader should assume that the definition of
  \textit{handlerSt} is simply spliced in. Similarly for algebraic
  theories, e.g., \textit{St}. Quantifying over contexts and
  substitutions has been done in the work on contextual
  types~\cite{cave:lfmtp13} and should directly transfer to our
  setting. We thus forego such addition and focus on the 
  fundamental connection between modalities and algebraic effects.}
\[
\letbox{u}{\textit{incr}}{\shandlesimp{u}{\textit{handlerSt}}{0}}
\]
The reduction of this program binds the variable $u$ to the term
inside \texttt{box} in the definition of \textit{incr} in
(\ref{def:incr}) and proceeds with the scope of $\texttt{let}$, where
the term is handled with $\textit{handleSt}$ starting from the initial
state $0$. This program evaluates to a pair of the produced value and state,
$(0, 1)$, as we will illustrate in detail in Section~\ref{sec:reductions}.

The reduction for $\texttt{let-box}$ is quite different from the one
associated with monads and monadic bind. Unlike with monads, the body
of $\textit{incr}$ isn't evaluated when \textit{incr} is bound to $u$,
but is evaluated later, when $u$ is handled. In particular, if $u$
itself doesn't appear in the scope of $\texttt{let-box}$, then the
computation bound to $u$ is never evaluated. 


This kind of \texttt{let} construct is associated with comonads, and
indeed the $\boxt{\bctx}$ modality of CMTT has been assigned a
comonadic semantics~\cite{gab+nan:japl13}. Similar \texttt{let}
constructs have been considered in other modal calculi for
(co)effects~\cite{nanevski03cmu,nanevski:phd04,gab:icfp16,orch:icfp19}.
Common to them is the reduction that provides an unevaluated
expression to a program term that's equipped for evaluating
it; in our case, to a handler matching the effect operations of the expression.
We don't consider the categorical semantics or the equational theory for our
variant of $\boxt{\bctx}{A}$ here.
However, one of the contributions of this paper is observing that the comonadic
elimination form applies to handling of algebraic effects as well.

\section{Review of Contextual Modal Type Theory}\label{sec:background}

The typing rules of CMTT rely on a two-context typing judgment from
Figure~\ref{fig:typing_cmtt}. The context $\Gamma$ contains the
variables of the ordinary lambda calculus $\tbind{x}{A}$, which we
refer to as \emph{value variables}, as they bind values in a
call-by-value semantics.
The context $\Delta$ contains
\emph{modal variables} $\mhypbind{u}{A}{\Psi}$ that are annotated with
a type $A$, but also with a context $\Psi$ of value variables.
Such a modal variable stands for a term that may depend on the (value)
variables in $\Psi$, but no other value variables.

The term $\tbox{\Psi}{e}$ captures this dependence. It binds $\Psi$ in
the scope of $e$, but also prevents $e$ from using any value variables
that may have been declared in the outside context. This is formalized
in the rule $\Box I$ where, reading the rule upside-down, the context
$\Gamma$ from the conclusion is removed from the premise. This means,
in particular, that the following term is well-typed\footnote{In
  examples, we freely use standard types such as $\intt{}$ and product
  types, integer constants, and functions such as $+$, $*$, pairing
  and projections, without declaring them in the syntax and the typing
  rules. They pose no formal difficulties. We also consider the
  application of such (pure) expressions to evaluate immediately.}
\[
\ejdgmt{\tbind{
    \tbox{\tbind{x}{\intt{}}, \tbind{y}{\intt{}}}{(x + y)}
  }{\boxt{\tbind{x}{\intt{}}, \tbind{y}{\intt{}}}{\intt{}}}}
\]
On the other hand, a term that, under $\texttt{box}$, uses value
variables other than those bound by $\texttt{box}$, cannot be ascribed
a type
\[
\enjdgmt{
  \lam{z}{\intt{}}{
    \tbox{\tbind{x}{\intt{}}, \tbind{y}{\intt{}}}{(x + y + z)}
  }
}
\]
In Section~\ref{sec:typing} we will use a similar rule for \cmtte to
capture that a term may contain effect operations from the algebraic
theory $\Psi$, but not from an outside algebraic theory $\Gamma$.

\begin{figure}
  \input{formal/typing_cmtt}
  \Description{Typing rules of CMTT.}
  \caption{Typing rules of CMTT.}\label{fig:typing_cmtt}
\end{figure}

However, \texttt{box} doesn't restrict the use of modal variables, as
the premise and the conclusion of the $\Box I$ rule share the $\Delta$
context.  To introduce a modal variable into $\Delta$ one uses the
\[
\letbox{u}{e_1}{e_2}\] construct (rule $\Box E$), which binds $u$ to
\emph{the unboxed body} of $e_1$ in the scope of $e_2$.  For example,
eliding the types and commas in iterated variable binding, if $e_1 =
\tbox{x\ y}{(x + y)}$, then $u$ is bound to $x + y$, which is a term
with free variables $x$ and $y$. In the scope of $e_2$, $u$ will be
declared in $\Delta$ as $\mhypbind{u}{\intt}{\tbind{x}{\intt},
  \tbind{y}{\intt}}$.

Because $u$ may be bound to a term with free variables from its
associated context $\Psi$, one can use $u$ in a program only after
providing a definition for \emph{all} of the variables in
$\Psi$. These definitions are given by explicit substitutions (rule
\textsc{esub}) that guard the occurrences of $u$ (rule \textsc{mvar}).
For example, the following term, in which $u$ is guarded by the
explicit substitution $(\esbind{5}{x},\esbind{2}{y})$, is well-typed
and evaluates to $7$.
\begin{equation}
\letbox{u}{\tbox{x\ y}{(x + y)}}{
    \mvar{u}{\esbind{5}{x}, \esbind{2}{y}} \label{unbox}
}
\end{equation}

Because \texttt{box} doesn't restrict the use of modal variables, the
following term, in which $u$ occurs within a \texttt{box}, is also
well-typed, with type $\Box{\intt}$, and it evaluates to
$\texttt{box}\,(5 + 2)$.\footnote{When binding the empty variable
  context $\ectxsym$, we abbreviate $\boxt{\ectxsym}A$ as
  $\Box{A}$ and $\tbox{\!\ectxsym}{e}$ as $\texttt{box}\,e$.}
\begin{equation}
\letbox{u}{\tbox{x\ y}{(x + y)}}{
    \texttt{box}\,{(\mvar{u}{\esbind{5}{x}, \esbind{2}{y}})} 
}\label{unboxbox}
\end{equation}

As mentioned in Section~\ref{sec:intro}, in contrast to monadic bind, binding $\tbox{\Psi}{e_1}$ to a modal variable
$u$ 
\[\letbox{u}{\tbox{\Psi}{e_1}}{e_2}\]
doesn't by itself cause the evaluation of $e_1$.  Whether $e_1$ is
evaluated depends on the occurrences of $u$ in $e_2$.
For instance, in the example term~(\ref{unboxbox}), $u$ is thunked
under a \texttt{box}. Thus, upon substituting $u$ with $x +
y$, the explicit substitution $(\esbind{5}{x}, \esbind{2}{y})$ executes to
produce $\texttt{box}\,(5+2)$, after which the evaluation stops (i.e.,
boxed terms are values). In contrast, in the example
term~(\ref{unbox}), $u$ is not thunked, thus the evaluation proceeds
for one more step to obtain $7$.  Moreover, $e_2$ may contain several
occurrences of $u$, some thunked and some not, and each guarded by a
different explicit substitution. For example, the following is a
well-typed term
\[
\begin{array}[t]{l}
  \texttt{let\ box}\ u = \tbox{x\ y}{(x + y)}\ \texttt{in}\\
   \qquad (\!\!\!\begin{array}[t]{l}
       \mvar{u}{\esbind{5}{x}, \esbind{2}{y}}, \\
       \tbox{x}{\mvar{u}{\esbind{3}{x}, \esbind{x}{y}} + \mvar{u}{\esbind{x^2}{x},  \esbind{\mvar{u}{\esbind{2x}{x}, \esbind{1}{y}}}{y}} + x}, \\
       \tbox{x\ y}{\mvar{u}{\esbind{y}{x}, \esbind{x}{y}}})
       \end{array}
\end{array}
\]
which evaluates to the triple
\[(7, \tbox{x}{(3 + x) + (x^2 + (2x + 1)) + x}, \tbox{x\ y}{y +
x})\]
of type $\intt \times \boxt{\tbind{x}{\intt}}{\intt} \times
\boxt{\tbind{x}\intt, \tbind{y}\intt}{\intt}$. 
%

In general, local soundness for $\Box$ in CMTT is formalized as the
$\beta$-reduction
\[
\letbox{u}{\tbox{\Psi}{e}}{e'}~\reduces{\beta}~\substm{e'}{\Psi}{e}{u}
\]
where $\substm{e'}{\Psi}{e}{u}$ is a \emph{modal substitution}, in
which $e$ substitutes for $u$ in $e'$, incurring the capture of
variables from $\Psi$. We elide its definition here (it can be found
in~\cite{nanevski2008}), but do emphasize its key property.  Namely,
upon modally substituting $e$ for $u$ in $\mvar{u}{\sigma}$, the
result is \emph{not} the expression $\mvar{e}{\sigma}$, but the
expression obtained after directly applying the substitution $\sigma$
to $e$.
%
%
We retain this design in \cmtte, where we develop a similar notion of
modal substitution, except with context $\Psi$ generalized to an
algebraic theory, and explicit substitutions replaced by effect
handlers. In particular, handling in \cmtte will apply to terms with
free \emph{operations}, which will let us incorporate handling into
$\beta$-reduction over open terms in Section~\ref{sec:reductions}.

We close the review of CMTT by noting that local completeness for
$\Box$ in CMTT is formalized as the following $\eta$-expansion
\[
e : \boxt{\Psi}{A}~\reduces{\eta}
\letbox{u}{e}{\tbox{\Psi}{\mvar{u}{\texttt{id}_\Psi}}}
\]
where $\texttt{id}_{\Psi}$ is the identity substitution for $\Psi$;
that is, if $\Psi = {x_1:A_1, \ldots, x_n:A_n}$, then
$\texttt{id}_\Psi \eqdef (\esbind{x_1}{x_1}, \ldots,
\esbind{x_n}{x_n})$. \cmtte will introduce a definition of an identity
effect handler and appropriately adapt the above $\eta$-expansion.

\section{Overview of \cmtte by Examples}\label{sec:overview}
To support algebraic effects and handlers, the syntax of \cmtte
(Figure~\ref{fig:lang_syntax}) diverges from CMTT in several important
aspects which we summarize below, before proceeding to illustrate
\cmtte through concrete programming examples.

\begin{enumerate}
\item We introduce a category of (effectful) computations. These are
  terms that explicitly sequence bindings of algebraic operations and
  continuation calls, and terminate with a return of a purely-functional
  result. Our formulation is based on the judgmental presentation of
  monadic computations by~\citet{pfenning2001}. The category of
  expressions is inherited from CMTT, and retains the
  purely-functional nature.
\item The context of value variables changes into effect context.  It
  now contains effect operations
  $\chypbind{\text{op}}{A}{B}$ and continuations
  $\khypbind{k}{A}{S}{B}$, each typed with a special new judgment.
  The modal type $\boxt{\Psi}{A}$ now classifies computations that use
  algebraic operations from the effect context $\Psi$, not
  purely-functional expressions as in CMTT.
\item The value variables move to the modal context. The
  intuition is that values, being purely-functional, can be regarded
  as computations in the empty effect theory. Thus, the typing
  $\tbind{x}{A}$ can be considered as a special case of
  the modal typing $\mhypbind{x}{A}{\ectxsym}$.
\item We replace explicit substitutions with handlers, and more
  generally, handling sequences, as we shall see. Similar to explicit
  substitutions, an effect handler binds a computation to each
  algebraic operation in a context $\Psi$. In each binding, a handler
  further provides access to the argument $x$ of the algebraic
  operation, and to the current continuation $k$, and to its current state $z$.
  A handler also provides a \textit{return} binding that applies when handling
  $\ret{e}$ terms.
\end{enumerate}

\begin{figure}
  \input{formal/language}
  \Description{Syntax of the \cmtte}
  \caption{The syntax of \cmtte. In effect contexts, $op$ ranges over 
    algebraic operations, and $k$ over continuations. $\Psi$ ranges 
    over effect contexts that are algebraic theories (i.e., that 
    contain only algebraic operations and no continuations). $P$
    ranges over base types, including at least the singleton type 
    $\unitt$ with value $\tunit$.\vspace{-4mm}}\label{fig:lang_syntax}
\end{figure}



\subsection{Effect Contexts, Algebraic Theories and Operations}
Similarly to CMTT, the \texttt{box} constructor in \cmtte binds all
the free variables in the underlying term. But in \cmtte, these free
variables are algebraic operations (henceforth, simply
\emph{operations}), and the term under the \texttt{box} is a
computation, not an expression. For example, we can box a computation
that reads and returns the state using the algebraic theory
$\textit{St} \eqdef{} \chypbind{\textit{get}}{\unitt{}}{\intt{}},
\chypbind{\textit{set}}{\intt{}}{\unitt{}}$ from the introduction, as
follows.\footnote{Except, this time, we type the algebraic operations
  in \textit{St} with the actual variable judgment from \cmtte.}
\[
\tbox{\textit{St}}{
  \sappl{\sop{\textit{get}}{\tunit{}}}{x}{\ret{x}}
}
\]
Here, we apply the operation \textit{get} on the argument of type
\unitt{}, record the result in $x$ and return $x$ right away.  The
typing judgement $\div$ for operations is new, and we use it to ensure
that an operation can only be invoked in a computation, not in an
expression, as expressions are purely-functional. We illustrate this
typing discipline in detail in Section~\ref{sec:typing}.

Importantly, \texttt{box} only binds contexts that contain operations
(i.e., variables typed by $\div$), even though the new \emph{effect
  context} $\Gamma$ in \cmtte can also hold continuation variables
(typed by $\kbinding$). We refer to the operation-only context as
\emph{algebraic theory} and we use $\Psi$ to range over such contexts.

In the above example, $\sop{\textit{get}}{()}$ is a \emph{statement},
with $\textit{get}$ an \emph{operation} and $()$ its
argument. Formally, statements are used with sequential composition,
analogous to monadic bind
\[\sappl{s}{x}{c}\]
which binds the result of $s$ to $x$ and continues with $c$. 
\cmtte features other forms of statements as well (e.g., continuation
application and handling of modal variables), which we discuss later
in this section. When the result of the statement is immediately
returned, we abbreviate the computation into the statement alone. For
example, we abbreviate the above term simply as
\[\tbox{\textit{St}}{\sop{\textit{get}}{\tunit{}}}\]

Of course, we retain from CMTT that \texttt{box} prevents the use of
outside operations. For example, the following term doesn't type check
because the inside $\texttt{box}$ declares the empty algebraic theory
as current, and rules out calls to the operations from the
outside theory $\mathit{St}$.
%
\[
\tbox{\textit{St}}{\texttt{box}\,({\sop{\textit{get}}{\tunit{}}})}
\]

\subsection{Modal Context, Value Variables and Handling of Modal Variables}\label{ssec:modal_context}
In \cmtte we move the value variables into the modal context
$\Delta$, reserving the context $\Gamma$ for operations and
continuations. This implements our intention that modal types in
\cmtte track effects, not the use of value variables as in
CMTT. Moving value variables into $\Delta$ has the additional
benefit that they now survive boxing.
For example, consider the computation \textit{incr} from the
introduction, but this time we want to increment the state not by $1$,
but by a user-provided integer $n$. We achieve this by
$\lambda$-binding $n$, and then invoking it under $\texttt{box}$.
\[
\mathit{incr}_n \eqdef \lam{n}{\intt{}}{
  \tbox{\mathit{St}}{
    \sappl{\sop{\mathit{get}}{\tunit{}}}{x}{
      \sappl{\sop{\mathit{set}}{(x + n)}}{y}{
        \ret{x}
      }
    }
  }
}
\]
The resulting term $\mathit{incr}_n$ is an expression (i.e., it's
purely functional), as it's a function whose body, being boxed, is
itself an expression.  Applying $\mathit{incr}_n$ to an integer
reveals the boxed computation. 
\[
\begin{array}[t]{l}
  \texttt{let\ box}\ u = \appl{\mathit{incr_n}}{2}\ \texttt{in}\
\shandlesimp{u}{\textit{handlerSt}}{0}
\end{array}
\]
As in CMTT, $\texttt{let-box}$ above eliminates the $\texttt{box}$,
binds the computation to a modal variable, and proceeds to handle the
incrementing computation with \textit{handlerSt} from the
introduction.\footnote{While the formal syntax for handling is
  $\shandle{u}{\Theta}{h}{e}$, by convention we elide $[\Theta]$ when
  $\Theta$ is empty, as it's here. We explain the need and the use of
  $\Theta$ in Section~\ref{ssec:handling_sequences}.}  Of course, we
could have used any other handler for the theory \textit{St}, just
like we could guard a modal variable $\mhypbind{u}{A}{\Psi}$ in CMTT
with any explicit substitution $\sigma$ that defines the variables
from $\Psi$.

An important term that becomes expressible by moving variables into
$\Delta$ is the following coercion of a value $x$ into a computation
in a given theory $\Psi$, commonly known as \emph{monadic unit}.
\[
\lambda x{:}A.\,\tbox{\Psi}{\ret{x}}
\]

\subsection{Handlers}
Effect handlers are similar to explicit substitutions in that they
replace operation variables with computations that define them. But
there are important differences as well. In particular, operation
clauses of a handler can manipulate the control flow of the target
program by providing access to the current (delimited) continuation of
the operation call. This facilitates a wide variety of
effects~\cite{pretnar2013,pretnar2015}, as we shall illustrate. 

\subsubsection{Return Clause}\label{sssec:return}
A handler in \cmtte always has a \textit{return} clause, which
applies when the handled computation terminates with $\ret{e}$.
The clause specifies how the handler processes $e$, together with the ending
handler state of the computation. To illustrate, consider the following
simple handler that consists only of a \textit{return} clause that
produces a pair of the final result and state:
\[
\textit{simple} \eqdef \handbase{x}{z}{\ret{(x, z)}}
\]
This handler handles no operations; thus its theory is empty, and it
only applies to computations over the empty theory. For example, we
can run $\textit{simple}$ on $\ret{42}$ with initial state $5$ as
follows.
\[
\letbox{u}{\texttt{box}\,(\ret{42})}{\shandlesimp{u}{simple}{5}}
\]
In the \textit{return} clause of \textit{simple}, the \texttt{ret} value of $42$
is bound to $x$ and $5$ is bound to $z$ to further return $(42, 5)$.
If we used a handler with a \textit{return} clause that makes no mention of
$x$ and $z$, e.g.,
\begin{equation}
\textit{simple} \eqdef \handbase{x}{z}{\ret{7}}\label{eq:handler_ret7}
\end{equation}
then the same program returns $7$ regardless of the initial state
supplied to \texttt{handle} for $u$.




\subsubsection{Updating Handler State}\label{sssec:handler_state}
Handler state is modified by invoking continuations in operation
clauses with new state values.
To illustrate, consider an extension of \textit{simple} with an
operation $\chypbind{\textit{op}}{\unitt{}}{\intt{}}$.
\begin{align*}
\mathit{simple^{*}} \eqdef{}
(&
\handoparr{\textit{op}}{x}{k}{z}{
  \sscont{k}{1}{(z + 4)},\\
&
\handbase{x}{z}{\ret{(x, z)}}
)}
\end{align*}
The handler clause for \textit{op} shows how $\mathit{simple}^{*}$
interprets occurrences of $\sop{\textit{op}}{e}$ in a handled term: it
binds the argument $e$ to $x$, the current continuation up to the enclosing
\texttt{box} to $k$, the current handler state to $z$, and proceeds with the
clause body.
In the clause body, the application $\sscont{k}{1}{(z + 4)}$ indicates
that $k$ continues with $1$ as the return value of \textit{op}, and
$z+4$ as the new state.
If we apply $\mathit{simple^{*}}$, with initial state $5$, to
$\ret{42}$, then handling returns $(42, 5)$, as before. If we apply
$\mathit{simple^{*}}$ with initial state $5$ to the computation that uses
\textit{op} non-trivially,
\begin{align}
&\letboxu{
  \tbox{
    \chypbind{\textit{op}}{\unitt{}}{\intt{}}
  }{
    \nonumber
    \\
    &
    \quad
    \sappl{\sop{\textit{op}}{\tunit}}{y_1}{
      \sappl{\sop{\textit{op}}{\tunit}}{y_2}{
        \sappl{\sop{\textit{op}}{\tunit}}{y_3}{
          \ret{(y_1 + y_2 + y_3)}
        }
      }
    }\label{eq:opopop}
  }
  \\
  &
}{
  \shandlesimp{u}{\mathit{simple^{*}}}{5}
  \nonumber
}
\end{align}
then handling returns $(3, 17)$.
To see this, consider that each call to \textit{op} is handled by
returning $1$, and therefore all $y_i$ hold $1$. Each call to
\textit{op} also increments the initial state by $4$. Thus, at the end
of handling, when the \textit{return} clause is invoked over
$\ret{(y_1 + y_2 + y_3)}$, $x$ and $z$ variables of the \textit{return} clause
will bind $3$ and $5 + 4 + 4 + 4 = 17$, respectively, to produce
$(3, 17)$. 

We can now revisit the handler for \textit{St} from the introduction.
\[
\begin{array}{r@{\ }c@{\ }l}
  \mathit{handlerSt} & \eqdef & 
  (\!\!\!\begin{array}[t]{l}
  \handoparr{\textit{get}}{x}{k}{z}{
    \sscont{k}{z}{z}
  }, \\
  \handoparr{\textit{set}}{x}{k}{z}{
    \sscont{k}{\tunit{}}{x}
  }, \\
  \handbase{x}{z}{
    \ret{(x, z)}
  } )
  \end{array}
\end{array}
\]
Because \textit{get} is handled by invoking $\sscont{k}{z}{z}$, it's
apparent that the handler interprets $\textit{get}$ as an operation
that returns the value of the current state $z$ and continues the
execution without changing this state. On the other hand,
$\sop{\textit{set}}{x}$ is handled by invoking
$\sscont{k}{\tunit{}}{x}$; thus, the handler interprets $\textit{set}$
as returning $()$ and changing the current state to $x$, as one would
expect.

\subsubsection{Parametrized Handlers and Function Purity}
As shown above, our handlers provide access to handler state, which is
shared between handler clauses, and can be updated through
continuations. In the parlance of algebraic effects, we thus provide
\emph{deep parametrized handlers}~\cite{hillerstrom2020}.
%
%
Most of the other algebraic effect systems, however, don't use
parametrized handlers.  For example, Eff~\cite{bauer2015} would encode
$\mathit{handlerSt}$ as
\[
\begin{array}{r@{\ }c@{\ }l}
  \mathit{handlerSt} & \eqdef & 
  (\!\!\!\begin{array}[t]{l}
  \textit{get}(x, k) \shortrightarrow {
    \lambda z.~ {
    \sscont{k}{z}{z}
    }
  }, \\
  \textit{set}(x, k) \shortrightarrow {
    \lambda z.~ {
    \sscont{k}{\tunit{}}{x}
    }
  }, \\
  \textit{return}(x) \shortrightarrow {
    \lambda z.~ {
    \ret{(x, z)}
    }
  } )
  \end{array}
\end{array}
\]
where the state $z$ is lambda-bound in each clause. 
Such a definition wouldn't type check in \cmtte as a continuation call
in the clause is a computation and can't be lambda-abstracted
directly, because a function body must be pure.  The latter, however,
is a common and important design aspect of languages that encapsulate
effects using types (e.g., Haskell).
Thus, parametrized handlers arise naturally as a way to avoid
lambda-abstraction over effectful handler clauses, if one is in a
setting where effects are encapsulated. The indirect benefit is
reducing closure allocation, the original motivation for parametrized
handlers~\cite{hillerstrom2020}.

\subsubsection{Uncalled Continuations}
If an operation clause of a handler doesn't invoke the continuation
$k$, this aborts the evaluation of the handled term.  For example, let
us extend $\mathit{simple^{*}}$ with an operation
$\chypbind{\textit{stop}}{\unitt{}}{\intt{}}$ that immediately returns
a pair of $42$ and the handler state.
\begin{align*}
\mathit{simple^{\dagger}} \eqdef{}
(&
\handoparr{\textit{op}}{x}{k}{z}{
  \sscont{k}{1}{(z + 4)}
},
\\
&
\handoparr{\textit{stop}}{x}{k}{z}{
  \ret{(42, z)},\\
&
\handbase{x}{z}{\ret{(x, z)}}
})
\end{align*}
If invoked on the computation from example~(\ref{eq:opopop}) with the same
initial state $5$, $\mathit{simple^{\dagger}}$ returns $(3, 17)$ as before, since
(\ref{eq:opopop}) doesn't make calls to \textit{stop}. If the handled
computation changed the middle call from \textit{op} to \textit{stop} as in
\begin{equation}
\sappl{\sop{\textit{op}}{\tunit}}{y_1}{
  \sappl{\sop{\textit{stop}}{\tunit}}{y_2}{
    \sappl{\sop{\textit{op}}{\tunit}}{y_3}{
      \ret{(y_1 + y_2 + y_3)}
    }
  }
}\label{eq:opstopop}
\end{equation}
then handling reaches neither the last \textit{op}, nor $\ret{(y_1 +
  y_2 + y_3)}$. It effectively terminates after handling
\textit{stop}, to return $(42, 9)$, as instructed by the
\textit{stop} clause of the handler. The state $9$ is obtained after
the initial state $5$ is incremented by $4$ through the handling of
the first call to \textit{op}.

Eliding continuations makes it possible to handle a theory of
exceptions~\cite{pretnar2015}
\[
\textit{Exn} \eqdef
\chypbind{\textit{raise}}{\unitt{}}{\bot}
\]
A handler for \textit{Exn}, such as the following one
\[
\begin{array}{r@{\ }c@{\ }l}
  \mathit{handlerExn} & \eqdef & 
  (\!\!\!\begin{array}[t]{l}
  \handoparr{\textit{raise}}{x}{k}{z}{
    \ret{42}
  }, \\
  \handbase{x}{z}{
    \ret{x}
  } )
  \end{array}
\end{array}
\]
can't invoke $k$ in the clause for $\textit{raise}$, because $k$ requires the
result of \textit{raise} as an input, which must be a value of the
uninhibited type $\bot$. As such an argument can't be provided, the handler
must terminate with some value when it encounters $\textit{raise}$, thus
precisely modeling how exceptions are handled in functional programming.

\subsubsection{Non-tail Continuation Calls}\label{sssec:counting_handler}

So far, our example handlers either invoked the continuation variable
$k$ as the last computation step (tail call), or did not invoke $k$ at
all. In \cmtte, $k$ may be invoked in other ways as well.

To illustrate, consider the following handler for the theory
$\chypbind{\textit{op},\textit{stop}}{\unitt{}}{\intt{}}$ from
before. The handler invokes $k$ in non-tail calls, to count the
occurrences of \textit{op} and \textit{stop} in the handled term.
\begin{align*}
\textit{handlerCount} \eqdef (&
\handoparr{\textit{op}}{x}{k}{z}{
  \sappl{\sscont{k}{1}{z}}{y}{
    \ret{(\pi_1\,y + 1, \pi_2\,y)}
  }
},
\\
&
\handoparr{\textit{stop}}{x}{k}{z}{
  \sappl{\sscont{k}{1}{z}}{y}{
    \ret{(\pi_1\,y, \pi_2\,y + 1)}
  }
},
\\
&
\handbase{x}{z}{\ret{(0, 0)}}
)
\end{align*}

To explain \textit{handlerCount}, first consider the \textit{return}
clause. Return clauses apply when handling values; as
values are pure, they can't call operations. Thus the \textit{return} clause
of \textit{handlerCount} returns $(0, 0)$ to signal that a
value contains $0$ occurrences of both \textit{op} and \textit{stop}.

Next consider the operation clauses. Each clause executes
$\sscont{k}{1}{z}$ to invoke the current continuation $k$---which
holds the handled variant of the remaining
computation---with argument $1$ and current state $z$.\footnote{Both
  $1$ and $z$ are irrelevant for the execution of the example, but
  they make the example typecheck.}  The obtained result, bound to
$y$, is a pair containing the number of uses for \textit{op} and \textit{stop} in the the
remaining computation. Each clause then adds 1 for the
currently-handled operation to the appropriate projection of $y$.
For example, handling~(\ref{eq:opstopop}) returns $(2, 1)$, as
expected.



\subsubsection{Multiply-called Continuations}
A handler clause in \cmtte may also invoke $k$ several times; in the literature
such continuations are usually called \emph{multi-shot}~\cite{bruggeman1996}.
This feature is useful for modeling the algebraic theory of
\emph{non-determinism}
\[
\mathit{NDet} \eqdef{}
\chypbind{\mathit{choice}}{\unitt{}}{\boolt{}}
\]
in which the operation \textit{choice} provides an unspecified Boolean value, as
in the following program.
\begin{equation}
\sappl{\sop{\textit{choice}}{\tunit{}}}{y}{
  \cifel{y}{
    \ret{4}
  }{
    \ret{5}
  }
}\label{eq:nondet}
\end{equation}
A handler for \textit{NDet} chooses how to interpret the
non-determinism. For example, the handler below enumerates all the
possible options for \textit{choice} and collects the respective
outputs of the target program into a list. Here $[x]$ is a list with
single element $x$, and ${++}$ is list append.
\[
\begin{array}{r@{\ }c@{\ }l}
  \mathit{handlerNDet} & \eqdef & 
  (\!\!\!\begin{array}[t]{l}
  \handoparr{\textit{choice}}{x}{k}{z}{
    \sappl{\sscont{k}{\textit{true}}{z}}{y_1}{
      \sappl{\sscont{k}{\textit{false}}{z}}{y_2}{
        \ret{(y_1\,{++}\,y_2)}
      }
    }
  }, \\
  \handbase{x}{z}{
    \ret{[x]}
  } )
  \end{array}
\end{array}
\]
Applying the handler to~(\ref{eq:nondet}) returns the list $[4, 5]$.

\subsection{Operations in Handler Clauses}\label{sec:explosive}
Our example handlers so far used no outside operations, and thus handled into
the empty theory.
But in \cmtte handlers can handle into other theories as well.
For example, the following is an alternative handler for \textit{St} which
throws an exception whenever the state is set to $13$; thus it uses the
operation \textit{raise} and handles \textit{St} into the theory \textit{Exn}.
\[
\begin{array}{r@{\ }c@{\ }l}
  \mathit{handlerExplosiveSt} & \eqdef & 
  (\!\!\!\begin{array}[t]{l}
  \handoparr{\textit{get}}{x}{k}{z}{
    \sscont{k}{z}{z}
  }, \\
  \handoparr{\textit{set}}{x}{k}{z}{
    \cifel{x = 13}{\sop{\textit{raise}}{\tunit{}}}{\sscont{k}{\tunit{}}{x}}
  }, \\
  \handbase{x}{z}{
    \ret{(x, z)}
  } )
  \end{array}
\end{array}
\]

Using \textit{handlerExplosiveSt} to handle $\appl{\mathit{incr}_n}{1}$ results
in a computation from the theory \textit{Exn}.
The type system in Section~\ref{sec:typing} will
require us to make the dependence on \textit{Exn} explicit. For
example, the following function \textit{explode} takes input $m$ and
builds a box thunk with~\textit{Exn}. Within it, we use
\textit{handlerExplosiveSt} over $u$ with $m$ as the initial
state, and return the first projection of the result.
\[
\begin{array}[t]{r@{\ }c@{\ }l}
\textit{explode} & \eqdef & \lambda m.\, \texttt{let\ box}\ u = \appl{\mathit{incr}_n}{1}\ \texttt{in}\\
& & \qquad \!
  \texttt{box}\ \textit{Exn}.\ \!\!\!\begin{array}[t]{l}
     x \leftarrow \shandlesimp{u}{\textit{handlerExplosiveSt}}{m};\\
     \ret{(\pi_1\, x)}
     \end{array}
\end{array}
\]

If we evaluate $\appl{\mathit{explode}}{0}$, we obtain
$\tbox{\textit{Exn}}{\ret{0}}$ as follows. The unboxed body of
$\appl{\mathit{incr}_n}{1}$ is first passed to
$\textit{handlerExplosiveSt}$ for handling with initial state
$0$. This results in the pair $(0, 1)$ indicating that
$\appl{\mathit{incr}_n}{1}$ read the initial state, and returned the
value read together with the incremented state. As we shall formalize
in Section~\ref{sec:reductions}, the binding of the pair to $x$ is
immediately reduced, and the first projection is taken, to obtain
$\ret{0}$, which is finally thunked with $\texttt{box}~\textit{Exn}$.

On the other hand, if we evaluate $\appl{\mathit{explode}}{12}$, then we obtain
$\tbox{\textit{Exn}}{\textit{raise}\,{()}}$, because the handling for the unboxed
body of $\appl{\mathit{incr}_n}{1}$ with $\textit{handlerExplosiveSt}$
terminates with \textit{raise} upon trying to set the state to 13.

We can proceed with handling by \textit{handlerExn} from initial state
$\tunit{}$, as in
\[
\letbox{v}{\appl{\mathit{explode}}{0}}{\shandlesimp{v}{\textit{handlerExn}}{()}}
\]
where the \textit{return} clause of \textit{handlerExn} returns $0$.
Or, if we change the initial state to $12$,
\begin{equation}
\letbox{v}{\appl{\mathit{explode}}{12}}{\shandlesimp{v}{\textit{handlerExn}}{()}}\label{ex:exploding}
\end{equation}
then the \textit{raise} clause of \textit{handlerExn} returns $42$.

\subsubsection{Identity Handler}\label{sec:identity}
Given a theory $\Psi$, the identity handler $\mathtt{id}_{\Psi}$
handles $\Psi$ into itself, or more generally, into any theory that
includes the operations of $\Psi$. Following the definition of
identity substitution of CMTT from Section~\ref{sec:background}, the
identity handler is not an \cmtte primitive, but a meta definition,
uniformly given for each $\Psi$.
\[
\begin{array}{r@{\ }c@{\ }l}
  \mathtt{id}_\Psi & \eqdef & 
  (\!\!\!\begin{array}[t]{l}
  \handoparr{\mathit{op}_i}{x}{k}{z}{
    \sappl{\sop{\mathit{op}_i}{x}}{y}{
      \sscont{k}{y}{z}
    }
  }, \\ 
  \handbase{x}{z}{
    \ret{x}
  } )
  \end{array}
\end{array}
\]
Each operation $\textit{op}_i$ of $\Psi$ is handled by invoking it
with the same argument $x$ with which $\textit{op}_i$ is encountered,
and passing the obtained result to the awaiting continuation. The
\textit{return} clause also just further returns the encountered value. The
handler
state $z$ is merely propagated by operation clauses, and thus doesn't
influence the computation. For convenience, we will thus consider that
in $\mathtt{id}_\Psi$ the variables $z$ are of type $\unitt$, and
always invoke $\mathtt{id}_\Psi$ with initial state $\tunit{}$.
Using the identity handler, we provide the \cmtte variant of $\eta$-expansion
for modal types in analogy to that of CMTT in Section~\ref{sec:background}.
\[
e : \boxt{\Psi}{A}~\reduces{\eta}
\letbox{u}{e}{\tbox{\Psi}{\shandlesimp{u}{\texttt{id}_\Psi}{\tunit{}}}}
\]

\subsubsection{Combining Theories}
Suppose that we are given the function \textit{safeDiv} that implements division
and raises an exception if the divisor is $0$.
\[
\textit{safeDiv} \eqdef \lam{x}{\intt{}}{
  \lam{y}{\intt{}}{
    \tbox{Exn}{
      \cifel{y = 0}{
        \sop{\textit{raise}}{\tunit{}}
      }{
        \ret{(x / y)}
      }
    }
  }
}
\]

Obviously, the body of \textit{safeDiv} abstracts over the
\textit{Exn} theory. Suppose that we now wanted to use
\textit{safeDiv} in a program that also operates over state, i.e.,
over theory \textit{St}. One way to do so would be to handle the calls
to \textit{safeDiv} by a handler, such as \textit{handlerExn} for
example, that catches the exception and returns a pure value.
But identity handlers provide another way as well. We can combine the
theories of \textit{St} and \textit{Exn} on the fly into a common
theory\footnote{We assume here that the combined theories don't share
  operation names, and can thus be concatenated without clashes and
  variable shadowing, as indeed is the case for \textit{St} and
  \textit{Exn}.
}, and then simply delay resolving
exceptions to the common theory.
\[
\textit{divFromState} \eqdef{} 
   \mathtt{box}\ \textit{St}, \textit{Exn}.\!\!\begin{array}[t]{l}
      y \leftarrow {\sop{\textit{get}}{\tunit{}}};\\
      \texttt{let\ box}\ u = \appl{\appl{\textit{safeDiv}}{42}}{y}\ \texttt{in}\\
      \shandlesimp{u}{\mathtt{id}_{\textit{Exn}}}{\tunit{}}
    \end{array}
\]
For example, \textit{divFromState} reads the current state into $y$
and invokes \textit{safeDiv} on 42 and $y$, re-raising an eventual
exception by handling with $\texttt{id}_{\textit{Exn}}$. The program
\textit{divFromState} explicitly boxes over both \textit{St} and
\textit{Exn} to signal that the theories are combined into a common
context.

\subsection{Handling Sequences}\label{ssec:handling_sequences}




In \cmtte, a handler is syntactically always applied to a variable,
rather than to an expression, similarly to explicit substitutions in
CMTT. The design allows us to express handling as part of modal
variable substitution and thus, correspondingly, as part of
$\beta$-reduction. Unlike in other calculi where handling is applied
over closed terms during evaluation, our handling applies over terms
with free variables, just like $\beta$-reduction. Moreover,
$\beta$-reduction suffices to implement handling. Indeed, if we
allowed applying handlers to general computations, as in
\[
\shandlesimp{c}{h}{s}
\]
then $\shandlesimp{c}{h}{s}$ must be a redex, with reductions for it
additional to $\beta$-reduction.

However, this design causes a problem with handler composition, which
occurs in the presence of free variables.
To see the issue, consider the following example which slightly
reformulates~(\ref{ex:exploding}).
\begin{equation}
\begin{array}{l}
\texttt{let\ box}\ v = \texttt{box}\,\textit{Exn}.\ (\!\!\!\begin{array}[t]{l}
      \texttt{let\ box}\ u = \appl{\mathit{incr}_n}{1}\ \texttt{in}\\
      \qquad \!\!\!\begin{array}[t]{l}
             x \leftarrow \shandlesimp{u}{\textit{handlerExplosiveSt}}{12};\\
             \ret{(\pi_1\, x)})
             \end{array} 
      \end{array}\\
\texttt{in}\ \shandlesimp{v}{\textit{handlerExn}}{\tunit{}}
\end{array}
\label{eq:explode_composition}
\end{equation}
This term reduces by first binding $v$ to the whole computation under
the \textit{Exn} box. Next, the computation $v$ is handled by
\textit{handlerExn}, before any reduction in $v$ itself. Therefore we
must be able to handle all the subterms of $v$, including
\[\shandlesimp{u}{\textit{handlerExplosiveSt}}{12}\]
which already contains handling by \textit{handlerExplosiveSt}. To
handle this expression, we can't simply return
$\shandlesimp{(\shandlesimp{u}{\textit{handlerExplosiveSt}}{12})}{\textit{hadnlerExn}}{\tunit{}}$. That
isn't syntactically well-formed, as it applies a handler to a general
computation, rather than a variable. Intuitively, we would like to
return an expression of the form
\begin{equation}
\shandlesimp{u}{(
\textit{handlerExn}
\circ
\textit{handlerExplosiveSt})}
{12}
\end{equation}
which records that eventual substitutions of $u$ must first execute
\textit{handlerExplosiveSt} over the substituted term, followed by
\textit{handlerExn}.

One may think that composing \textit{handlerExn} and
\textit{handlerExplosiveSt} requires simply applying
\textit{handlerExn} to all the clauses of \textit{handlerExplosiveSt};
unfortunately, this isn't correct.\footnote{Incidentally, such direct
  approach works in CMTT for composing explicit
  substitutions~\cite{nanevski2008}, but handlers are much more
  involved than explicit substitutions.}  When applying
\textit{handlerExplosiveSt} and \textit{handlerExn} in succession to
some computation $c$, the handling by \textit{handlerExn} requires
instantiating its continuation variables with computations obtained
after \textit{handlerExplosiveSt} is applied to $c$. In the above
case, $c$ is unknown, with variable $u$ serving as a
placeholder. Until $u$ is substituted, we don't have access to
continuations necessary to execute $\textit{handlerExn}$, and thus
can't eagerly apply it to the clauses of
$\textit{handlerExplosiveSt}$.

To encode the composition of handlers, we thus introduce the
\emph{handling sequence} $\Theta$. The term
\[
\shandle{u}{\Theta}{h}{e}
\]
stands for applications of handlers from the list $\Theta$,
in sequence from left to right, on a term bound to $u$, followed by
the handler $h$ running on the result, using initial state $e$.
However, $\Theta$ can't merely be a list of handlers, as each handler
must also be associated with the initial state, and the continuation
that's appropriate for it. For example, the immediate
$\beta$-reduction of example~(\ref{eq:explode_composition}) will be
\begin{align*}
&
\letbox{u}{\appl{\mathit{incr}_n}{1}}{
  \\
  &
\qquad  \shandle{u}{\hseqclause{handlerExplosiveSt}{12}{x}{\ret{(\pi_1~x)}}}{
    \textit{handlerExn}
  }{\tunit{}}
}
\end{align*}
recording that $\ret{(\pi_1~x)}$ is the immediate continuation of
$\shandlesimp{u}{\textit{handlerExplosiveSt}}{12}$
in~(\ref{eq:explode_composition}).

Finally, the need for handler sequences arises only when expressing
the intermediate results of program reduction, and we don't expect
that a user of our system will ever write them by hand. The simplified
syntax without $\Theta$ suffices for source code examples.

\section{Typing}\label{sec:typing}
\cmtte uses the following variable judgements:
\begin{align*}
&\tbind{x}{A} \in \Delta && \text{values of type $A$}
\\
&\mhypbind{u}{A}{\Psi} \in \Delta && \text{computations of type $A$ in an algebraic
theory $\Psi$}
\\
&\chypbind{op}{A}{B} \in \Gamma && \text{operations with input type $A$ and output
  type $B$}
\\
&\khypbind{k}{A}{S}{B} \in \Gamma && \text{continuations with input type $A$, state type 
  $S$, and output type $B$}
\end{align*}

We also have a separate judgement for each of the syntactic categories:
\begin{align*}
&\pjdgmt{\Delta}{e}{A} && \text{$e$ is an expression of type $A$}
\\
&\jdgmt{\Delta}{\Gamma}{\cbind{c}{A}} && \text{$c$ is a computation of type $A$}
\\
&\jdgmt{\Delta}{\Gamma}{\sbind{s}{A}} && \text{$s$ is a statement of type $A$}
\\
&\jdgmt{\Delta}{\Gamma}{\hbind{h}{A}{\Psi}{S}{B}} && \text{$h$ is a handler for
  computations of type $A$ in an algebraic theory $\Psi$ that}
  \\
  & && \text{uses state parameter of type $S$ and handles into computations of
    type $B$}
\\
&\jdgmt{\Delta}{\Psi}{\hseqbind{\Theta}{A}{\Psi'}{B}} && \text{$\Theta$ is a
  handling sequence for computations of type $A$ in an algebraic}
  \\
  & && \text{theory $\Psi'$ that produces computations of type $B$ in a theory $\Psi$}
\end{align*}

\begin{figure}
  \input{formal/typing_lang}
  \vspace{-2mm}
  \Description{Typing rules of \cmtte.}
  \caption{Typing rules of \cmtte.}\label{fig:typing}
  \vspace{-4mm}
\end{figure}

We show the typing rules of \cmtte on Figure~\ref{fig:typing}.
Let us review their key parts:
\begin{enumerate}
\item The variable typing rules \textsc{var}, \textsc{op},
  \textsc{cont} and \textsc{mvar} are grouped first.  The rule
  \textsc{mvar} guards a modal variable $u$ with a handling sequence
  $\Theta$, handler $h$, and initial state $e$. It ensures that the
  context $\Psi$ of $u$ matches the range theory of $\Theta$, and that
  the context $\Psi'$ of $\Theta$ matches the range theory of
  $h$. Thus, the handling of $u$ proceeds by $\Theta$, then $h$.
  The judgment for handling sequences formalizes this intermediary role of
  $\Theta$ by using theory $\Psi$ that contains only operations for its effect
  context, as opposed to the more general effect context $\Gamma$ that also
  includes continuations.

\item Unlike in CMTT, the judgment for expressions elides the context
  $\Gamma$, because expressions can't have effects. On the other hand,
  the rules dealing with function and modal types are almost
  unchanged.  The only distinctions are that $\Box{I}$ now boxes
  computations instead of expressions (introducing an effect context
  $\Psi$, not replacing it), and that we also have a rule
  $\Box{E}$-\textsc{comp} that eliminates modal types into
  computations.  The latter is a standard feature of
  \texttt{let}-forms in calculi with multiple
  judgments~\cite{pfenning2001}.

\item The \textsc{ret} rule coerces pure expressions into
  computations, while the \textsc{bind} rule sequentially composes a
  statement $s$ with a computation $c$.
  Notice that the term $\ret{e}$
  is a computation, not a statement. Thus, strictly speaking, the
  syntax $x \leftarrow \ret{e}; c$ is not valid in \cmtte. This is not
  a restriction in programming, as the intended behavior can be
  introduced as a syntactic sugar via boxing.\footnote{One solution is 
  $x \leftarrow \ret{e}; c \eqdef 
     \letbox{u}{\texttt{box}\,{(\ret{e})}}{x \leftarrow
        {\shandlesimp{u}{\texttt{id}_\ectxsym}{\tunit{}}}; c}$, and there are
     others.}

\item The rules \textsc{reth} and \textsc{oph} deal with building
  handlers.  \textsc{reth} is the base case, giving a handler for the
  empty theory $\ectxsym$ which thus contains only the \textit{return}
  clause. \textsc{oph} is the inductive case, adding a clause for a
  new operation \textit{op} to a handler $h$, thus extending $h$'s
  theory $\Psi$ with $\chypbind{op}{A}{B}$. We add new clauses to the
  right, but retain the syntax from Section~\ref{sec:overview},
  where the \textit{return} clause is presented as the last in a handler.

\item The rules \textsc{emph} and \textsc{seqh} deal with building
  handling sequences. \textsc{emph} gives an empty sequence, and
  \textsc{seqh} adds a handler clause to $\Theta$, ensuring proper
  sequencing, i.e., that the theory $\Psi'$ of the added handler $h$
  matches the effect context of $\Theta$.
\end{enumerate}

\subsection{Examples}
We illustrate the typing rules of \cmtte by showing a few example
typing judgments and derivations. We assume the theories of state and
exceptions from Section~\ref{sec:overview}.
\begin{align*}
\textit{St} & \eqdef \chypbind{\textit{get}}{\unitt}{\intt{}}, \chypbind{\textit{set}}{\intt{}}{\unitt{}}\\
\textit{Exn} & \eqdef \chypbind{\textit{raise}}{\unitt{}}{\bot}
\end{align*}

\begin{enumerate}
\item $\vdash \tbox{\textit{St}}{\sop{\textit{get}}{{\tunit{}}}} : \boxt{\textit{St}}{\intt{}}$

\item $\vdash 
  \tbind{
    \lambda {n}.\,{
      \tbox{\mathit{St}}{(
        \sappl{\sop{\mathit{get}}{\tunit{}}}{x}{
          \sappl{\sop{\mathit{set}}{(x + n)}}{y}{
            \ret{x}
          }
        })
      }
    }
  } {\funt{\intt{}}{\boxt{\textit{St}}{\intt{}}}}$

\item $\not\vdash \tbox{\textit{St}}{\ret{(\texttt{box}\,({\sop{\textit{get}}{\tunit{}}}))}}$

\item $\vdash \textit{handlerSt} \div \nhandlertype{\intt{}}{\textit{St}}{\intt{}}{\intt{} \times \intt{}}$

\item $\vdash \hbind{\textit{handlerExn}}{\intt{}}{\textit{Exn}}{\unitt{}}{\intt{}}$

\item $\textit{Exn} \vdash
  \hbind{
    \textit{handlerExplosiveSt}
  }{\intt{}}{\textit{St}}{\intt{}}{\intt{} \times \intt{}}$

\item $\textit{Exn} \vdash
  \hseqbind{
    \hseqclause{handlerExplosiveSt}{12}{x}{\ret{(\pi_1~x)}}
  }{\intt{}}{\textit{St}}{\intt{}}$

\item $\mhypbind{u}{\intt{}}{\textit{St}}; \ectxsym \vdash
    \shandle{u}{\hseqclause{handlerExplosiveSt}{12}{x}{\ret{(\pi_1~x)}}}
               {\textit{handlerExn}}{\tunit{}}
     \div \intt{}$
\item If $\Psi \subseteq \Psi'$, then $\Psi' \vdash \texttt{id}_\Psi \div \nhandlertype{A}{\Psi}{\unitt{}}{A}$

\item If $\Psi \subseteq \Psi'$, then $\vdash \lambda e.\,\letbox{u}{e}{\tbox{\Psi'}{\shandlesimp{u}{\texttt{id}_\Psi}{\tunit{}}}} : \boxt{\Psi}{A} \rightarrow\boxt{\Psi'}{A}$

\item $\vdash \lambda x.\,{\tbox{\Psi}{\ret{x}}} : \funt{A}{\boxt{\Psi}{A}}$

\item $\vdash \lambda f.\,\lambda x.\,\!\!\!\begin{array}[t]{l}
       \texttt{let\ box}\ u = f\ \\
       \texttt{\hphantom{let}\ box}\ v = x\\
       \texttt{in}\ \texttt{box}\ \Psi.\ \!\!\!\begin{array}[t]{l}
           a \leftarrow {\shandlesimp{u}{\texttt{id}_\Psi}{\tunit{}}};\\
           b \leftarrow {\shandlesimp{v}{\texttt{id}_\Psi}{\tunit{}}};\\ 
           \ret{(\appl{a}{b})} : \boxt{\Psi}{(A \rightarrow B)} \rightarrow \boxt{\Psi}{A} \rightarrow \boxt{\Psi}{B}
         \end{array}
      \end{array}$


\item 
$\vdash \lambda x.\, \!\!\!\begin{array}[t]{l}
      \texttt{let\ box}\ u = x\ \texttt{in}\\ 
      \texttt{box}\ \Psi.\, \!\!\!\begin{array}[t]{l}
          a \leftarrow \shandlesimp{u}{\texttt{id}_{\Psi}}{\tunit{}};\\
          \texttt{let\ box}\ v = a\ \texttt{in}\
            \shandlesimp{v}{\texttt{id}_{\Psi}}{\tunit{}}  : \boxt{\Psi}{\boxt{\Psi}{A}} \rightarrow \boxt{\Psi}A 
       \end{array}
    \end{array}$

\end{enumerate}

\subsubsection{Typing Computations}
Derivation (1) shows that we can invoke an operation declared in the
current theory. The derivation expands the syntactic sugar whereby
the computation consisting of a statement $s$ actually abbreviates $x
\leftarrow s; \ret{x}$.
{\small
\begin{mathpar}
\inferrule*[right=$\Box I$]
 {
  \inferrule*[right=bind]
   {
     \inferrule*[right=op]
     {
       \textit{get} \div \unitt\Rightarrow \intt{} \in \textit{St}\\
       \inferrule*{\hbox{}}{\textit{St} \vdash \tunit{} : \unitt}
     }
     {
       \textit{St} \vdash \sop{\textit{get}}{\tunit{}} \div_s \intt{}
     }
     \\
     \inferrule*[right=ret]
     {
       \inferrule*[right=var]
       {\hbox{}}
       {x : \intt{} \vdash x : \intt{}}
     }
     {
       x : \intt{}; \textit{St} \vdash \ret{x} \div \intt{}
     }
  }
  {
   \textit{St} \vdash x \leftarrow \sop{\textit{get}}{\tunit{}}; \ret{x} \div \intt{}
  }
}
{
\vdash \tbox{\textit{St}}{\sop{\textit{get}}{\tunit{}}} : \boxt{\textit{St}}{\intt{}}
}
\end{mathpar}}

On the other hand, we can try to derive (3) as follows
{\small
\begin{mathpar}
\inferrule*[right=$\Box I$]{
  \inferrule*[right=ret]{
    \inferrule*[right=$\Box I$]{
      \not\vdash \sop{\textit{get}}{\tunit{}} \div \intt
    }{
      \textit{St} \not\vdash {\texttt{box}\,({\sop{\textit{get}}{\tunit{}}})} : {\boxt{\ectxsym}{\intt}}
    }
  }{
    \textit{St} \not\vdash \ret{(\texttt{box}\,({\sop{\textit{get}}{\tunit{}}}))} \div {\boxt{\ectxsym}{\intt}}
  }
}{
\not\vdash \tbox{\textit{St}}{\ret{(\texttt{box}\,({\sop{\textit{get}}{\tunit{}}}))}} : \boxt{\textit{St}}{\boxt{\ectxsym}{\intt}}
}
\end{mathpar}}
but the derivation tree cannot be completed because \textit{get} doesn't appear in the effect context available at the top.

\subsubsection{Typing Handlers}
Next consider the derivations (4-6).  In the handler typing judgment,
the declared components $\nhandlertype{A}{\Psi}{S}{B}$ should be
understood as an input that guides the rest of derivation, and
determines the types of the variables bound in operation clauses.
For example, let us show why (4) holds.
Recall the definition of \textit{handlerSt}.
\[
 \vdash (\!\!\!\begin{array}[t]{l}
  \handoparr{\textit{get}}{x}{k}{z}{
    \sscont{k}{z}{z}
  }, \\
  \handoparr{\textit{set}}{x}{k}{z}{
    \sscont{k}{\tunit{}}{x}
  }, \\
  \handbase{x}{z}{
    \ret{(x, z)}
  } ) \div \raisebox{0pt}[0pt][0pt]{$\nhandlertype{\intt{}}{\textit{St}}{\intt{}}{\intt{} \times \intt{}}$}
  \end{array}
\]
in the typing of the \textit{return} clause we have to assume $x:\intt$,
as $\intt{}$ is the declared input type of the handler, and $z :
\intt{}$, as $\intt{}$ is the declared type of handler state. The
clause returns a pair of type $\intt{}\times\intt{}$, which matches
the result type of the handler, by rule \textsc{reth}.

Similarly, in the typing of the \textit{get} clause, we must assume
\begin{itemize}
\item $x : \unitt$, as $\unitt$ is the input type of \textit{get} in \textit{St}
\item $\khypbind{k}{\intt{}}{\intt{}}{\intt{}\times\intt{}}$. The
  first $\intt{}$ is the result type of \textit{get} in
  \textit{St}. The second $\intt{}$ is the declared type of handler
  state. The result type of $k$ matches the declared result type of
  the handler.
\item $z : \intt$ because $\intt$ is the declared type of handler
  state.
\end{itemize}
Under these assumptions, $\sscont{k}{z}{z}$ has type
$\intt{}\times\intt$ by rule \textsc{cont}, matching the result type
of the handler by rule \textsc{oph}.

\subsubsection{Typing Handling Sequences}

The starting point for derivation (7) is the judgment (6) of
\textit{handlerExplosiveSt}.  
In sequence (7), we provide $12$ as the initial state, thus
matching the type $\intt{}$ of handler state of
\textit{handlerExplosiveSt}. We also provide continuation
$x.\,\ret{(\pi_1~x)}$, which modifies the output type
$\intt{}\times\intt{}$ of \textit{handlerExplosiveSt} into the output
type $\intt{}$ of the sequence (7).

This modification of the output type of (7) into $\intt{}$ is
essential to typecheck (8). By the typing rule \textsc{oph}, the
output type of a sequence must match the input type of the extending
handler, and in our case, $\intt{}$ is \textit{handlerExn} input type
by derivation (5).

\subsubsection{Generic Modal Typing Derivations}
Derivation (9) is easy to establish by induction on $\Psi$.
Then derivation (10) shows that we can coerce a computation from
theory $\Psi$ into a larger theory $\Psi'$. In the special case when
$\Psi' = \Psi$, the derivation establishes the validity of
$\eta$-expansion for modal types, i.e., local soundness. We show
derivation (10) in some detail below, assuming derivation (9) and
weakening in both contexts. We also elide the first steps
involving $\lambda$, as they are standard.
{\small
\begin{mathpar}
\inferrule*[right=$\Box E$]{
  \inferrule*{}{
    \pjdgmt{\Delta}{e}{\boxt{\Psi}{A}}
  }
  \\
  \inferrule*[right=$\Box I$]{
    \inferrule*[right=mvar]{
      \inferrule*{}{
        \jdgmt{\Delta, \mhypbind{u}{A}{\Psi}}{\Psi'}{
          \hbind{
            \texttt{id}_\Psi
          }{A}{\Psi}{\unitt{}}{A}
        }
      }
      \\
      \inferrule*{ }{
        \pjdgmt{\Delta, \mhypbind{u}{A}{\Psi}}{
          \tunit{}
        }{\unitt{}}
      }
    }{
      \jdgmt{\Delta, \mhypbind{u}{A}{\Psi}}{\Psi'}{
        \cbind{
          \shandlesimp{u}{\texttt{id}_\Psi}{\tunit{}}
        }{A}
      }
    }
  }{
    \pjdgmt{\Delta, \mhypbind{u}{A}{\Psi}}{
      \tbox{\Psi'}{\shandlesimp{u}{\texttt{id}_\Psi}{\tunit{}}}
    }{\boxt{\Psi'}{A}}
  }
}{
  \pjdgmt{\Delta}{
    \letbox{u}{e}{\tbox{\Psi'}{\shandlesimp{u}{\texttt{id}_\Psi}{\tunit{}}}}
  }{\boxt{\Psi'}{A}}
}
\end{mathpar}
}

Derivations (11-13) show that the modal type $\boxt{\Psi}{A}$
satisfies the usual functions required of monads in
programming. Derivation (11) shows that a value of type $A$ can be
coerced into a computation of type $\boxt{\Psi}{A}$.
Derivation (12) shows that a function application is stable under
algebraic theory; that is, if a function and the argument share the
algebraic theory, so will the result. This is property $K$ of modal
logic.
Finally, derivation (13) shows that iterated modalities can be
coalesced. This is property $C4$ of modal logic.


\section{Reductions and subsidiary operations}\label{sec:reductions}
As in CMTT, $\beta$-reduction of modal types is the key notion in
\cmtte. This, together with $\beta$-reduction on function types, is the
basis for the \cmtte operational semantics (Figure~\ref{fig:op_sem}).
\begin{align*}
  \letbox{u}{\tbox{\Psi}{c}}{c'} ~ &\reduces{\beta} ~ \substm{c'}{\Psi}{c}{u}
  \\
  \letbox{u}{\tbox{\Psi}{c}}{e} ~ &\reduces{\beta} ~ \substm{e}{\Psi}{c}{u}
\end{align*}
The reduction depends on \emph{modal substitutions}
$\substm{c'}{\Psi}{c}{u}$ (resp.~$\substm{e}{\Psi}{c}{u}$), which
substitute computation $c$ for a modal variable $u$ in a computation
$c'$ (resp.~expression $e$), with $\Psi$ indicating the operators
bound in $c$.
We define modal substitution in the
Appendix~\ref{sec:sub_ctx_var}, and in this section
illustrate it on an example, along with a number of subsidiary
operations (Figure~\ref{fig:auxdef}) that modal substitution invokes.
These operations are:
\begin{itemize}
\item \emph{Monadic substitution} $\substc{c'}{x}{c}$ sequentially
  composes $c'$ before $c$, using variable $x$ as the connection. The
  definition follows closely a similar notion
  from~\citet{pfenning2001}. 
\item \emph{Continuation substitution} $\substkxy{c}{c'}{k}$ replaces
  the continuation variable $k$ in $c$ with a computation $c'$ that binds
  variables $x$ and $y$.
\item \emph{Handling} $\hndl{c}{h}{e}$ applies the handler
  $h$ over computation $c$, using $e$ as the current state.
\item \emph{Handling sequencing} $\hndlseq{c}{\Theta}$ applies the
  handling sequence $\Theta$ to the computation $c$.
\item \emph{Expression substitutions} $\subst{c}{e}{x}$ and
  $\subst{e'}{e}{x}$ replace an expression $e$ for variable $x$ into
  computation $c$ and expression $e'$ respectively. The last two are
  standard notions, so we use them without an explicit definition.
\end{itemize}

\begin{figure}
\begin{subfigure}{\textwidth}
  \input{formal/compsub}
  \Description{Monadic substitution.}
  \caption{Monadic substitution.}\label{fig:substc_def}
\end{subfigure}

\begin{subfigure}{\textwidth}
  \input{formal/contsub}
  \Description{Continuation substitution.}
  \caption{Continuation substitution.}\label{fig:substk_def}
\end{subfigure}

\begin{subfigure}{\textwidth}
  \input{formal/handling}
  \Description{Handling.}
  \caption{Handling.}\label{fig:hndl_def}
\end{subfigure}

\begin{subfigure}{\textwidth}
  \input{formal/handling_seq}
  \Description{Handling sequencing.}
  \caption{Handling sequencing.}\label{fig:hndl_seq_def}
\end{subfigure}

\Description{Definition of auxiliary operations.}
\caption{Definition of operations auxiliary to modal substitution. The
  definitions are recursive but well-founded (hence, also terminating)
  as each operation either makes recursive calls involving strictly
  smaller subterms, or invokes an operation that has been defined
  ahead of it in the figure. As conventional, we assume that all the
  bound variables are $\alpha$-renamed to avoid capture by the
  operations.}\label{fig:auxdef}
\vspace{-4mm}
\end{figure}

\subsection{Example}
We illustrate all these notions by tracing the reduction steps of the
following example. We present the steps on
Figure~\ref{fig:incr_reduction} and we also comment on them below.
\[
\letbox{u}{\appl{\mathit{incr_n}}{1}}{
  \shandlesimp{u}{\textit{handlerSt}}{0}}\]

\begin{figure}
  \input{formal/incr_reduction}
  \Description{Process of reduction for $\letbox{u}{\appl{\mathit{incr_n}}{1}}{
      \shandlesimp{u}{\textit{handlerSt}}{0}
    }$}
  \caption{Process of reduction for $\letbox{u}{\appl{\mathit{incr_n}}{1}}{
      \shandlesimp{u}{\textit{handlerSt}}{0}
    }$.}\label{fig:incr_reduction}
\end{figure}

In the first step~(\ref{eq:incr_reduction_1}), we unfold the
definition of $\appl{\mathit{incr}_n}{1}$ and apply the
$\beta$-reduction to reveal a modal substitution for the variable $u$
into $(\sappl{ \shandlesimp{u}{\textit{handlerSt}}{0}
}{x'}{\ret{x'}})$. The latter represents the scope of the initial
\texttt{let-box} term, expanding the syntactic sugar for statements.

In the next step~(\ref{eq:incr_reduction_4}), the modal substitution applies
handling with $\textit{handlerSt}$ with given state over the application of the
empty handling sequence on $u$. This proceeds by the following characteristic
definition case for modal substitution.
\[
  \substm{(\happl{u}{\Theta}{h}{e}{x'}{c'})}{\Psi}{c}{u} =
  \substc{
    \hndl{
      \hndlseq{
        c
      }{
        \substm{\Theta}{\Psi}{c}{u}
      }
    }{
      \substm{h}{\Psi}{c}{u}
    }{
      \substm{e}{\Psi}{c}{u}
    }
  }{x'}{
    (\substm{c'}{\Psi}{c}{u})
  }
\]
In other words, we first substitute $c$ for $u$ in all the
subterms. As customary, we ensure that $c$ doesn't contain free
occurrences of variable $x'$ (thus incurring capture), by
$\alpha$-renaming $x'$ if necessary.  In the expression
(\ref{eq:incr_reduction_1}), $u$ doesn't occur in the subterms, thus
the modal substitutions are vacuous, and we directly obtain
(\ref{eq:incr_reduction_4}).  


In the next step~(\ref{eq:incr_reduction_5}), the empty handling
sequencing immediately resolves according to the definition in
Figure~\ref{fig:hndl_seq_def}, and proceeds to handling with
\textit{handlerSt}.

Handling by \textit{handlerSt} in~(\ref{eq:incr_reduction_5}) results
in the computation $\ret{(0, 1)}$, as we shall illustrate shortly. In
the next step~(\ref{eq:incr_reduction_6}), this computation is
monadically substituted for $x'$ in $\ret{x'}$ to get the final
reduction result of $\ret{(0, 1)}$. This follows by the definition of
the monadic substitution in Figure~\ref{fig:substc_def}.
Monadic substitution $\substc{c'}{x}{c}$ sequentially precomposes $c'$
before $c$. In the special case when $c'$ is $\ret{e}$, it simply
substitutes $e$ for $x$ in $c$.

In step~(\ref{eq:incr_get_set_start}) we return to the
step~(\ref{eq:incr_reduction_5}) of applying the handler
\textit{handlerSt}. We proceed according to the definition of handling
from Figure~\ref{fig:hndl_def}. In particular, we obtain the
expression~(\ref{eq:incr_get_set_unfold}) by substituting the
following into the $\text{get}$ clause of \textit{handlerSt}:
\begin{itemize}
\item The supplied initial state $0$ is substituted for the state
  variable $z$.
\item The input argument $\tunit$ of \textit{get} is substituted for
  the variable $x$ in the handler clause for $\textit{get}$.
\item The \emph{handled} remainder of the initial computation is
  substituted for the continuation variable $k$, according to
  continuation substitution from Figure~\ref{fig:substk_def}.
\end{itemize}

The last component above, namely the handling of the remainder of the
computation is depicted in
lines~(\ref{eq:incr_set_start}-\ref{eq:incr_ret}) in
Figure~\ref{fig:incr_reduction}. We do not comment on it in any detail
as it is similar to the rest of the example, but we just note that it
obtains the result $\ret{(y, y+1)}$ in the scope of variables $y$ and
$z'$. Thus, it illustrates that all of our subsidiary operations work
over expressions with free variables.

We next execute the substitutions over $z$ and $x$ to derive
step~(\ref{eq:incr_get_k_red_1}). Next, to
obtain~(\ref{eq:incr_get_k_red_2}), we proceed with continuation
substitution whereby the arguments $0$ and $0$ of $k$ in the main
expression are replaced for $y$ and $z'$ respectively in the
expression being substituted for $k$. The remainder of the reduction
is then easy to complete.
 
\begin{figure}
  \input{formal/semantics_exp}
  \input{formal/semantics_comp}
  \Description{Operational semantics.}
  \caption{Call-by-value operational semantics. The semantics
    considers closed terms (no free variables) in the empty effect
    theory (no unhandled operations). Consequently, it only evaluates
    computations of the form $\ret{e}$ and $\letbox{u}{e}{c}$, as
    other cases involve free variables or operations. The rules
    include only those for $\beta$-reduction, and for enforcing
    call-by-value/left-to-right evaluation order.  The values of
    expression kind (ranged over by $v$) are $\lam{e}{A}{e'}$ and
    $\tbox{\Psi}{c}$. The value of computation kind is
    $\ret{v}$.}\label{fig:op_sem}
\end{figure}


\section{Soundness}\label{sec:soundness}

In this section we present the basic theoretical properties of \cmtte,
leading to the proofs of local soundness (i.e., that $\beta$-reduction
is well typed), local completeness (i.e., that $\eta$-expansion is
well typed), as well as the type soundness theorems (i.e., progress
and preservation) for the operational semantics from
Figure~\ref{fig:op_sem}.
The bulk of the development involves lemmas about the subsidiary
operations from Section~\ref{sec:reductions}. These lemmas all have
the form reminiscent of substitution principles, as they describe how
a variable or an effect operation can be replaced in a term.
Appendix~\ref{sec:theorems}
contains a more detailed presentation of
the proofs.

\subsection{Structural Properties}

To avoid writing out all $20$ combinations of our variable and term
judgements, we state weakening with generic binding forms.  We use
$x_\Delta: J_{\mathit{var}}$ to stand over variable bindings in
$\Delta$, namely $\tbind{x}{A}$ and $\mhypbind{u}{A}{\Psi}$.
$x_\Gamma : J_{\textit{var}}$ ranges over $\Gamma$ variable bindings
$\chypbind{\textit{op}}{A}{B}$ and $\khypbind{k}{A}{B}{C}$.  The
generic term judgement $\jdgmt{\Delta}{\Gamma}{t : J_{\mathit{term}}}$
ranges over the judgements for
$\tbind{e}{A}$ (in this case we implicitly ignore $\Gamma$), $\cbind{c}{A}$,
$\sbind{s}{A}$, $\hbind{h}{A}{\Psi}{S}{B}$, and $\hseqbind{\Theta}{A}{\Psi}{B}$.
We also use the generic term judgement to state the principles governing the
subsidiary operations from Section~\ref{sec:reductions}.

\begin{lemma}[Weakening]\label{lem:weakening}
  If $\jdgmt{\Delta}{\Gamma}{t : J_{\textit{term}}}$, then
  \begin{enumerate}
  \item $\jdgmt{\Delta, x_\Delta: J_{\mathit{var}}}{\Gamma}{t : J_{\textit{term}}}$.
  \item $\jdgmt{\Delta}{\Gamma, x_\Gamma:J_{\mathit{var}}}{t : J_{\textit{term}}}$.
  \end{enumerate}
\end{lemma}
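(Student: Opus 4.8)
The plan is to proceed by a single simultaneous induction on the derivation of $\jdgmt{\Delta}{\Gamma}{t : J_{\textit{term}}}$, treating all five term judgements (for expressions, computations, statements, handlers, and handling sequences) at once, since they are mutually defined: the \textsc{mvar} rule for a statement refers back to the handler and handling-sequence judgements, \textsc{bind} refers to the statement and computation judgements, and so on, so the induction hypothesis must be available for every judgement form. For each typing rule of Figure~\ref{fig:typing} I would assume both parts of the lemma for the immediate subderivations and re-derive the conclusion with the enlarged context. Parts (1) and (2) can be carried in parallel in a single pass.

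The base cases are the four lookup rules \textsc{var}, \textsc{op}, \textsc{cont}, and \textsc{mvar}. These reduce to the observation that context membership is monotone under extension: if $\tbind{x}{A} \in \Delta$ then $\tbind{x}{A} \in \Delta, x_\Delta{:}J_{\mathit{var}}$, and similarly for $\Gamma$; for \textsc{mvar} I would additionally invoke the induction hypothesis on the guarding handling-sequence, handler, and state subderivations. The remaining rules are structural, so I would apply the induction hypothesis to each premise and reapply the same rule. For the rules that bind a local variable in a premise---\textsc{bind}, \textsc{reth}, \textsc{oph}, \textsc{seqh}, ${\rightarrow}I$, $\Box E$, and $\Box E$-\textsc{comp}---a premise extends $\Delta$ (or $\Gamma$) on the inside, whereas the lemma as stated appends the new binding at the far end of the context. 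The cleanest route is therefore to establish the marginally stronger statement that the new binding may be inserted at an \emph{arbitrary} position (equivalently, to assume exchange), together with the standard convention that bound variables are $\alpha$-renamed fresh; this lets every binder case appeal directly to the induction hypothesis without reshuffling the context.

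Two families of rules deserve special attention because they switch between the different shapes of right-hand context. In $\Box I$ the premise is typed in the local theory $\Psi$ rather than in $\Gamma$, and the conclusion is an expression judgement, which ignores $\Gamma$ altogether; hence part (2) is immediate for every expression judgement (the judgement simply does not mention the weakened context), while part (1) goes through because $\Delta$ is threaded unchanged from conclusion to premise. The handling-sequence rules \textsc{emph} and \textsc{seqh} use the pure-theory context $\Psi$ in place of $\Gamma$: here I would note that extending $\Psi$ (whether with an operation or a continuation) preserves the side condition $\Psi' \subseteq \Psi$ of \textsc{emph}, and that the handler and computation premises of \textsc{seqh} are covered directly by the induction hypothesis in the larger context.

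I expect the main obstacle to be organisational rather than mathematical: keeping the simultaneous induction coherent across the five judgement forms while respecting the three distinct roles played by the right-hand context---the general effect context $\Gamma$ in computations, statements, and handlers; the restricted algebraic theory $\Psi$ in handling sequences and under $\Box I$; and its complete absence in expression judgements. Once the statement is strengthened to permit insertion at an arbitrary context position, every individual case is routine, requiring nothing beyond monotonicity of membership and a direct appeal to the induction hypothesis.
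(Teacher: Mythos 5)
Your proposal matches the paper's proof, which is exactly the one-line argument you elaborate: mutual induction over the typing derivations of all the term judgements, with the binder cases discharged by the standing convention on distinctness (freshness) of named variables rather than an explicit exchange detour. The additional care you take about the three roles of the right-hand context and the arbitrary-position strengthening is a faithful unpacking of the same routine argument, not a different route.
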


\begin{lemma}[Expression substitution principle]\label{lem:subst_expr}
  If $\pjdgmt{\Delta}{e}{A}$
  and $\jdgmt{\Delta, \tbind{x}{A}}{\Gamma}{t : J_{\mathit{term}}}$,
  then $\jdgmt{\Delta}{\Gamma}{\subst{t}{e}{x} : J_{\mathit{term}}}$.
\end{lemma}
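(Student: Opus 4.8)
The plan is to proceed by induction on the derivation of the second hypothesis $\jdgmt{\Delta, \tbind{x}{A}}{\Gamma}{t : J_{\mathit{term}}}$. Since the generic judgement $J_{\mathit{term}}$ ranges over all five syntactic categories (expressions, computations, statements, handlers, and handling sequences), this is a single simultaneous induction that runs uniformly over every typing rule of Figure~\ref{fig:typing}, with the defining clause of $\subst{t}{e}{x}$ on each term former dictating how to reassemble the conclusion. The first hypothesis $\pjdgmt{\Delta}{e}{A}$ is consumed only in the variable case; in every other case it is merely threaded (after weakening) into the premises so that the inductive hypotheses apply.

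The only interesting base case is \textsc{var}. If $t = x$, then $\subst{t}{e}{x} = e$ and the type $A$ is forced, so the conclusion is exactly the first hypothesis $\pjdgmt{\Delta}{e}{A}$. If $t = y$ with $y \neq x$, then $\subst{t}{e}{x} = y$, and from $\tbind{y}{B} \in \Delta, \tbind{x}{A}$ we get $\tbind{y}{B} \in \Delta$, so we re-apply \textsc{var}. The rules that do not extend $\Delta$ in their premises—\textsc{op}, \textsc{cont}, \textsc{mvar}, \textsc{$\rightarrow$E}, \textsc{ret}, $\Box I$, $\Box E$, \textsc{emph}, \textsc{seqh}—are immediate: $\subst{\cdot}{e}{x}$ distributes over the subterms, and we reassemble the rule instance from the inductive hypotheses on its (strictly smaller) premises. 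In particular, for $\Box I$ the effect context switches from $\Gamma$ to $\Psi$ while $\Delta, \tbind{x}{A}$ is unchanged, so the inductive hypothesis on the boxed computation applies verbatim; and for \textsc{mvar} the substitution acts componentwise on $\Theta$, $h$, and the initial state $e'$, each covered by the inductive hypothesis at its own category.

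The cases that extend the modal context are where care is needed: \textsc{$\rightarrow$I} (binding $y$), $\Box E$-\textsc{comp} (binding $u$), \textsc{bind} (binding the result of the statement), and \textsc{reth}/\textsc{oph} (binding $x'$, $z$, and the continuation $k$). Here the relevant premise is typed in an enlarged context such as $\Delta, \tbind{x}{A}, \tbind{y}{B}$; to apply the inductive hypothesis I first weaken the first hypothesis to $\pjdgmt{\Delta, \tbind{y}{B}}{e}{A}$ using Lemma~\ref{lem:weakening}(1). Continuation binders such as $k$ live in $\Gamma$, but the expression judgement for $e$ ignores $\Gamma$, so no $\Gamma$-weakening of $e$ is ever required. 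Throughout I adopt the standard convention (stated in the caption of Figure~\ref{fig:auxdef}) of $\alpha$-renaming every bound variable so that it differs from $x$ and does not occur free in $e$; this is precisely what licenses pushing $\subst{\cdot}{e}{x}$ underneath a binder without capture.

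The main obstacle is bookkeeping rather than conceptual. Stating the hypothesis with $\tbind{x}{A}$ at the tail of $\Delta$ means that descending under a $\Delta$-binder leaves $x$ no longer at the tail, so the inductive hypothesis does not literally match. This is harmless because the variable rules consult $\Delta$ and $\Gamma$ by membership ($\tbind{x}{A} \in \Delta$, $\chypbind{op}{A}{B} \in \Gamma$, etc.), hence contexts are effectively unordered and exchange is admissible; equivalently, one reads the statement with $\tbind{x}{A}$ permitted to occur anywhere in $\Delta$. With that reading every inductive case closes by applying the appropriate clause of $\subst{\cdot}{e}{x}$ and reinstantiating the original rule, and the induction terminates because each premise has a strictly smaller typing derivation.
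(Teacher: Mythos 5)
Your proof is correct and takes essentially the same route as the paper's: a mutual induction over the typing derivation of the second premise, with the variable case consuming the hypothesis on $e$, the remaining cases distributing the substitution and reassembling the rule, and the $\Delta$-extending binder cases discharged via weakening (Lemma~\ref{lem:weakening}) and the capture-avoidance convention. The paper's proof only writes out the variable and $\lambda$-abstraction cases and leaves the weakening and exchange bookkeeping implicit, which you make explicit.
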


\begin{lemma}[Monadic substitution principle]\label{lem:substc_snd}
  If
  $\jdgmt{\Delta}{\Gamma}{\cbind{c}{A}}$
  and
  $\jdgmt{\Delta, \tbind{x}{A}}{\Gamma}{\cbind{c'}{B}}$,
  then
  $\jdgmt{\Delta}{\Gamma}{\cbind{\substc{c}{x}{c'}}{B}}$.
\end{lemma}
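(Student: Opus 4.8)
The plan is to prove this by structural induction on the computation $c$ (the first argument of the substitution), following exactly the three defining clauses of monadic substitution in Figure~\ref{fig:substc_def}. Since the contexts grow as the recursion descends under binders, I state the induction hypothesis with $\Delta$, $\Gamma$, $B$, $x$ and $c'$ all universally quantified, and in each case I begin by inverting the typing derivation of $\jdgmt{\Delta}{\Gamma}{\cbind{c}{A}}$ to expose its premises. The three syntactic forms of $c$, namely $\ret{e}$, $\sappl{s}{y}{c''}$, and $\letboxu{e}{c''}$, line up one-to-one with the clauses of the definition and with the typing rules \textsc{ret}, \textsc{bind}, and $\Box E$-\textsc{comp}.

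For the base case $c = \ret{e}$, the definition gives $\substc{\ret{e}}{x}{c'} = \subst{c'}{e}{x}$. Inverting rule \textsc{ret} on $\jdgmt{\Delta}{\Gamma}{\cbind{\ret{e}}{A}}$ yields $\pjdgmt{\Delta}{e}{A}$. Together with the hypothesis $\jdgmt{\Delta, \tbind{x}{A}}{\Gamma}{\cbind{c'}{B}}$, the Expression substitution principle (Lemma~\ref{lem:subst_expr}), instantiated at the computation judgement of type $B$ (i.e.\ taking $t = c'$), directly delivers $\jdgmt{\Delta}{\Gamma}{\cbind{\subst{c'}{e}{x}}{B}}$.

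The inductive cases share a uniform shape. When $c = \sappl{s}{y}{c''}$, inverting rule \textsc{bind} gives $\jdgmt{\Delta}{\Gamma}{\sbind{s}{A'}}$ and $\jdgmt{\Delta, \tbind{y}{A'}}{\Gamma}{\cbind{c''}{A}}$. I first weaken the continuation hypothesis (Lemma~\ref{lem:weakening}(1)) to $\jdgmt{\Delta, \tbind{y}{A'}, \tbind{x}{A}}{\Gamma}{\cbind{c'}{B}}$, then apply the induction hypothesis to the strictly smaller $c''$ under the enlarged modal context $\Delta, \tbind{y}{A'}$, obtaining $\jdgmt{\Delta, \tbind{y}{A'}}{\Gamma}{\cbind{\substc{c''}{x}{c'}}{B}}$, and finally reassemble with rule \textsc{bind} to conclude $\jdgmt{\Delta}{\Gamma}{\cbind{\sappl{s}{y}{\substc{c''}{x}{c'}}}{B}}$, which is precisely the defining clause. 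The case $c = \letboxu{e}{c''}$ is identical in structure: I invert rule $\Box E$-\textsc{comp} to get $\pjdgmt{\Delta}{e}{\boxt{\Psi}{A'}}$ and $\jdgmt{\Delta, \mhypbind{u}{A'}{\Psi}}{\Gamma}{\cbind{c''}{A}}$, weaken the derivation of $c'$ past $\mhypbind{u}{A'}{\Psi}$, apply the induction hypothesis under the extended context, and reapply rule $\Box E$-\textsc{comp}.

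The only delicate point is context bookkeeping: because the inductive cases descend under the binders $y$ and $u$, the hypothesis about $c'$ must be transported into the extended modal context, which is exactly what weakening supplies, and the freshly bound variable must be $\alpha$-renamed apart from $x$ and from the free variables of $c'$ to avoid capture, as assumed globally in the caption of Figure~\ref{fig:auxdef}. Beyond this bookkeeping there is no genuine obstacle, since the effect context $\Gamma$ is threaded unchanged through every case and the recursion of the definition never alters it.
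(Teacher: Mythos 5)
Your proof is correct and follows essentially the same route as the paper's: structural induction on $c$ with the base case discharged by the Expression substitution principle and the two inductive cases handled by weakening the hypothesis on $c'$ into the extended modal context, applying the induction hypothesis, and reassembling with \textsc{bind} or $\Box E$-\textsc{comp}. The only difference is that the appendix version of the paper also covers the \texttt{let-fix} clause added by the extensions, which is outside the scope of the main-text definition you worked from.
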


\begin{lemma}[Continuation substitution principle]\label{lem:subst_cont_handler_comp}
  If $\jdgmt{\Delta, \tbind{x}{A_1}, \tbind{y}{A_2}}{\Gamma}{\cbind{c'}{A_3}}$,
  then
  \begin{enumerate}
  \item
    If $\jdgmt{\Delta}{\Gamma, \khypbind{k}{A_1}{A_2}{A_3}}{\cbind{c}{B}}$,
    then $\jdgmt{\Delta}{\Gamma}{\cbind{\substkxy{c}{c'}{k}}{B}}$.
  \item
    If $\jdgmt{\Delta}{\Gamma, \khypbind{k}{A_1}{A_2}{A_3}}{
      \hbind{h}{C}{\Psi}{S}{C'}
    }$,
    then $\jdgmt{\Delta}{\Gamma}{
      \hbind{\substkxy{h}{c'}{k}}{C}{\Psi}{S}{C'}
    }$.
  \end{enumerate}
\end{lemma}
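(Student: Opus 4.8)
The plan is to prove both parts simultaneously by structural induction on the typing derivations of $c$ (part~1) and $h$ (part~2), following the mutual recursion in the definition of continuation substitution (Figure~\ref{fig:substk_def}). The two parts must be established together: substituting into a handle statement $\shandle{u}{\Theta}{h}{e}$ recurses into the handler $h$ (invoking part~2), while substituting into a handler clause recurses into its body computation (invoking part~1). Since the computation grammar forces every statement (operation call, continuation call, or handle) to sit under a \textsc{bind}, the case analysis on computations in part~1 subsumes the statement rules \textsc{op}, \textsc{cont}, and \textsc{mvar}. Whenever the induction descends under a binder that extends $\Delta$ (the $x'$ of a \textsc{bind}, the $x',z$ of a clause, the $u$ of $\Box E$-\textsc{comp}) or extends $\Gamma$ (the fresh clause continuation $k'$ in \textsc{oph}), I first apply Weakening (Lemma~\ref{lem:weakening}) to retype the substitutee $c'$ in the enlarged context before appealing to the induction hypothesis.

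Most cases are routine congruences. For \textsc{ret} the substitution is the identity, $\substkxy{\ret{e}}{c'}{k} = \ret{e}$, because expressions contain no continuation variables, so the typing is unchanged. For \textsc{bind} over an operation $\sop{op}{e}$, over a continuation call $\sscont{k'}{e_1}{e_2}$ with $k' \neq k$, and for $\Box E$-\textsc{comp}, the substitution pushes into the tail computation (and, in the handle case, additionally into the handler via part~2); I re-derive the same statement/handler typing, which is unaffected because its relevant premises live in $\Delta$ or in $\Gamma$ minus $k$, and then close by the induction hypothesis and the original rule. The handler cases \textsc{reth} and \textsc{oph} are analogous: the substitution commutes with clause formation, and I reapply the induction hypothesis to each clause body (part~1) and, for \textsc{oph}, to the smaller handler (part~2).

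The crucial case is \textsc{bind} over a continuation call on the very variable being eliminated, $c = \sappl{\sscont{k}{e_1}{e_2}}{x'}{c_0}$, where the definition unfolds to $\substkxy{c}{c'}{k} = \substc{\subst{\subst{c'}{e_1}{x}}{e_2}{y}}{x'}{\substkxy{c_0}{c'}{k}}$. Inverting \textsc{cont} and \textsc{bind} yields $\pjdgmt{\Delta}{e_1}{A_1}$, $\pjdgmt{\Delta}{e_2}{A_2}$, so that $k$'s output type $A_3$ is exactly the type bound to $x'$, together with $\jdgmt{\Delta, \tbind{x'}{A_3}}{\Gamma, \khypbind{k}{A_1}{A_2}{A_3}}{\cbind{c_0}{B}}$. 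I would then: (i) apply the Expression substitution principle (Lemma~\ref{lem:subst_expr}) twice to the hypothesis $\jdgmt{\Delta, \tbind{x}{A_1}, \tbind{y}{A_2}}{\Gamma}{\cbind{c'}{A_3}}$, plugging $e_2$ for $y$ and then $e_1$ for $x$ (using freshness so the single substitutions compose into $\subst{\subst{c'}{e_1}{x}}{e_2}{y}$), obtaining $\jdgmt{\Delta}{\Gamma}{\cbind{\subst{\subst{c'}{e_1}{x}}{e_2}{y}}{A_3}}$; (ii) invoke the induction hypothesis (part~1) on $c_0$, after weakening $c'$ by $\tbind{x'}{A_3}$, to get $\jdgmt{\Delta, \tbind{x'}{A_3}}{\Gamma}{\cbind{\substkxy{c_0}{c'}{k}}{B}}$; and (iii) glue these with the Monadic substitution principle (Lemma~\ref{lem:substc_snd}, its ``$A$'' instantiated to $A_3$), which delivers the desired $\jdgmt{\Delta}{\Gamma}{\cbind{\substc{\subst{\subst{c'}{e_1}{x}}{e_2}{y}}{x'}{\substkxy{c_0}{c'}{k}}}{B}}$.

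The main obstacle is exactly this key case: one must track that $k$'s output type $A_3$ simultaneously classifies the substitutee $c'$ and matches the binder $x'$ introduced by the surrounding \textsc{bind}, so that the expression-, continuation-, and monadic-substitution principles compose at compatible types. The remaining friction is administrative, namely discharging the weakening and $\alpha$-renaming side conditions so that the freshly bound $x',z,u,k'$ neither capture the free variables of $c'$ nor collide with the eliminated $k$; this follows the conventions already adopted for the subsidiary operations in Figure~\ref{fig:auxdef}.
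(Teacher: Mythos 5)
Your proposal follows the paper's own argument almost exactly: both proceed by mutual structural induction on the two typing derivations, dispatch the congruence cases via weakening and the induction hypotheses, and resolve the key case---a call to the very continuation $k$ being eliminated---by composing the expression substitution principle (Lemma~\ref{lem:subst_expr}) with the monadic substitution principle (Lemma~\ref{lem:substc_snd}) at the shared type $A_3$. The one substantive discrepancy is how you unfold the definition in that key case: you take $\substkxy{(\contappl{k}{e_1}{e_2}{x'}{c_0})}{c'}{k} = \substc{\subst{\subst{c'}{e_1}{x}}{e_2}{y}}{x'}{\substkxy{c_0}{c'}{k}}$, i.e., with a recursive call on the tail $c_0$, whereas Figure~\ref{fig:substk_def} leaves the tail $c_0$ untouched, so relative to the paper's literal definition your step~(ii) is not needed and the term you type in step~(iii) is not the one the operation produces. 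That said, your reading is arguably the correct one: if $k$ occurs again in $c_0$---as it does in multi-shot handlers such as $\mathit{handlerNDet}$---the definition as printed leaves a free occurrence of $k$ in the result, and the paper's own proof of this case quietly records the inverted typing of the tail as $\jdgmt{\Delta, \tbind{x'}{A_3}}{\Gamma}{\cbind{c_0}{B}}$ \emph{without} the $\khypbind{k}{A_1}{A_2}{A_3}$ hypothesis that inversion actually yields; your recursive call plus the induction hypothesis is exactly what is needed to discharge that hypothesis honestly. In short: same route, correct modulo this definitional mismatch, and in the key case somewhat more careful than the paper.
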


\begin{lemma}[Handling principle]\label{thm:hndl_snd}
  If $\jdgmt{\Delta}{\Psi}{\cbind{c}{A}}$,
  and $\jdgmt{\Delta}{\Gamma}{\hbind{h}{A}{\Psi}{B}{C}}$,
  and $\pjdgmt{\Delta}{e}{B}$,
  then $\jdgmt{\Delta}{\Gamma}{\cbind{\hndl{c}{h}{e}}{C}}$.
\end{lemma}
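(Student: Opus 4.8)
The plan is to argue by induction on the derivation of $\jdgmt{\Delta}{\Psi}{\cbind{c}{A}}$, splitting into the four defining clauses of $\hndl{c}{h}{e}$ from Figure~\ref{fig:hndl_def}; the recursion is well founded because the recursive handling calls in the operation and \texttt{let}-box clauses are on strict subterms of $c$, while the modal-variable clause makes no recursive handling call at all. The first thing I would note is that, since $\Psi$ is an \emph{algebraic theory} (operations only, no continuations), a well-typed $c$ can never contain a continuation-call statement $\sscont{k}{e_1}{e_2}$ --- such a statement would need $k \in \Psi$ --- so the only statements occurring in $c$ are operation calls and handled modal variables, and the four clauses are exhaustive. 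I would also prove two auxiliary handler-inversion facts by a short induction on $\jdgmt{\Delta}{\Gamma}{\hbind{h}{A}{\Psi}{B}{C}}$ (which is built from \textsc{reth} and \textsc{oph}): that $h$ contains a \textit{return} clause $\handbase{x}{z}{c_{\mathit{ret}}}$ with $\jdgmt{\Delta, \tbind{x}{A}, \tbind{z}{B}}{\Gamma}{\cbind{c_{\mathit{ret}}}{C}}$, and that for each $\chypbind{op}{A_{op}}{B_{op}} \in \Psi$ it contains an operation clause $\handoparr{op}{x'}{k}{z}{c_{op}}$ with $\jdgmt{\Delta, \tbind{x'}{A_{op}}, \tbind{z}{B}}{\Gamma, \khypbind{k}{B_{op}}{B}{C}}{\cbind{c_{op}}{C}}$. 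Throughout I may use the expression substitution principle (Lemma~\ref{lem:subst_expr}) and the continuation substitution principle (Lemma~\ref{lem:subst_cont_handler_comp}); crucially, continuation substitution does not itself invoke handling, so these lemmas are available without any mutual induction.

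The two structural clauses are routine. For $c = \ret{e'}$, inversion on \textsc{ret} gives $\pjdgmt{\Delta}{e'}{A}$; taking the \textit{return} clause and using Lemma~\ref{lem:subst_expr} twice to substitute $e'$ for $x$ and $e$ for $z$ (the latter with $\pjdgmt{\Delta}{e}{B}$ from the hypothesis) yields $\subst{\subst{c_{\mathit{ret}}}{e'}{x}}{e}{z} \div C$. For $c = \letboxu{e'}{c'}$, inversion on $\Box E$-\textsc{comp} gives $\pjdgmt{\Delta}{e'}{\boxt{\Psi_0}{A_0}}$ and $\jdgmt{\Delta, \mhypbind{u}{A_0}{\Psi_0}}{\Psi}{\cbind{c'}{A}}$; I apply the induction hypothesis to $c'$ (weakening $h$ and $e$ into the extended modal context via Lemma~\ref{lem:weakening}) to get $\hndl{c'}{h}{e} \div C$, and then re-apply $\Box E$-\textsc{comp}, using that $e'$ is an expression and hence insensitive to the effect context.

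The operation clause $c = \compappl{op}{e'}{y}{c'}$ is where the induction hypothesis and both substitution principles come together. Inverting \textsc{bind} and \textsc{op} gives $\chypbind{op}{A_{op}}{B_{op}} \in \Psi$, $\pjdgmt{\Delta}{e'}{A_{op}}$, and $\jdgmt{\Delta, \tbind{y}{B_{op}}}{\Psi}{\cbind{c'}{A}}$; the operation-clause inversion fact then supplies $c_{op} \div C$ under $\tbind{x'}{A_{op}}, \tbind{z}{B}$ and $\khypbind{k}{B_{op}}{B}{C}$. I would then (a) substitute $e$ for $z$ and $e'$ for $x'$ in $c_{op}$ by Lemma~\ref{lem:subst_expr}, keeping $k$ in the effect context; (b) apply the induction hypothesis to $c'$ under $\Delta, \tbind{y}{B_{op}}, \tbind{z'}{B}$ (weakening $h$ and taking $z'$ as the state) to obtain $\hndl{c'}{h}{z'} \div C$; and (c) discharge $k$ with Lemma~\ref{lem:subst_cont_handler_comp}(1), which applies precisely because the input, state, and output types $B_{op}, B, C$ of $k$ match the binders $y, z'$ of the replacement and its type $C$. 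This types $\substk{\subst{\subst{c_{op}}{e}{z}}{e'}{x'}}{y}{z'}{\hndl{c'}{h}{z'}}{k}$ at type $C$, as required.

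The modal-variable clause $c = \happl{u}{\Theta}{h'}{e'}{x'}{c'}$ makes no recursive handling call but instead folds the inner handling into a longer handling sequence. Inverting \textsc{bind} and \textsc{mvar} yields $\mhypbind{u}{A_u}{\Psi_u} \in \Delta$, a sequence $\jdgmt{\Delta}{\Psi''}{\hseqbind{\Theta}{A_u}{\Psi_u}{B_\Theta}}$, an inner handler $\jdgmt{\Delta}{\Psi}{\hbind{h'}{B_\Theta}{\Psi''}{S'}{C''}}$, $\pjdgmt{\Delta}{e'}{S'}$, and a continuation $\jdgmt{\Delta, \tbind{x'}{C''}}{\Psi}{\cbind{c'}{A}}$. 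These are exactly the four premises of \textsc{seqh}, so they retype the enlarged sequence as $\jdgmt{\Delta}{\Psi}{\hseqbind{\hseqextend{\Theta}{h'}{e'}{x'}{c'}}{A_u}{\Psi_u}{A}}$, with target theory $\Psi$ and output type $A$ --- and the decisive point is that this output type $A$ is precisely the input type of the outer handler $h$. Applying \textsc{mvar} with $h$ (of type $\hbind{h}{A}{\Psi}{B}{C}$) and $\pjdgmt{\Delta}{e}{B}$ types $\shandle{u}{\hseqextend{\Theta}{h'}{e'}{x'}{c'}}{h}{e}$ as a statement of type $C$, and a final \textsc{bind}/\textsc{ret} closes the $\eta$-expanded tail $\sappl{\cdot}{x}{\ret{x}}$. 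The main obstacle I expect is the operation clause: it is the only place that simultaneously invokes the induction hypothesis, weakening, and both substitution principles, and it carries the tightest type bookkeeping, since one must thread $A_{op}, B_{op}, B, C$ through the clause-inversion fact and check that the recursively handled continuation $\hndl{c'}{h}{z'}$ has exactly the binder types and result type demanded by continuation substitution; the modal-variable clause is delicate in a more administrative way, requiring that \textsc{seqh} retype the enlarged sequence so its output matches the outer handler's input and the two handlers compose.
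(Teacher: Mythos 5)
Your proposal is correct and follows essentially the same route as the paper's proof: induction on (the typing derivation of) $c$ over the four defining clauses of $\hndl{c}{h}{e}$, with the \textit{return} and operation cases discharged by the expression and continuation substitution principles (the recursively handled continuation $\hndl{c'}{h}{z'}$ typed by the induction hypothesis under the extended modal context), and the modal-variable case handled by retyping the enlarged sequence with \textsc{seqh} so its output type matches the input type of the outer handler, followed by \textsc{mvar}, \textsc{bind} and \textsc{ret}. Your explicit exhaustiveness remark (no continuation-call statements can occur since $\Psi$ contains only operations) and the separated handler-inversion facts are left implicit in the paper but are the same argument.
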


\begin{lemma}[Handling sequencing principle]\label{thm:hndl_seq_snd}
    If $\jdgmt{\Delta}{\Psi}{\cbind{c}{A}}$
    and $\jdgmt{\Delta}{\Psi'}{\hseqbind{\Theta}{A}{\Psi}{B}}$,
    then $\jdgmt{\Delta}{\Psi'}{\cbind{\hndlseq{c}{\Theta}}{B}}$.
\end{lemma}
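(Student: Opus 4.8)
The plan is to induct on the derivation of $\jdgmt{\Delta}{\Psi'}{\hseqbind{\Theta}{A}{\Psi}{B}}$, equivalently on the structure of $\Theta$, following the two defining clauses of $\hndlseq{c}{\Theta}$ in Figure~\ref{fig:hndl_seq_def}. The computation $c$ with $\jdgmt{\Delta}{\Psi}{\cbind{c}{A}}$ stays fixed, and in each case I would invert the corresponding handling-sequence typing rule and then discharge the goal by appealing to the substitution-style principles already established for the subsidiary operations.

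For the base case $\Theta = \hseqbase{}$, the derivation must end in rule \textsc{emph}, forcing $B = A$ and $\Psi \subseteq \Psi'$. Since $\hndlseq{c}{\hseqbase{}} = c$, the goal $\jdgmt{\Delta}{\Psi'}{\cbind{c}{A}}$ follows from the hypothesis $\jdgmt{\Delta}{\Psi}{\cbind{c}{A}}$ by Weakening (Lemma~\ref{lem:weakening}), applied once for each operation in $\Psi' \setminus \Psi$.

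For the inductive case $\Theta = \hseqextend{\Theta_0}{h}{e}{x}{c_0}$, inverting rule \textsc{seqh} produces an intermediate theory $\Phi$ and type $B_0$ together with the premises $\jdgmt{\Delta}{\Phi}{\hseqbind{\Theta_0}{A}{\Psi}{B_0}}$, $\jdgmt{\Delta}{\Psi'}{\hbind{h}{B_0}{\Phi}{S}{C}}$, $\pjdgmt{\Delta}{e}{S}$, and $\jdgmt{\Delta, \tbind{x}{C}}{\Psi'}{\cbind{c_0}{B}}$. Recalling that $\hndlseq{c}{\Theta} = \substc{\hndl{\hndlseq{c}{\Theta_0}}{h}{e}}{x}{c_0}$, I would then chain three facts in order: (i) the induction hypothesis on $\Theta_0$, giving $\jdgmt{\Delta}{\Phi}{\cbind{\hndlseq{c}{\Theta_0}}{B_0}}$; (ii) the Handling principle (Lemma~\ref{thm:hndl_snd}) with handler $h$ and initial state $e$, yielding $\jdgmt{\Delta}{\Psi'}{\cbind{\hndl{\hndlseq{c}{\Theta_0}}{h}{e}}{C}}$; and (iii) the Monadic substitution principle (Lemma~\ref{lem:substc_snd}) composing with $c_0$ over $x:C$, delivering $\jdgmt{\Delta}{\Psi'}{\cbind{\hndlseq{c}{\Theta}}{B}}$, which is the goal.

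The genuinely hard reasoning---especially the continuation-substitution bookkeeping hidden inside handling---has already been absorbed into the Handling principle, so no new deep argument is needed here; the chaining above is essentially forced once the definitions line up. The only points demanding care are the inversions: in the \textsc{seqh} case I must thread the intermediate theory $\Phi$ and type $B_0$ so that the range of the recursively handled computation matches the domain theory of $h$ and so that $h$'s output type $C$ matches the input type of the continuation $c_0$; and in the base case I rely on extending single-binding Weakening to the finite difference $\Psi' \setminus \Psi$, which is a routine auxiliary induction.
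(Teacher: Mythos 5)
Your proof is correct and follows essentially the same route as the paper's: induction on the typing derivation of $\Theta$, with the base case discharged by \textsc{emph} inversion plus weakening over $\Psi' \setminus \Psi$, and the inductive case chaining the induction hypothesis, the Handling principle, and the Monadic substitution principle (the paper merely presents the same three steps in the reverse, goal-directed order).
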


\begin{lemma}[Modal substitution principle]\label{lem:substm_snd}
  If $\jdgmt{\Delta}{\Psi}{\cbind{c}{A}}$
  and $\jdgmt{\Delta, \mhypbind{u}{A}{\Psi}}{\Gamma}{t : J_{\mathit{term}}}$,
  then $\jdgmt{\Delta}{\Gamma}{\substm{t}{\Psi}{c}{u} : J_{\mathit{term}}}$.
\end{lemma}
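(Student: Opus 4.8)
The plan is to prove the statement by simultaneous induction on the derivation of the second hypothesis $\jdgmt{\Delta, \mhypbind{u}{A}{\Psi}}{\Gamma}{t : J_{\mathit{term}}}$, ranging over all five term judgements that $J_{\mathit{term}}$ abbreviates: expressions, computations, statements, handlers, and handling sequences. This matches the shape of modal substitution, which is defined by mutual recursion over those same five categories in Appendix~\ref{sec:sub_ctx_var}. Throughout, the computation $c$ stays fixed in theory $\Psi$ at type $A$, while the modal variable $u$ is eliminated, so each case must re-derive its conclusion in the smaller modal context $\Delta$.

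For the structural cases, modal substitution simply descends into subterms, so I would apply the inductive hypotheses and reassemble with the matching typing rule from Figure~\ref{fig:typing}. Whenever a premise extends a context (as in the introduction rules for $\rightarrow$ and $\Box$, or in \textsc{bind}, \textsc{reth}, \textsc{oph}, and \textsc{seqh}), I would first transport the derivation of $c$ into the enlarged context via Weakening (Lemma~\ref{lem:weakening}) before invoking the hypothesis. The case worth flagging is $\Box I$: boxing shares the modal context, so $u$ survives under the box, and---crucially---$c$ remains a computation in theory $\Psi$ no matter what theory the box introduces; hence the hypothesis on the boxed computation applies verbatim and reassembles by $\Box I$. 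The same remark covers occurrences of a \emph{different} modal variable $u' \neq u$ and of value variables: substitution merely recurses into the guarding $\Theta$, $h$, $e$ (leaving the variable name intact), and the conclusion follows by \textsc{mvar} or \textsc{var}.

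The crux is the occurrence of $u$ itself, which in a computation takes the form $\happl{u}{\Theta}{h}{e}{x'}{c'}$ and is rewritten by modal substitution to
\[
\substc{\hndl{\hndlseq{c}{\Theta'}}{h'}{e'}}{x'}{c''},
\]
where $\Theta' = \substm{\Theta}{\Psi}{c}{u}$, $h' = \substm{h}{\Psi}{c}{u}$, $e' = \substm{e}{\Psi}{c}{u}$, and $c'' = \substm{c'}{\Psi}{c}{u}$. Inverting \textsc{mvar} and \textsc{bind} supplies the typings $\jdgmt{\Delta, \mhypbind{u}{A}{\Psi}}{\Psi'}{\hseqbind{\Theta}{A}{\Psi}{B}}$, $\jdgmt{\Delta, \mhypbind{u}{A}{\Psi}}{\Gamma}{\hbind{h}{B}{\Psi'}{S}{C}}$, $\pjdgmt{\Delta, \mhypbind{u}{A}{\Psi}}{e}{S}$, and $\jdgmt{\Delta, \mhypbind{u}{A}{\Psi}, \tbind{x'}{C}}{\Gamma}{\cbind{c'}{D}}$, whence the inductive hypotheses give well-typedness of $\Theta'$, $h'$, $e'$, and $c''$ in the smaller context $\Delta$. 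I would then finish by chaining the three substitution principles: the Handling sequencing principle (Lemma~\ref{thm:hndl_seq_snd}) types $\hndlseq{c}{\Theta'}$ at $B$ in theory $\Psi'$; the Handling principle (Lemma~\ref{thm:hndl_snd}) types $\hndl{\hndlseq{c}{\Theta'}}{h'}{e'}$ at $C$ in $\Gamma$; and the Monadic substitution principle (Lemma~\ref{lem:substc_snd}) splices in $c''$ to yield type $D$, matching the type of the original computation.

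I expect the main obstacle to be the bookkeeping in this last case: verifying that the source and target theories threaded by \textsc{mvar} ($\Psi \to \Psi'$, then handled into $\Gamma$) and the types carried along ($A \to B \to C \to D$) line up exactly with the interfaces of the three chained principles, and in particular that each principle's computation argument inhabits a \emph{pure} theory context (theory $\Psi$, resp.\ $\Psi'$), as demanded by Lemmas~\ref{thm:hndl_snd} and~\ref{thm:hndl_seq_snd}, rather than the general $\Gamma$. A secondary subtlety is that, because $u$ may occur under a $\texttt{box}$, the expression judgement is genuinely part of the mutual induction and cannot be dispatched trivially; its only nontrivial subcase is precisely the boxed computation handled by the $\Box I$ argument above.
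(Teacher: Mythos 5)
Your proposal is correct and follows essentially the same route as the paper's proof: mutual induction on the typing derivation of $t$, with the structural cases dispatched by the inductive hypothesis plus weakening, and the $\shandlesimp{u}{h}{e}$ case resolved by chaining the handling sequencing, handling, and monadic substitution principles exactly as the paper does (you merely describe the chain inside-out where the paper peels it outside-in). The only point the paper's appendix additionally covers is the $\texttt{eval}$ case arising from the Section~\ref{sec:extensions} extension, which is outside the core statement you were asked to prove.
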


\begin{lemma}[Identity handler]\label{lem:id_handler_snd}
  $\jdgmt{\Delta}{\Psi}{\hbind{\handid{\Psi}}{A}{\Psi}{\unitt{}}{A}}$.
\end{lemma}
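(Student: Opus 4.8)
The plan is to prove the statement by induction on the structure of the algebraic theory $\Psi$. To keep the induction hypothesis directly reusable, I would actually establish the slightly stronger claim recorded as derivation~(9) in Section~\ref{sec:typing}: for every $\Delta$, $A$, and every effect context $\Gamma$ whose operations include those of $\Psi$ (i.e.\ $\Psi \subseteq \Gamma$), one has $\jdgmt{\Delta}{\Gamma}{\hbind{\handid{\Psi}}{A}{\Psi}{\unitt{}}{A}}$. The lemma is then the special case $\Gamma = \Psi$. Quantifying over all such $\Gamma$ avoids having to invoke Weakening (Lemma~\ref{lem:weakening}) in the middle of the derivation; alternatively, one keeps $\Gamma = \Psi$ throughout and applies Weakening to the induction hypothesis at the inductive step, since $\Psi$ extends $\Psi_0$ by exactly one operation.

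For the base case $\Psi = \ectxsym{}$, the handler $\handid{\ectxsym{}}$ is just the return clause $\handbase{x}{z}{\ret{x}}$. I would apply rule \textsc{reth} with $S = \unitt{}$ and output type $A$, reducing the goal to $\jdgmt{\Delta, \tbind{x}{A}, \tbind{z}{\unitt{}}}{\Gamma}{\cbind{\ret{x}}{A}}$, which follows immediately from \textsc{ret} and then \textsc{var} on $x$.

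For the inductive step, write $\Psi = \Psi_0, \chypbind{op}{A'}{B'}$, so that $\handid{\Psi}$ is $\handid{\Psi_0}$ extended with the clause $\handoparr{op}{x}{k}{z}{\sappl{\sop{op}{x}}{y}{\sscont{k}{y}{z}}}$. I would apply rule \textsc{oph} with $C = C' = A$ and $S = \unitt{}$. Its first premise, $\jdgmt{\Delta}{\Gamma}{\hbind{\handid{\Psi_0}}{A}{\Psi_0}{\unitt{}}{A}}$, is discharged by the strengthened induction hypothesis, since $\Psi_0 \subseteq \Psi \subseteq \Gamma$. Its second premise requires typing the clause body under the extended contexts $\Delta, \tbind{x}{A'}, \tbind{z}{\unitt{}}$ and $\Gamma, \khypbind{k}{B'}{\unitt{}}{A}$, with target type $A$.

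Typing this clause body is the only place where care is needed, so I expect it to be the main (though still routine) obstacle. Here I would first expand the syntactic sugar whereby a bare statement in computation position abbreviates $w \leftarrow s;~\ret{w}$. Then $\sappl{\sop{op}{x}}{y}{\sscont{k}{y}{z}}$ is typed by \textsc{bind}: the statement $\sop{op}{x}$ receives type $B'$ by \textsc{op} (using $\chypbind{op}{A'}{B'} \in \Gamma$, which holds because $op \in \Psi \subseteq \Gamma$, together with $\tbind{x}{A'}$ by \textsc{var}), so $y$ is bound at $B'$; the continuation statement $\sscont{k}{y}{z}$ then receives type $A$ by \textsc{cont}, since $k$ is declared as $\khypbind{k}{B'}{\unitt{}}{A}$ and its arguments $\tbind{y}{B'}$ and $\tbind{z}{\unitt{}}$ match the input and state types of $k$; finally \textsc{ret} and \textsc{var} close off the sugar-introduced $\ret{w}$ at type $A$. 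Tracking the declared type of $k$ carefully---input $B'$, state $\unitt{}$, output $A$, as dictated by the conclusion of \textsc{oph}---is exactly what makes the clause line up, and it is precisely why the identity handler can re-invoke $op$ on the same argument and forward its result to the awaiting continuation. This discharges both premises of \textsc{oph} and completes the induction.
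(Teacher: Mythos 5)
Your proof is correct and follows essentially the same route as the paper: induction on $\Psi$, with the base case discharged by \textsc{reth} and \textsc{ret}, and the inductive step by \textsc{oph} whose clause body is typed via \textsc{bind}, \textsc{op}, \textsc{cont}, and \textsc{ret}. The paper keeps $\Gamma = \Psi$ and applies Weakening to the induction hypothesis at the inductive step---the alternative you yourself mention---rather than strengthening the statement to arbitrary $\Gamma \supseteq \Psi$, but this is an immaterial difference.
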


\subsection{Main Theorems}

\begin{theorem}[Local soundness]\label{lem:beta_red}
If $\jdgmt{\Delta}{\Psi}{\cbind{c}{A}}$, then the following
$\beta$-reductions are well typed.
\begin{enumerate}
\item If $\jdgmt{\Delta}{\Gamma}{\cbind{\letboxu{\tbox{\Psi}{c}}{c'}}{B}}$,
  then $\jdgmt{\Delta}{\Gamma}{\cbind{\substm{c'}{\Psi}{c}{u}}{B}}$.
\item If $\pjdgmt{\Delta}{\letboxu{\tbox{\Psi}{c}}{e'}}{B}$,
then $\pjdgmt{\Delta}{\substm{e'}{\Psi}{c}{u}}{B}$.
\end{enumerate}
\end{theorem}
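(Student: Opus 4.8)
The plan is to prove both parts by \emph{inversion} on the typing derivation of the redex, followed in each case by a single appeal to the Modal substitution principle (Lemma~\ref{lem:substm_snd}). Because the typing rules of Figure~\ref{fig:typing} are syntax-directed, every inversion step is forced, so the argument stays short: the genuine work has already been discharged inside Lemma~\ref{lem:substm_snd} and the handling/continuation lemmas it depends on.

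For part (1) I would start from $\jdgmt{\Delta}{\Gamma}{\cbind{\letboxu{\tbox{\Psi}{c}}{c'}}{B}}$. The only rule whose conclusion types a $\texttt{let box}$ in the computation judgement is $\Box E$-comp, so inverting it yields $\pjdgmt{\Delta}{\tbox{\Psi}{c}}{\boxt{\Psi}{A}}$ together with $\jdgmt{\Delta, \mhypbind{u}{A}{\Psi}}{\Gamma}{\cbind{c'}{B}}$. The first of these is typed only by $\Box I$, and inverting that rule gives $\jdgmt{\Delta}{\Psi}{\cbind{c}{A}}$, where the bound context $\Psi$ and the content type $A$ are exactly the annotation and the type recorded in the binding of the modal variable $u$. (This coincides with the standing hypothesis $\jdgmt{\Delta}{\Psi}{\cbind{c}{A}}$ by uniqueness of typing; equivalently, one may simply use the typing recovered by inversion.) I would then instantiate Lemma~\ref{lem:substm_snd} with the generic term judgement $J_{\mathit{term}}$ taken to be the computation judgement and $t \mathrel{:=} c'$: from $\jdgmt{\Delta}{\Psi}{\cbind{c}{A}}$ and $\jdgmt{\Delta, \mhypbind{u}{A}{\Psi}}{\Gamma}{\cbind{c'}{B}}$ it delivers $\jdgmt{\Delta}{\Gamma}{\cbind{\substm{c'}{\Psi}{c}{u}}{B}}$, which is precisely the goal.

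Part (2) is the same argument carried out at the expression level. From $\pjdgmt{\Delta}{\letboxu{\tbox{\Psi}{c}}{e'}}{B}$ the only applicable rule is $\Box E$, whose inversion gives $\pjdgmt{\Delta}{\tbox{\Psi}{c}}{\boxt{\Psi}{A}}$ and $\pjdgmt{\Delta, \mhypbind{u}{A}{\Psi}}{e'}{B}$; inverting $\Box I$ on the former again yields $\jdgmt{\Delta}{\Psi}{\cbind{c}{A}}$. Instantiating Lemma~\ref{lem:substm_snd} with $J_{\mathit{term}}$ the expression judgement and $t \mathrel{:=} e'$ then produces $\pjdgmt{\Delta}{\substm{e'}{\Psi}{c}{u}}{B}$, as required.

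I do not expect a real obstacle at the level of this theorem, precisely because the statement is engineered so that local soundness collapses to the substitution principle: this is exactly the payoff of having established Lemma~\ref{lem:substm_snd} in its generic, $J_{\mathit{term}}$-indexed form, which subsumes the computation and expression cases uniformly. The only point demanding care is the bookkeeping of the two inversions, namely confirming that $\Box E$-comp (resp.\ $\Box E$) and $\Box I$ are the unique rules whose conclusions match the redexes, and that the context $\Psi$ and type $A$ extracted for $u$ are the same ones under which $c$ is typed, so that the two hypotheses of Lemma~\ref{lem:substm_snd} line up. Any substantive difficulty lives inside Lemma~\ref{lem:substm_snd} itself, not in this final step.
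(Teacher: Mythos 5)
Your proposal is correct and matches the paper's proof, which is stated in exactly these terms: ``Immediately follows from inversion on the \texttt{let-box} premises and Lemma~\ref{lem:substm_snd}.'' You have simply spelled out the two forced inversion steps ($\Box E$-comp resp.\ $\Box E$, then $\Box I$) that the paper leaves implicit.
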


\begin{theorem}[Local completeness]\label{lem:eta_box_snd}
  If $\pjdgmt{\Delta}{e}{\boxt{\Psi}{A}}$,
  then the $\eta$-expansion of $e$ is well-typed, i.e.
$\pjdgmt{\Delta}{
      \letbox{u}{e}{
        \tbox{\Psi}{
          \shandlesimp{u}{\handid{\Psi}}{\tunit{}}
        }
      }
    }{\boxt{\Psi}{A}}$.
\end{theorem}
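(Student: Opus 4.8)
The plan is to build the typing derivation directly, reading the $\eta$-expanded term from the outside in; indeed, this is essentially derivation~(10) from Section~\ref{sec:typing} specialized to the case $\Psi' = \Psi$. The outermost constructor is the $\texttt{let-box}$, so I would first apply rule $\Box E$ with $e_1 = e$, with $e_2 = \tbox{\Psi}{\shandlesimp{u}{\handid{\Psi}}{\tunit{}}}$, and with result type $B = \boxt{\Psi}{A}$. Its left premise $\pjdgmt{\Delta}{e}{\boxt{\Psi}{A}}$ is exactly the hypothesis, so all the work reduces to the right premise $\pjdgmt{\Delta, \mhypbind{u}{A}{\Psi}}{\tbox{\Psi}{\shandlesimp{u}{\handid{\Psi}}{\tunit{}}}}{\boxt{\Psi}{A}}$. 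For this I would apply $\Box I$, which introduces the effect context $\Psi$ and leaves the single obligation $\jdgmt{\Delta, \mhypbind{u}{A}{\Psi}}{\Psi}{\cbind{\shandlesimp{u}{\handid{\Psi}}{\tunit{}}}{A}}$.

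Next I would unfold the two layers of syntactic sugar in the body. First, $\shandlesimp{u}{\handid{\Psi}}{\tunit{}}$ abbreviates the handling statement $\shandle{u}{\hseqbase{}}{\handid{\Psi}}{\tunit{}}$ carrying the empty handling sequence $\hseqbase{}$; second, a bare statement $s$ abbreviates the computation $\sappl{s}{x'}{\ret{x'}}$. I would therefore apply \textsc{bind}: its right premise $\jdgmt{\Delta, \mhypbind{u}{A}{\Psi}, \tbind{x'}{A}}{\Psi}{\cbind{\ret{x'}}{A}}$ is immediate from \textsc{ret} and \textsc{var}, leaving the statement obligation $\jdgmt{\Delta, \mhypbind{u}{A}{\Psi}}{\Psi}{\sbind{\shandle{u}{\hseqbase{}}{\handid{\Psi}}{\tunit{}}}{A}}$.

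The heart of the derivation is then rule \textsc{mvar}, instantiated so that both the handling sequence and the handler collapse to the identity on $\Psi$ and $A$. Taking the \textsc{mvar} parameters $\Psi' := \Psi$, $B := A$, $S := \unitt{}$ and $C := A$ discharges its four premises as follows: $\mhypbind{u}{A}{\Psi}$ is in the context; the sequence premise $\jdgmt{\Delta, \mhypbind{u}{A}{\Psi}}{\Psi}{\hseqbind{\hseqbase{}}{A}{\Psi}{A}}$ follows from \textsc{emph} via the reflexive inclusion $\Psi \subseteq \Psi$; the handler premise $\jdgmt{\Delta, \mhypbind{u}{A}{\Psi}}{\Psi}{\hbind{\handid{\Psi}}{A}{\Psi}{\unitt{}}{A}}$ is exactly Lemma~\ref{lem:id_handler_snd} instantiated at the extended modal context; and the state premise $\pjdgmt{\Delta, \mhypbind{u}{A}{\Psi}}{\tunit{}}{\unitt{}}$ holds because $\tunit{}$ inhabits $\unitt{}$.

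I expect no genuine obstacle beyond this bookkeeping. The only points requiring care are (i) expanding the statement-as-computation sugar, so that the $\Box I$ premise is really a computation judgment rather than a statement judgment, and (ii) matching the \textsc{mvar} indices so that the empty sequence and the identity handler are forced to be identities---this is what pins down $S = \unitt{}$ and $C = A$, and explains why the $\eta$-rule supplies the initial state $\tunit{}$ and recovers the type $A$. Since Lemma~\ref{lem:id_handler_snd}, rule \textsc{emph}, and the structural rules are all available at an arbitrary modal context, I can instantiate each of them directly at $\Delta, \mhypbind{u}{A}{\Psi}$, so no separate appeal to weakening (Lemma~\ref{lem:weakening}) is needed.
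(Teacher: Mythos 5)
Your proposal is correct and follows essentially the same derivation as the paper's proof: $\Box E$ on the hypothesis, $\Box I$, then \textsc{bind}/\textsc{ret} after expanding the statement-as-computation sugar, and finally \textsc{mvar} discharged by the identity handler lemma. You merely spell out a few details the paper leaves implicit (the \textsc{emph} instance for the empty handling sequence via $\Psi \subseteq \Psi$, and the instantiation $S = \unitt{}$, $C = A$), which is fine.
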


\begin{theorem}[Preservation on expressions]\label{thm:preservation_exp}
  If $\vdash {\tbind{e}{A}}$ and $e \reduces{} e'$, then
  $\vdash {\tbind{e'}{A}}$.
\end{theorem}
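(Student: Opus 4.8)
The plan is to proceed by induction on the derivation of $\redrel{e}{e'}$, casing on the last reduction rule from Figure~\ref{fig:op_sem} that applies to expressions. There are exactly five such rules: three congruence rules that push a reduction into a subterm (the two for application, $\appl{e_1}{e_2}$ and $\appl{v}{e_2}$, and the one for the head of a \texttt{let-box}), and two $\beta$-rules (function application and \texttt{let-box} on an expression body); the remaining rules in the figure reduce computations and are irrelevant here. In every case I would first invert the typing derivation $\pjdgmt{\ectxsym}{e}{A}$, relying on the fact that the expression typing rules \textsc{var}, $\rightarrow I$, $\rightarrow E$, $\Box I$, $\Box E$ are syntax-directed, so each term shape is typed by exactly one rule.

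For the three congruence cases the argument is uniform. In $\redrel{\appl{e_1}{e_2}}{\appl{e_1'}{e_2}}$, for instance, inversion on $\rightarrow E$ yields $\pjdgmt{\ectxsym}{e_1}{\funt{A'}{A}}$ and $\pjdgmt{\ectxsym}{e_2}{A'}$; the induction hypothesis applied to $\redrel{e_1}{e_1'}$ gives $\pjdgmt{\ectxsym}{e_1'}{\funt{A'}{A}}$, and re-applying $\rightarrow E$ reconstructs $\pjdgmt{\ectxsym}{\appl{e_1'}{e_2}}{A}$. The cases for $\appl{v}{e_2}$ and for the \texttt{let-box} head (whose typing is by $\Box E$) are identical in spirit: apply the induction hypothesis to the single reducing premise, then re-apply the same typing rule.

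The two $\beta$-cases are where the substitution principles do the work. For $\redrel{\appl{(\lam{x}{A'}{e_0})}{v}}{\subst{e_0}{v}{x}}$, inversion gives $\pjdgmt{\tbind{x}{A'}}{e_0}{A}$ --- recall that in \cmtte the value variable $x$ is placed in the modal context $\Delta$ by $\rightarrow I$ --- together with $\pjdgmt{\ectxsym}{v}{A'}$; Lemma~\ref{lem:subst_expr} (expression substitution principle), instantiated with $\Delta = \ectxsym$ and the generic judgement $J_{\mathit{term}}$ taken to be the expression judgement, immediately yields $\pjdgmt{\ectxsym}{\subst{e_0}{v}{x}}{A}$. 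For $\redrel{\letbox{u}{\tbox{\Psi}{c}}{e_0}}{\substm{e_0}{\Psi}{c}{u}}$, I would invert twice: rule $\Box E$ gives $\pjdgmt{\ectxsym}{\tbox{\Psi}{c}}{\boxt{\Psi}{A_0}}$ and $\pjdgmt{\mhypbind{u}{A_0}{\Psi}}{e_0}{A}$, and inverting $\Box I$ on the boxed premise gives $\jdgmt{\ectxsym}{\Psi}{\cbind{c}{A_0}}$. Feeding these into Lemma~\ref{lem:substm_snd} (modal substitution principle) produces $\pjdgmt{\ectxsym}{\substm{e_0}{\Psi}{c}{u}}{A}$, closing the case.

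I do not expect a genuine obstacle inside this theorem: once the substitution principles are available, preservation on expressions is a routine subject-reduction argument. The only points requiring care are bookkeeping rather than mathematical, namely checking that the inversion for $\rightarrow I$ recovers $x$ in $\Delta$ rather than in $\Gamma$ (so that Lemma~\ref{lem:subst_expr} applies as stated), and observing that the \texttt{let-box} $\beta$-case is precisely the instance of Lemma~\ref{lem:substm_snd} where $J_{\mathit{term}}$ is an expression judgement and $\Gamma$ is empty. All the real difficulty --- in particular the handling, continuation-substitution and handling-sequence machinery invoked by modal substitution --- has already been discharged in establishing Lemma~\ref{lem:substm_snd}, which is where the substance of the calculus resides.
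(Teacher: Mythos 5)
Your proposal is correct and follows essentially the same route as the paper's own proof: induction on the reduction derivation, with the congruence cases closed by inversion plus the induction hypothesis, the function $\beta$-case by the expression substitution principle (Lemma~\ref{lem:subst_expr}), and the \texttt{let-box} $\beta$-case by the modal substitution principle (Lemma~\ref{lem:substm_snd}). The only difference is that the paper's appendix version additionally treats the \texttt{let-fix} case from the extended calculus, which is outside the core statement you were asked to prove.
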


\begin{theorem}[Preservation on computations]
  If $\vdash {\cbind{c}{A}}$ and $c \reduces{} c'$, then
  $\vdash {\tbind{c'}{A}}$.
\end{theorem}

\begin{theorem}[Progress on expressions]\label{thm:progress_exp}
  If $\vdash {\tbind{e}{A}}$, then either
  (1) $e$ is a value, or
  (2) there exists $e'$ s.t. $e \reduces{} e'$.
\end{theorem}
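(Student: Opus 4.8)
The plan is to proceed by induction on the typing derivation of $\ejdgmt{\tbind{e}{A}}$ (equivalently, on the structure of $e$, since the expression typing rules of Figure~\ref{fig:typing} are syntax-directed), in the style of Wright and Felleisen. Before the main induction I would establish a \emph{canonical forms} lemma restricted to closed values: if $\ejdgmt{\tbind{v}{A}}$ and $v$ is a value, then $v = \lam{x}{B}{e'}$ when $A = \funt{B}{C}$, and $v = \tbox{\Psi}{c}$ when $A = \boxt{\Psi}{B}$. This is immediate from inspection of Figure~\ref{fig:typing}: the only expression-kind values are $\lam{x}{A}{e'}$ and $\tbox{\Psi}{c}$, introduced solely by $\rightarrow I$ and $\Box I$, whose conclusions carry a function type and a $\boxt{\Psi}{\cdot}$ type respectively.

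For the induction I case on the last rule. The \textsc{var} case is vacuous, since $\tbind{x}{A} \in \ectxsym$ never holds; this is exactly where the hypothesis that the context is empty is used. The $\rightarrow I$ and $\Box I$ cases give $e = \lam{x}{A}{e'}$ and $e = \tbox{\Psi}{c}$, both of which are values, so alternative (1) holds. In the $\rightarrow E$ case $e = \appl{e_1}{e_2}$ with $\ejdgmt{\tbind{e_1}{\funt{A}{B}}}$ and $\ejdgmt{\tbind{e_2}{A}}$; applying the induction hypothesis to $e_1$, either $e_1 \reduces{} e_1'$, in which case $\redrel{\appl{e_1}{e_2}}{\appl{e_1'}{e_2}}$ by the application congruence rule, or $e_1$ is a value, which canonical forms forces to be $\lam{x}{A}{e'}$; then I apply the induction hypothesis to $e_2$, stepping by the second application congruence rule if $e_2 \reduces{} e_2'$, and otherwise firing the $\beta$-rule $\redrel{\appl{(\lam{x}{A}{e'})}{v}}{\subst{e'}{v}{x}}$ on the value $v = e_2$. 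The $\Box E$ case $e = \letboxu{e_1}{e_2}$ is analogous: the induction hypothesis on $e_1$ of type $\boxt{\Psi}{A}$ either steps (and the let-box congruence rule lifts it) or produces a value, which canonical forms forces to be $\tbox{\Psi}{c}$, so the modal $\beta$-rule $\redrel{\letbox{u}{\tbox{\Psi}{c}}{e_2}}{\substm{e_2}{\Psi}{c}{u}}$ applies. In every case the term is a value or reduces, as required.

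I do not expect a genuine obstacle here; the argument is the routine progress half of type soundness. The one point meriting care is that \cmtte is two-sorted, with computations nested inside expressions. This causes no trouble for progress on expressions: a computation can appear inside an expression only underneath $\tbox{\Psi}{\cdot}$, which is itself a value, so the induction never leaves the expression typing judgment and, in particular, need not appeal to any progress property for computations. Note also that progress only requires the $\Box E$ redex to \emph{fire}; well-typedness of its contractum $\substm{e_2}{\Psi}{c}{u}$ is the separate concern of preservation (Theorem~\ref{thm:preservation_exp}), for which the modal substitution principle (Lemma~\ref{lem:substm_snd}) is the relevant input. Thus the only mildly nonstandard ingredient is the canonical forms clause for $\boxt{\Psi}{A}$, which, as noted, is read off directly from $\Box I$.
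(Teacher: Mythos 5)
Your proposal is correct and follows essentially the same route as the paper's own proof: induction on the typing derivation with congruence steps from the induction hypothesis and inversion (your explicit canonical-forms lemma is just the paper's inline inversion argument) to fire the $\beta$-redexes for application and \texttt{let-box}, with the variable case dismissed by emptiness of the context. The only difference is that the paper's appendix version of the proof additionally treats the \texttt{eval} and \texttt{let-fix} cases from the extended calculus of Section~\ref{sec:extensions}, which do not arise for the core system the theorem is stated over.
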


\begin{theorem}[Progress on computations]
  If $\vdash {\cbind{c}{A}}$, then either
  (1) $c$ is $\ret{v}$ where $v$ is a value, or
  (2) there exists $c'$ s.t. $c \reduces{} c'$.
\end{theorem}

The above theorems directly imply \emph{effect safety}: 
programs with no effects (empty context $\Gamma$), can't incur
unhandled effect operation during execution. Indeed, preservation
implies that execution, which is always attempted on terms with no
effect operations, can't reach a computation that will use one.
In turn, progress implies that a computation with no
operations in $\Gamma$ can always make a step; in particular, it can't
get stuck on an unhandled operation.








\section{Extensions to \cmtte}\label{sec:extensions}

\begin{figure}
  \input{formal/extensions}
  \Description{Extension of \cmtte with \texttt{eval} and \texttt{let-fix}.}
  \caption{Extension of \cmtte with \texttt{eval} and \texttt{let-fix}.}\label{fig:extensions}
\end{figure}

\subsection{Evaluation}
In any calculus that tracks effects in types, one wants to relate the
category of computations with no effects to the purely functional
expressions. Similarly, in \cmtte we also would like to have a
correspondence between the types $A$ and $\Box A$.  In Section
\ref{sec:typing} we presented a function (monadic unit) of type
$\funt{A}{\Box A}$ that realizes one side of the correspondence.  To
establish the other, we need a function of type $\funt{\Box A}{A}$,
known in modal logic as axiom $T$ or reflexivity
\cite{blackburn:modal_logic}, as counit for the $\Box$ comonad in
categorical semantics \cite{bierman2000}, and as the \textit{eval}
function in modal type systems \cite{Davies01jacm}.

However, currently it's impossible to express in \cmtte, because modal
types can only be handled within the computation judgement.  That is,
we can always handle $\Box A$ into $A$, but only within the scope of a
box term.  Thus, we can write a function for $\funt{\Box A}{\Box A}$
\[
\tbind{
  \lam{x}{\Box A}{
    \tboxemp{
      \letboxu{x}{
        \shandlesimp{u}{\handid{\ectxsym}}{\tunit{}}
      }
    }
  }
}{\funt{\Box A}{\Box A}}
\]
but not for $\funt{\Box A}{A}$.  To support the latter, we need a way
to convert computations in an empty theory into expressions.  Thus we
present an extension of \cmtte with a primitive \texttt{eval}
expression.

We extend the language with a new expression $\teval{\Theta}{u}$ that,
similarly to handling, holds a handling sequence $\Theta$ applied to
$u$ (our extensions are summarized on Figure \ref{fig:extensions}).
For \texttt{eval}, the typing rule restricts $\Theta$ to typecheck in an empty
effect context which ensures that that the last clause of $\Theta$ will not use
any operation.
Thus the computation produced after applying this handling sequence will also be
operation-free.
This allows us to define the evaluation on empty effect context computations and
we additionally extend the modal substitution with the corresponding clause.

We establish the soundness of this addition by proving the following lemma:
\begin{lemma}[Eval principle]\label{lem:eval_snd}
  If $\jdgmt{\Delta}{\ectxsym}{\cbind{c}{A}}$, then
  $\pjdgmt{\Delta}{\heval{c}}{A}$.
\end{lemma}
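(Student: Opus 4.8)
The plan is to proceed by induction on the derivation of $\jdgmt{\Delta}{\ectxsym}{\cbind{c}{A}}$, with one case for each clause in the definition of $\heval{\cdot}$. The decisive structural fact is that the effect context is empty: since \textsc{op} and \textsc{cont} both require their subject to be declared in $\Gamma$, neither an operation call $\sop{op}{e}$ nor a continuation call $\sscont{k}{e_1}{e_2}$ can be typed with $\Gamma = \ectxsym$. Hence the only statement able to head a \textsc{bind} in this judgment is a handling statement $\shandle{u}{\Theta}{h}{e}$, which confirms that $\heval{\cdot}$ is in fact total on the well-typed computations under consideration.

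The two non-handling cases are direct. If $c = \ret{e}$, inverting \textsc{ret} gives $\pjdgmt{\Delta}{e}{A}$, and since $\heval{\ret{e}} = e$ we are done. If $c = \letboxu{e'}{c''}$, inverting $\Box E$-\textsc{comp} yields $\pjdgmt{\Delta}{e'}{\boxt{\Psi}{A'}}$ together with $\jdgmt{\Delta, \mhypbind{u}{A'}{\Psi}}{\ectxsym}{\cbind{c''}{A}}$; the induction hypothesis on the latter gives $\pjdgmt{\Delta, \mhypbind{u}{A'}{\Psi}}{\heval{c''}}{A}$, and the expression-level rule $\Box E$ then reassembles $\pjdgmt{\Delta}{\letboxu{e'}{\heval{c''}}}{A}$, which is exactly $\pjdgmt{\Delta}{\heval{c}}{A}$.

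The crux is the handling case $c = \sappl{\shandle{u}{\Theta}{h}{e}}{x}{c'}$, where $\heval{c} = \teval{\hseqextend{\Theta}{h}{e}{x}{c'}}{u}$. Inverting \textsc{bind} separates the continuation $\jdgmt{\Delta, \tbind{x}{C}}{\ectxsym}{\cbind{c'}{A}}$, where $C$ is the statement type, from the statement itself; inverting \textsc{mvar} on the statement then supplies precisely $\mhypbind{u}{A'}{\Psi} \in \Delta$, a sequence judgment $\jdgmt{\Delta}{\Psi'}{\hseqbind{\Theta}{A'}{\Psi}{B}}$, a handler judgment $\jdgmt{\Delta}{\ectxsym}{\hbind{h}{B}{\Psi'}{S}{C}}$, and $\pjdgmt{\Delta}{e}{S}$. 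Instantiating the overall effect context of \textsc{seqh} to $\ectxsym$, these four judgments are exactly its four premises, so \textsc{seqh} derives $\jdgmt{\Delta}{\ectxsym}{\hseqbind{\hseqextend{\Theta}{h}{e}{x}{c'}}{A'}{\Psi}{A}}$. Feeding this, together with $\mhypbind{u}{A'}{\Psi} \in \Delta$, into \textsc{eval} produces $\pjdgmt{\Delta}{\teval{\hseqextend{\Theta}{h}{e}{x}{c'}}{u}}{A}$, as required; note that no appeal to the induction hypothesis is needed here, since $c'$ is absorbed into the sequence rather than evaluated.

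I expect the main obstacle to be checking the side conditions that legitimize this reassembly. Both \textsc{seqh} and \textsc{eval} demand that the extended sequence---and thus the handler $h$ and continuation $c'$ it absorbs---type-check in an effect context that is a pure theory, here the empty one. This is delivered by the empty-$\Gamma$ hypothesis, which already forces the handler $h$ of the \textsc{mvar} premise and the continuation $c'$ of the \textsc{bind} premise to live in $\ectxsym$; one only observes that $\ectxsym$ is simultaneously a legal general effect context $\Gamma$ and a legal theory $\Psi$, so the identical derivations may be reused. The rest is the routine bookkeeping of matching the intermediate types $B$, $S$, and $C$ across the two inversions.
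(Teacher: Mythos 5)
Your proof is correct and follows essentially the same route as the paper's: induction over the typing derivation, ruling out operation and continuation statements because the effect context is empty, treating $\ret{e}$ and \texttt{let-box} directly, and reassembling the handling case via \textsc{seqh} and \textsc{eval} without any appeal to the induction hypothesis. The only omission is the \texttt{let-fix} case added by the appendix's extended definition of $\heval{\cdot}$, which goes through exactly like the \texttt{let-box} case.
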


Finally, this extension allows us to define \textit{eval} function in \cmtte and
so obtain the other side of the correspondence between $A$ and $\Box A$:
\[
\mathit{eval_f}
\eqdef
\tbind{
  \lam{x}{\Box A}{
    \letbox{u}{x}{
      \teval{\hseqbase}{u}
    }
  }
}{\funt{\Box A}{A}}
\]

The addition of \textit{eval} is relatively cheap and doesn't affect any of the
definitions from Figure \ref{fig:auxdef}; the modal substitution is the only
definition that we change.
Accordingly, the changes to the soundness proofs are also minimal.
The modal substitution principle for \textit{eval} readily follows from Lemmas
\ref{lem:eval_snd} and \ref{thm:hndl_seq_snd}.
And the progress and preservation theorems do not need to consider this case
explicitly because \texttt{eval} expression is not a closed term.

\subsection{Fixed Points}
Currently, there is no way to write loops in \cmtte.  In that it's
more similar to a system for logical inference than a programming
language.  We can however add loops through recursion to \cmtte.  We
introduce a fixed point operator \texttt{let-fix} and update
operational semantics accordingly (changes are summarized in Figure
\ref{fig:extensions}). The idea is that a recursive function $f$
should have the type $A \rightarrow \boxt{\Psi}B$, allowing it to use
effect operations from $\Psi$. As customary, the reduction in
operational semantics substitutes $f$ with its definition.
%

Thus \texttt{let-fix} allows us to write programs that use recursion, for example
a program that computes $3!$ and reduces to $6$ as we expect:
\begin{align*}
&\tfix{\textit{fact}}{n}{\intt{}}{\ectxsym}{
  \\
  &
  \quad
  \cifel{n = 0}{
    1
  }{
    n * (\appl{\mathit{eval_f}}{(\appl{\textit{fact}}{(n - 1)})})
  }
}{
  \\
  &
  \appl{\mathit{eval_f}}{(\appl{\textit{fact}}{3})}
}
\end{align*}

In
Appendix~\ref{sec:sub_ctx_var}
we present the definitions for all our subsidiary operations extended with
\texttt{eval} and \texttt{let-fix}, and we also include these cases in the
proofs in
Appendix~\ref{sec:theorems}. 

\section{Related and future work}

\subsubsection*{Type-and-effect Systems for Algebraic Effects}
Many languages for algebraic effects and handlers include effect
systems to track operations.  Among them are
$\lambda_\textnormal{eff}$ \cite{kammar2013}, which puts collection of
available operations $E$ in the typing judgement for computations
$\Gamma \vdash_E M : C$, similarly as we do with effect contexts, and
Eff \cite{bauer2014:effectsys,bauer2015,karachalias2020}, Links
\cite{hillerstrom2016,hillerstrom2018,hillerstrom2020}, Helium
\cite{biernacki2019}, Frank \cite{convent2020}, Koka
\cite{leijen2014}, and Effekt \cite{brachthauser2020oopsla} which put the
annotation on the computation's type.

In Eff, the type $A ! \Delta$ specifies that a computation might
invoke operations from $\Delta$, which is a collection of operations.
Links, Helium, and Koka rely on a more involved effect annotation
called effect rows, although their primary role---to track the use of
operations---is similar to the one in Eff.  Links for example
similarly specifies their computation type as $A ! E$ but there $E$ is
an effect row of a computation.  Notably, effect row languages support
a notion of row polymorphism, that allows a programmer to write
computations polymorphic over effect theories.

Effekt and the system of \citet{zhang2019}, also rely on a capability-passing
style for tracking effects.
In this style, effects are capabilities that handlers provide to the code within
their handling scope.
In contrast, the operations in \cmtte are \emph{variables}, which are bound by
\texttt{box}, and handled by the handler associated with the modal variable in
the corresponding \texttt{let-box}.
This distinction should be important for a future extension of \cmtte with
effect polymorphism, as we discuss below.

\cmtte further differs from all the above languages in the key property
that it manipulates effect annotations explicitly through introduction
and elimination forms of the modal type $\boxt{\Psi}{A}$.  Thus, the
\cmtte type-and-effect system is in fact just a type system, as the
effect annotations are built directly into the modal types. This leads
to an operational semantics determined solely by $\beta$-reduction,
although $\beta$-reduction itself now includes handling. More
importantly, we can consider the modal (effectful) types as
propositions, thus, in the future, obtaining a Curry-Howard style
interpretation for algebraic effects and handlers.

On the other hand, the above languages support more advanced features
in handlers than \cmtte, for example shallow \cite{hillerstrom2018}
and recursive \cite{bauer2015} handlers, or asynchronous effects, as
in $\lambda_{\text{\ae}}$ \cite{ahman2021}. We leave considering these
features for \cmtte for future work.

\subsubsection*{Scaling to Dependent Types}

\citet{ahman2018} develops a dependently typed calculus for algebraic
effects based on call-by-push value approach, called \textsc{e}MLTT.
In this paper we don't consider dependent types. However, the
foundation of \cmtte in modal logic should offer some conceptual
simplification when scaling to dependent types, as we can build on the
recent work on modal type theories.
For example, the modal foundation led us to define handling as part of
$\beta$-reduction on open terms. In \textsc{e}MLTT, handling also
works over open terms, but is governed by several definitional
equalities which describe how handling reduces.  In contrast, \cmtte
requires only a single $\beta$-reduction of \texttt{let-box} which
encompasses handling.

CMTT itself has been lifted to dependent types~\cite{nanevski2008},
albeit in a somewhat restricted form, with abstraction over modal
variables, but no explicit modality.  More
recently,~\citet{gratzer:icfp19} have developed a variant of
Martin-L\"of type theory with a non-contextual $\Box$ modality, while
using an alternative elimination form
$\texttt{unbox}$~\cite{Davies01jacm} instead of $\texttt{let-box}$ to
avoid commuting conversions. We might consider a similar approach in future
work.

Currently, \cmtte admits only free algebraic theories; i.e. those with
no equations between operations. \citet{luksic2020} add support for
equations in Eff through the extension of the annotations to $A !
\Delta / \mathcal{E}$, where $\mathcal{E}$ stands for the set of
equations.  \citet{ahman2018} supports equations and also treats them
as a special annotation over the computation types.  We expect that
the scaling to dependent types will immediately allow us to support
equations, or any arbitrary propositions over operations. Indeed, in
dependent type theories, propositions are merely types, and thus can
easily be added to a context $\Psi$ representing an algebraic theory
in a modal type.

\subsubsection*{Effect Polymorphism and Abstraction Safety}
The work related to
Beluga~\cite{pientka:flops10,pientka:popl08,cave:lfmtp13} has extended
CMTT, among other features, with abstraction over first-class contexts
and explicit substitutions. In the future, we plan to incorporate
similar extensions to \cmtte and apply them to algebraic
effects. These should provide new and effective solutions to the
problem of unintended capture of operations in the presence of effect
polymorphism.

To describe the problem, suppose we have a handler that handles only a
single operation \textit{op}.  We want to apply this handler to an
effect polymorphic computation that has an operation signature $[E,
  \textit{op}]$, where $E$ is a variable that stands for some set of
effects. After handling, $E$ is supposed to remain in the signature of
the resulting computation, while \textit{op} is to be stripped away by
handling. The problem of unintended capture is to ensure that $E$
remains in the signature after handling, even if we instantiate $E$
with \textit{op} itself.

Current solutions in the field of algebraic effects rely on a coupling
of operations with their intended handlers through lexical scoping
\cite{zhang2019,biernacki2020,brachthauser2020oopsla}. They are
implemented by making handlers emit a capability, or a label, that is
then passed to the code in the handler's scope and is attached to each
operation call so that handlers can dynamically distinguish the
operations they know of from others.

In the field of contextual modal type theory, the problem of
unintended capture of variables in the presence of context
polymorphism was already considered by \citet{pientka:popl08} in
Beluga.  Suppose we extended \cmtte following this approach, and let
us sketch how that problem would look in the extended system where we
can bind contexts, and what solution we plan to borrow:
\begin{align*}
&
\lam{\psi}{\textit{ctx}}{
  \lam{f}{\funt{\unitt{}}{
      \boxt{\psi, \chypbind{\textit{op}}{\intt{}}{\intt{}}}{\intt{}}
  }}{
    \tbox{\psi}{
      \\
      &
      \quad
      \letboxu{
        \appl{f}{\tunit{}}
      }{
        \\
        &
        \quad
        \shandlesimp{u}{
          (\handid{\psi},
          \handoparr{\textit{op}}{x}{k}{z}{
            \sscont{k}{(x + 1)}{z}
          })
        }{\tunit{}}
      }
    }
  }
}
\end{align*}
The example program above binds some concrete context to $\psi$, takes
a function $f$ that produces a computation in the theory $(\psi,
\textit{op})$, then runs $f$, and handles only \textit{op},
propagating any eventual operations from $\psi$ through the identity
handler $\handid{\psi}$.
The problem of unintended capture appears if we consider instantiating
$\psi$ with some context that itself contains
$\chypbind{\textit{op}}{\intt{}}{\intt{}}$, as this operation will now
be handled rather than propagated, contrary to the intended semantics.

One can see that the example is problematic already from the types, as
the substitution of $\psi$ causes the problem in the context of $f$,
which now contains two instances of \textit{op}, one shadowing the
other. To avoid the shadowing, and thus also the unintended capture,
Beluga locally $\alpha$-renames the variables in the context
instantiating $\psi$. As this context stands for the (unknown) bound
variables in the type and body of $f$, $\alpha$-renaming them doesn't
change the semantics. The renaming occurs in the type of $f$ but also
in the index of the identity handler $\handid{\psi}$. Thus, once
$\alpha$-renamed, the occurrence of \textit{op} arising out of $\psi$
is given to the identity handler for propagation, as intended, rather
than being captured by the \textit{op} clause in the handler for $u$.
%
We will follow this approach in extending \cmtte with effect
polymorphism.
%
%
We will also consider how one efficiently finds the nearest handler in
a handling sequence that treats a given effect in a non-trivial way
(i.e., not by simply propagating it via identity
handler)~\cite{schuster2020,xi:icfp2020}.

\subsubsection*{Relationship to Comonads and Modal Logic and Calculi}

As illustrated in Section~\ref{sec:reductions}, our formulation of
\cmtte employs the $\letbox{u}{e_1}{c_2}$ constructor. When $e_1$
itself is a value---thus, by typing, necessarily of the form
$\tbox{\Psi}{c_1}$---the reduction modally substitutes $c_1$ for $u$
in $c_2$. However, whether $c_1$ evaluates depends on the occurrences
of $u$ in $c_2$. This is in contrast to the elimination rule for
monads (i.e., the monadic bind), where the bound computation
immediately executes.

This kind of reduction is characteristic of comonadic calculi. For
example, CMTT was given comonadic semantics
in~\cite{gab+nan:japl13}. More generally, comonadic calculi for
coeffects~\cite{gab:icfp16,orch:icfp19} also employ constructs with
similar behavior, inspired by the elimination rule for
exponentials in linear $\lambda$-calculus. However, to the best of our
knowledge, these calculi haven't been applied to algebraic effects
and handling. On a related note, CMTT has been applied to staged
computation and meta programming~\cite{Davies01jacm,Nanevski05jfp},
and recently it was proposed that staging could be useful for modular treatment
of algebraic effects~\cite{staged:pepm21,wei2020,schuster2020}, albeit also
without using modal logic.

\citet{nanevski:phd04} presents modal calculi similar to CMTT, where
the graded $\Box$ modality tracks effects that depend on the execution
environment but don't change it. These are handleable effects;
examples include exceptions and delimited
continuations~\cite{nanevski03cmu}, and dynamic
binding~\cite{nanevski03ppdp}. In contrast, \cmtte supports algebraic
effects whose execution may change the environment (e.g., witness the
theory $\textit{St}$ of state in Section~\ref{sec:overview}), and
requires a significantly more general notion of handling. Another
distinction is that \cmtte models effects simply as contexts of
operations, whereas loc.~cit.~use much more involved freshness and
binding disciplines inspired by Nominal logic~\cite{pitts2003}.

Moreover, as we show in the examples (11-13) in
Section~\ref{sec:typing}, the modality $\boxt{\Psi}{A}$ exhibits the
functions with the types that are in programming usually associated
with monads. Furthermore, the typing $\Box A \rightarrow A$ of the
comonadic counit becomes available with the addition of the evaluation
construct in Section~\ref{sec:extensions}.
In the future we will study the equational theory of \cmtte, with the
goal of clarifying the exact categorical nature of our modality. A
useful step in this direction will be the work on categorical
semantics of operations in scope~\cite{pirog:lics18}.

Another fruitful research direction is suggested by~\citet{wu2014} in
Haskell. It illustrates that the practical use of algebraic effects
requires operations with types which are higher-order, meaning that
the operation can be parametrized by a computation, or even a
handler. The former is already possible in \cmtte, as our operations
can range over modal types.  The latter will require internalization
of effect handlers.  While we have a judgment for handlers, we don't
currently have a type for them that can be combined with other types.
In CMTT this corresponds to internalizing the judgment for explicit
substitutions, which has been done in Beluga by~\citet{cave:lfmtp13}.

We also plan to study how graded $\Diamond$ modality (called
''possibility'', or ''diamond'') can be used in
\cmtte. \citet{nanevski03ppdp} proposed that the proof terms
for $\langle\Psi\rangle$ correspond to installing a default handler
for a number of effect names. Afterwards, these effects do not need to
be explicitly handled, as the default handler applies. We expect that
similar behavior will usefully extend to algebraic effects in \cmtte.

Finally, the type system of \cmtte makes sense as a logic as well,
i.e., when one erases the terms and just considers the types as
propositions. In the future we plan to study this logic (e.g., its Kripke
semantics, its sequent calculus, etc.) and derive correspondence with \cmtte in
the style of Curry and Howard.

\section{Conclusion}\label{sec:conclusion}

We have presented the design of \cmtte, a novel contextual modal
calculus for algebraic effects and handlers. We start from the idea
that an algebraic theory can be represented as a variable context, and
apply the graded modal necessity type $\boxt{\Psi}A$ to classify
computations of type $A$ that may invoke effects described by the
algebraic theory (equivalently, context) $\Psi$. The notion of
handling naturally arises as a way to transform $\Psi$ into another
theory. To the best of our knowledge, this is the first calculus that
relates algebraic effects to modal types.

\cmtte is organized around
$\beta$-reduction and $\eta$-expansion for its type
constructors. Defining these requires developing interesting technical
concepts such as identity handlers and modal substitutions that
encompass handling. We illustrated the system on a number of examples,
and established the basic soundness properties.

\begin{acks}
We are grateful to Mi\"etek Bak, Alexander Gryzlov, and Patrick Cousot
for their comments and discussions on the earlier presentation of the
work.  We thank the anonymous reviewers from ICFP'21 PC and our
shepherd Brigitte Pientka for their feedback, which significantly
helped us to understand and describe our contribution better.

This research was partially supported by the Spanish MICINN projects
BOSCO (PGC2018-102210-B-I00) and
the European Research Council project Mathador
(ERC2016-COG-724464). Any opinions, findings, and conclusions or
recommendations expressed in the material are those of the authors and
do not necessarily reflect the views of the funding agencies.
\end{acks}


\bibliography{paper}


 \providecommand{\noopsort}[1]{}
\begin{thebibliography}{58}


\ifx \showCODEN    \undefined \def \showCODEN     #1{\unskip}     \fi
\ifx \showDOI      \undefined \def \showDOI       #1{#1}\fi
\ifx \showISBNx    \undefined \def \showISBNx     #1{\unskip}     \fi
\ifx \showISBNxiii \undefined \def \showISBNxiii  #1{\unskip}     \fi
\ifx \showISSN     \undefined \def \showISSN      #1{\unskip}     \fi
\ifx \showLCCN     \undefined \def \showLCCN      #1{\unskip}     \fi
\ifx \shownote     \undefined \def \shownote      #1{#1}          \fi
\ifx \showarticletitle \undefined \def \showarticletitle #1{#1}   \fi
\ifx \showURL      \undefined \def \showURL       {\relax}        \fi
\providecommand\bibfield[2]{#2}
\providecommand\bibinfo[2]{#2}
\providecommand\natexlab[1]{#1}
\providecommand\showeprint[2][]{arXiv:#2}

\bibitem[\protect\citeauthoryear{Ahman}{Ahman}{2017}]%
        {ahman2018}
\bibfield{author}{\bibinfo{person}{Danel Ahman}.}
  \bibinfo{year}{2017}\natexlab{}.
\newblock \showarticletitle{Handling fibred algebraic effects}.
\newblock \bibinfo{journal}{\emph{Proc. ACM Program. Lang.}}
  \bibinfo{volume}{2}, \bibinfo{number}{POPL}, Article \bibinfo{articleno}{7}
  (\bibinfo{year}{2017}).
\newblock
\urldef\tempurl%
\url{https://doi.org/10.1145/3158095}
\showDOI{\tempurl}


\bibitem[\protect\citeauthoryear{Ahman and Pretnar}{Ahman and Pretnar}{2021}]%
        {ahman2021}
\bibfield{author}{\bibinfo{person}{Danel Ahman} {and} \bibinfo{person}{Matija
  Pretnar}.} \bibinfo{year}{2021}\natexlab{}.
\newblock \showarticletitle{Asynchronous effects}.
\newblock \bibinfo{journal}{\emph{Proc. {ACM} Program. Lang.}}
  \bibinfo{volume}{5}, \bibinfo{number}{{POPL}} (\bibinfo{year}{2021}),
  \bibinfo{pages}{1--28}.
\newblock
\urldef\tempurl%
\url{https://doi.org/10.1145/3434305}
\showDOI{\tempurl}


\bibitem[\protect\citeauthoryear{Alechina, Mendler, de~Paiva, and
  Ritter}{Alechina et~al\mbox{.}}{2001}]%
        {alechina01categorical}
\bibfield{author}{\bibinfo{person}{Natasha Alechina}, \bibinfo{person}{Michael
  Mendler}, \bibinfo{person}{Valeria de Paiva}, {and} \bibinfo{person}{Eike
  Ritter}.} \bibinfo{year}{2001}\natexlab{}.
\newblock \showarticletitle{Categorical and {K}ripke semantics for
  {C}onstructive {S4} modal logic}. In \bibinfo{booktitle}{\emph{International
  Workshop on Computer Science Logic, ({CSL})}}. \bibinfo{pages}{292--307}.
\newblock
\urldef\tempurl%
\url{https://doi.org/10.1007/3-540-44802-0_21}
\showDOI{\tempurl}


\bibitem[\protect\citeauthoryear{Bak}{Bak}{2017}]%
        {bak:imla17}
\bibfield{author}{\bibinfo{person}{Mi\"etek Bak}.}
  \bibinfo{year}{2017}\natexlab{}.
\newblock \showarticletitle{Introspective Kripke models and normalisation by
  evaluation for the $\lambda^\Box$-calculus}. In
  \bibinfo{booktitle}{\emph{Workshop on Intuitionistic Modal Logic and
  Applications (IMLA)}}.
\newblock
\urldef\tempurl%
\url{https://github.com/mietek/imla2017/blob/master/doc/imla2017.pdf}
\showURL{%
\tempurl}


\bibitem[\protect\citeauthoryear{Bauer}{Bauer}{2018}]%
        {bauer2018}
\bibfield{author}{\bibinfo{person}{Andrej Bauer}.}
  \bibinfo{year}{2018}\natexlab{}.
\newblock \showarticletitle{What is algebraic about algebraic effects and
  handlers?}
\newblock \bibinfo{journal}{\emph{arXiv preprint}} (\bibinfo{year}{2018}).
\newblock
\urldef\tempurl%
\url{https://arxiv.org/abs/1807.05923v2}
\showURL{%
\tempurl}


\bibitem[\protect\citeauthoryear{Bauer and Pretnar}{Bauer and Pretnar}{2014}]%
        {bauer2014:effectsys}
\bibfield{author}{\bibinfo{person}{Andrej Bauer} {and} \bibinfo{person}{Matija
  Pretnar}.} \bibinfo{year}{2014}\natexlab{}.
\newblock \showarticletitle{An effect system for algebraic effects and
  handlers}.
\newblock \bibinfo{journal}{\emph{Log. Methods Comput. Sci.}}
  \bibinfo{volume}{10}, \bibinfo{number}{4} (\bibinfo{year}{2014}).
\newblock
\urldef\tempurl%
\url{https://doi.org/10.2168/LMCS-10(4:9)2014}
\showDOI{\tempurl}


\bibitem[\protect\citeauthoryear{Bauer and Pretnar}{Bauer and Pretnar}{2015}]%
        {bauer2015}
\bibfield{author}{\bibinfo{person}{Andrej Bauer} {and} \bibinfo{person}{Matija
  Pretnar}.} \bibinfo{year}{2015}\natexlab{}.
\newblock \showarticletitle{Programming with algebraic effects and handlers}.
\newblock \bibinfo{journal}{\emph{Journal of Logical and Algebraic Methods in
  Programming}} \bibinfo{volume}{84}, \bibinfo{number}{1}
  (\bibinfo{year}{2015}), \bibinfo{pages}{108--123}.
\newblock
\urldef\tempurl%
\url{https://doi.org/10.1016/j.jlamp.2014.02.001}
\showDOI{\tempurl}


\bibitem[\protect\citeauthoryear{Benton, Bierman, and de~Paiva}{Benton
  et~al\mbox{.}}{1998}]%
        {benton98}
\bibfield{author}{\bibinfo{person}{P.~N. Benton}, \bibinfo{person}{G.~M.
  Bierman}, {and} \bibinfo{person}{V.C.V de Paiva}.}
  \bibinfo{year}{1998}\natexlab{}.
\newblock \showarticletitle{Computational types from a logical perspective}.
\newblock \bibinfo{journal}{\emph{Journal of Functional Programming}}
  \bibinfo{volume}{8}, \bibinfo{number}{2} (\bibinfo{year}{1998}),
  \bibinfo{pages}{177--193}.
\newblock
\urldef\tempurl%
\url{https://doi.org/10.1017/S0956796898002998}
\showDOI{\tempurl}


\bibitem[\protect\citeauthoryear{Bierman and de~Paiva}{Bierman and
  de~Paiva}{2000}]%
        {bierman2000}
\bibfield{author}{\bibinfo{person}{Gavin~M. Bierman} {and}
  \bibinfo{person}{Valeria de Paiva}.} \bibinfo{year}{2000}\natexlab{}.
\newblock \showarticletitle{On an Intuitionistic Modal Logic}.
\newblock \bibinfo{journal}{\emph{Studia Logica}} \bibinfo{volume}{65},
  \bibinfo{number}{3} (\bibinfo{year}{2000}), \bibinfo{pages}{383--416}.
\newblock
\urldef\tempurl%
\url{https://doi.org/10.1023/A:1005291931660}
\showDOI{\tempurl}


\bibitem[\protect\citeauthoryear{Biernacki, Pir{\'o}g, Polesiuk, and
  Sieczkowski}{Biernacki et~al\mbox{.}}{2019}]%
        {biernacki2019}
\bibfield{author}{\bibinfo{person}{Dariusz Biernacki}, \bibinfo{person}{Maciej
  Pir{\'o}g}, \bibinfo{person}{Piotr Polesiuk}, {and} \bibinfo{person}{Filip
  Sieczkowski}.} \bibinfo{year}{2019}\natexlab{}.
\newblock \showarticletitle{Abstracting algebraic effects}.
\newblock \bibinfo{journal}{\emph{Proc. ACM Program. Lang.}}
  \bibinfo{volume}{3}, \bibinfo{number}{POPL} (\bibinfo{year}{2019}),
  \bibinfo{pages}{6}.
\newblock
\urldef\tempurl%
\url{https://doi.org/10.1145/3290319}
\showDOI{\tempurl}


\bibitem[\protect\citeauthoryear{Biernacki, Pir{\'{o}}g, Polesiuk, and
  Sieczkowski}{Biernacki et~al\mbox{.}}{2020}]%
        {biernacki2020}
\bibfield{author}{\bibinfo{person}{Dariusz Biernacki}, \bibinfo{person}{Maciej
  Pir{\'{o}}g}, \bibinfo{person}{Piotr Polesiuk}, {and} \bibinfo{person}{Filip
  Sieczkowski}.} \bibinfo{year}{2020}\natexlab{}.
\newblock \showarticletitle{Binders by day, labels by night: effect instances
  via lexically scoped handlers}.
\newblock \bibinfo{journal}{\emph{Proc. {ACM} Program. Lang.}}
  \bibinfo{volume}{4}, \bibinfo{number}{{POPL}} (\bibinfo{year}{2020}),
  \bibinfo{pages}{48:1--48:29}.
\newblock
\urldef\tempurl%
\url{https://doi.org/10.1145/3371116}
\showDOI{\tempurl}


\bibitem[\protect\citeauthoryear{Blackburn, de~Rijke, and Venema}{Blackburn
  et~al\mbox{.}}{2001}]%
        {blackburn:modal_logic}
\bibfield{author}{\bibinfo{person}{Patrick Blackburn}, \bibinfo{person}{Maarten
  de Rijke}, {and} \bibinfo{person}{Yde Venema}.}
  \bibinfo{year}{2001}\natexlab{}.
\newblock \bibinfo{booktitle}{\emph{Modal Logic}}. \bibinfo{series}{Cambridge
  Tracts in Theoretical Computer Science}, Vol.~\bibinfo{volume}{53}.
\newblock \bibinfo{publisher}{Cambridge University Press}.
\newblock
\showISBNx{978-1-10705088-4}
\urldef\tempurl%
\url{https://doi.org/10.1017/CBO9781107050884}
\showDOI{\tempurl}


\bibitem[\protect\citeauthoryear{Brachth{\"a}user, Schuster, and
  Ostermann}{Brachth{\"a}user et~al\mbox{.}}{2020}]%
        {brachthauser2020oopsla}
\bibfield{author}{\bibinfo{person}{Jonathan~Immanuel Brachth{\"a}user},
  \bibinfo{person}{Philipp Schuster}, {and} \bibinfo{person}{Klaus Ostermann}.}
  \bibinfo{year}{2020}\natexlab{}.
\newblock \showarticletitle{Effects as capabilities: effect handlers and
  lightweight effect polymorphism}.
\newblock \bibinfo{journal}{\emph{Proc. {ACM} Program. Lang.}}
  \bibinfo{volume}{4}, \bibinfo{number}{OOPSLA} (\bibinfo{year}{2020}),
  \bibinfo{pages}{1--30}.
\newblock
\urldef\tempurl%
\url{https://doi.org/10.1145/3428194}
\showDOI{\tempurl}


\bibitem[\protect\citeauthoryear{Bruggeman, Waddell, and Dybvig}{Bruggeman
  et~al\mbox{.}}{1996}]%
        {bruggeman1996}
\bibfield{author}{\bibinfo{person}{Carl Bruggeman}, \bibinfo{person}{Oscar
  Waddell}, {and} \bibinfo{person}{R.~Kent Dybvig}.}
  \bibinfo{year}{1996}\natexlab{}.
\newblock \showarticletitle{Representing control in the presence of one-shot
  continuations}. In \bibinfo{booktitle}{\emph{Conference on Programming
  Language Design and Implementation ({PLDI})}}. \bibinfo{pages}{99--107}.
\newblock
\urldef\tempurl%
\url{https://doi.org/10.1145/231379.231395}
\showDOI{\tempurl}


\bibitem[\protect\citeauthoryear{Cave and Pientka}{Cave and Pientka}{2013}]%
        {cave:lfmtp13}
\bibfield{author}{\bibinfo{person}{Andrew Cave} {and} \bibinfo{person}{Brigitte
  Pientka}.} \bibinfo{year}{2013}\natexlab{}.
\newblock \showarticletitle{First-class substitutions in contextual type
  theory}. In \bibinfo{booktitle}{\emph{International Workshop on Logical
  Frameworks {\&} Meta-languages: Theory {\&} Practice ({LFMTP})}}.
  \bibinfo{pages}{15--24}.
\newblock
\urldef\tempurl%
\url{https://doi.org/10.1145/2503887.2503889}
\showDOI{\tempurl}


\bibitem[\protect\citeauthoryear{Convent, Lindley, McBride, and
  McLaughlin}{Convent et~al\mbox{.}}{2020}]%
        {convent2020}
\bibfield{author}{\bibinfo{person}{Lukas Convent}, \bibinfo{person}{Sam
  Lindley}, \bibinfo{person}{Conor McBride}, {and} \bibinfo{person}{Craig
  McLaughlin}.} \bibinfo{year}{2020}\natexlab{}.
\newblock \showarticletitle{Doo bee doo bee doo}.
\newblock \bibinfo{journal}{\emph{J. Funct. Program.}}  \bibinfo{volume}{30}
  (\bibinfo{year}{2020}), \bibinfo{pages}{e9}.
\newblock
\urldef\tempurl%
\url{https://doi.org/10.1017/S0956796820000039}
\showDOI{\tempurl}


\bibitem[\protect\citeauthoryear{Davies and Pfenning}{Davies and
  Pfenning}{2001}]%
        {Davies01jacm}
\bibfield{author}{\bibinfo{person}{Rowan Davies} {and} \bibinfo{person}{Frank
  Pfenning}.} \bibinfo{year}{2001}\natexlab{}.
\newblock \showarticletitle{A modal analysis of staged computation}.
\newblock \bibinfo{journal}{\emph{J. ACM}} \bibinfo{volume}{48},
  \bibinfo{number}{3} (\bibinfo{year}{2001}), \bibinfo{pages}{555--604}.
\newblock
\urldef\tempurl%
\url{https://doi.org/10.1145/382780.382785}
\showDOI{\tempurl}


\bibitem[\protect\citeauthoryear{Gabbay and Nanevski}{Gabbay and
  Nanevski}{2013}]%
        {gab+nan:japl13}
\bibfield{author}{\bibinfo{person}{Murdoch~James Gabbay} {and}
  \bibinfo{person}{Aleksandar Nanevski}.} \bibinfo{year}{2013}\natexlab{}.
\newblock \showarticletitle{Denotation of contextual modal type theory
  {(CMTT):} Syntax and meta-programming}.
\newblock \bibinfo{journal}{\emph{J. Appl. Log.}} \bibinfo{volume}{11},
  \bibinfo{number}{1} (\bibinfo{year}{2013}), \bibinfo{pages}{1--29}.
\newblock
\urldef\tempurl%
\url{https://doi.org/10.1016/j.jal.2012.07.002}
\showDOI{\tempurl}


\bibitem[\protect\citeauthoryear{Gaboardi, Katsumata, Orchard, Breuvart, and
  Uustalu}{Gaboardi et~al\mbox{.}}{2016}]%
        {gab:icfp16}
\bibfield{author}{\bibinfo{person}{Marco Gaboardi}, \bibinfo{person}{Shin-ya
  Katsumata}, \bibinfo{person}{Dominic Orchard}, \bibinfo{person}{Flavien
  Breuvart}, {and} \bibinfo{person}{Tarmo Uustalu}.}
  \bibinfo{year}{2016}\natexlab{}.
\newblock \showarticletitle{Combining effects and coeffects via grading}.
\newblock \bibinfo{journal}{\emph{Proc. ACM Program. Lang.}}
  \bibinfo{volume}{51}, \bibinfo{number}{ICFP} (\bibinfo{year}{2016}),
  \bibinfo{pages}{476–489}.
\newblock
\showISSN{0362-1340}
\urldef\tempurl%
\url{https://doi.org/10.1145/3022670.2951939}
\showDOI{\tempurl}


\bibitem[\protect\citeauthoryear{Gratzer, Sterling, and Birkedal}{Gratzer
  et~al\mbox{.}}{2019}]%
        {gratzer:icfp19}
\bibfield{author}{\bibinfo{person}{Daniel Gratzer}, \bibinfo{person}{Jonathan
  Sterling}, {and} \bibinfo{person}{Lars Birkedal}.}
  \bibinfo{year}{2019}\natexlab{}.
\newblock \showarticletitle{Implementing a modal dependent type theory}.
\newblock \bibinfo{journal}{\emph{Proc. {ACM} Program. Lang.}}
  \bibinfo{volume}{3}, \bibinfo{number}{{ICFP}} (\bibinfo{year}{2019}),
  \bibinfo{pages}{107:1--107:29}.
\newblock
\urldef\tempurl%
\url{https://doi.org/10.1145/3341711}
\showDOI{\tempurl}


\bibitem[\protect\citeauthoryear{Hillerstr{\"{o}}m and
  Lindley}{Hillerstr{\"{o}}m and Lindley}{2016}]%
        {hillerstrom2016}
\bibfield{author}{\bibinfo{person}{Daniel Hillerstr{\"{o}}m} {and}
  \bibinfo{person}{Sam Lindley}.} \bibinfo{year}{2016}\natexlab{}.
\newblock \showarticletitle{Liberating effects with rows and handlers}. In
  \bibinfo{booktitle}{\emph{Workshop on Type-Driven Development
  ({TyDe@ICFP})}}. \bibinfo{pages}{15--27}.
\newblock
\urldef\tempurl%
\url{https://doi.org/10.1145/2976022.2976033}
\showDOI{\tempurl}


\bibitem[\protect\citeauthoryear{Hillerstr{\"{o}}m and
  Lindley}{Hillerstr{\"{o}}m and Lindley}{2018}]%
        {hillerstrom2018}
\bibfield{author}{\bibinfo{person}{Daniel Hillerstr{\"{o}}m} {and}
  \bibinfo{person}{Sam Lindley}.} \bibinfo{year}{2018}\natexlab{}.
\newblock \showarticletitle{Shallow effect handlers}. In
  \bibinfo{booktitle}{\emph{Asian Symposium on Programming Languages and
  Systems ({APLAS})}}. \bibinfo{pages}{415--435}.
\newblock
\urldef\tempurl%
\url{https://doi.org/10.1007/978-3-030-02768-1\_22}
\showDOI{\tempurl}


\bibitem[\protect\citeauthoryear{Hillerstr{\"o}m, Lindley, and
  Atkey}{Hillerstr{\"o}m et~al\mbox{.}}{2020}]%
        {hillerstrom2020}
\bibfield{author}{\bibinfo{person}{Daniel Hillerstr{\"o}m},
  \bibinfo{person}{Sam Lindley}, {and} \bibinfo{person}{Robert Atkey}.}
  \bibinfo{year}{2020}\natexlab{}.
\newblock \showarticletitle{Effect handlers via generalised continuations}.
\newblock \bibinfo{journal}{\emph{Journal of Functional Programming}}
  \bibinfo{volume}{30} (\bibinfo{year}{2020}).
\newblock
\urldef\tempurl%
\url{https://doi.org/10.1017/S0956796820000040}
\showDOI{\tempurl}


\bibitem[\protect\citeauthoryear{Ilik}{Ilik}{2013}]%
        {ilik13}
\bibfield{author}{\bibinfo{person}{Danko Ilik}.}
  \bibinfo{year}{2013}\natexlab{}.
\newblock \showarticletitle{Continuation-passing style models complete for
  intuitionistic logic}.
\newblock \bibinfo{journal}{\emph{Annals of Pure and Applied Logic}}
  \bibinfo{volume}{164}, \bibinfo{number}{6} (\bibinfo{year}{2013}),
  \bibinfo{pages}{651--662}.
\newblock
\showISSN{0168-0072}
\urldef\tempurl%
\url{https://doi.org/10.1016/j.apal.2012.05.003}
\showDOI{\tempurl}


\bibitem[\protect\citeauthoryear{Kammar, Lindley, and Oury}{Kammar
  et~al\mbox{.}}{2013}]%
        {kammar2013}
\bibfield{author}{\bibinfo{person}{Ohad Kammar}, \bibinfo{person}{Sam Lindley},
  {and} \bibinfo{person}{Nicolas Oury}.} \bibinfo{year}{2013}\natexlab{}.
\newblock \showarticletitle{Handlers in action}.
\newblock \bibinfo{journal}{\emph{SIGPLAN Not.}} \bibinfo{volume}{48},
  \bibinfo{number}{9} (\bibinfo{year}{2013}), \bibinfo{pages}{145–158}.
\newblock
\showISSN{0362-1340}
\urldef\tempurl%
\url{https://doi.org/10.1145/2544174.2500590}
\showDOI{\tempurl}


\bibitem[\protect\citeauthoryear{Karachalias, Pretnar, Saleh, Vanderhallen, and
  Schrijvers}{Karachalias et~al\mbox{.}}{2020}]%
        {karachalias2020}
\bibfield{author}{\bibinfo{person}{Georgios Karachalias},
  \bibinfo{person}{Matija Pretnar}, \bibinfo{person}{Amr~Hany Saleh},
  \bibinfo{person}{Stien Vanderhallen}, {and} \bibinfo{person}{Tom
  Schrijvers}.} \bibinfo{year}{2020}\natexlab{}.
\newblock \showarticletitle{Explicit effect subtyping}.
\newblock \bibinfo{journal}{\emph{J. Funct. Program.}}  \bibinfo{volume}{30}
  (\bibinfo{year}{2020}), \bibinfo{pages}{e15}.
\newblock
\urldef\tempurl%
\url{https://doi.org/10.1017/S0956796820000131}
\showDOI{\tempurl}


\bibitem[\protect\citeauthoryear{Leijen}{Leijen}{2014}]%
        {leijen2014}
\bibfield{author}{\bibinfo{person}{Daan Leijen}.}
  \bibinfo{year}{2014}\natexlab{}.
\newblock \showarticletitle{Koka: programming with row polymorphic effect
  types}. In \bibinfo{booktitle}{\emph{Workshop on Mathematically Structured
  Functional Programming ({MSFP@ETAPS})}}. \bibinfo{pages}{100--126}.
\newblock
\urldef\tempurl%
\url{https://doi.org/10.4204/EPTCS.153.8}
\showDOI{\tempurl}


\bibitem[\protect\citeauthoryear{Leijen}{Leijen}{2017}]%
        {leijen2017}
\bibfield{author}{\bibinfo{person}{Daan Leijen}.}
  \bibinfo{year}{2017}\natexlab{}.
\newblock \showarticletitle{Type directed compilation of row-typed algebraic
  effects}. In \bibinfo{booktitle}{\emph{Symposium on Principles of Programming
  Languages ({POPL})}}. \bibinfo{pages}{486--499}.
\newblock
\urldef\tempurl%
\url{https://doi.org/10.1145/3093333.3009872}
\showDOI{\tempurl}


\bibitem[\protect\citeauthoryear{Luksic and Pretnar}{Luksic and
  Pretnar}{2020}]%
        {luksic2020}
\bibfield{author}{\bibinfo{person}{Ziga Luksic} {and} \bibinfo{person}{Matija
  Pretnar}.} \bibinfo{year}{2020}\natexlab{}.
\newblock \showarticletitle{Local algebraic effect theories}.
\newblock \bibinfo{journal}{\emph{J. Funct. Program.}}  \bibinfo{volume}{30}
  (\bibinfo{year}{2020}), \bibinfo{pages}{e13}.
\newblock
\urldef\tempurl%
\url{https://doi.org/10.1017/S0956796819000212}
\showDOI{\tempurl}


\bibitem[\protect\citeauthoryear{Nanevski}{Nanevski}{2003a}]%
        {nanevski03ppdp}
\bibfield{author}{\bibinfo{person}{Aleksandar Nanevski}.}
  \bibinfo{year}{2003}\natexlab{a}.
\newblock \showarticletitle{From dynamic binding to state via modal
  possibility}. In \bibinfo{booktitle}{\emph{Symposium on Principles and
  Practice of Declarative Programming ({PPDP})}}. \bibinfo{pages}{207--218}.
\newblock
\urldef\tempurl%
\url{https://doi.org/10.1145/888251.888271}
\showDOI{\tempurl}


\bibitem[\protect\citeauthoryear{Nanevski}{Nanevski}{2003b}]%
        {nanevski03cmu}
\bibfield{author}{\bibinfo{person}{Aleksandar Nanevski}.}
  \bibinfo{year}{2003}\natexlab{b}.
\newblock \bibinfo{booktitle}{\emph{A modal calculus for effect handling}}.
\newblock \bibinfo{type}{{T}echnical {R}eport} CMU-CS-03-149.
  \bibinfo{institution}{School of Computer Science, Carnegie Mellon
  University}.
\newblock
\urldef\tempurl%
\url{http://reports-archive.adm.cs.cmu.edu/anon/2003/CMU-CS-03-149.pdf}
\showURL{%
\tempurl}


\bibitem[\protect\citeauthoryear{Nanevski}{Nanevski}{2004}]%
        {nanevski:phd04}
\bibfield{author}{\bibinfo{person}{Aleksandar Nanevski}.}
  \bibinfo{year}{2004}\natexlab{}.
\newblock \emph{\bibinfo{title}{Functional programming with names and
  necessity}}.
\newblock \bibinfo{thesistype}{Ph.D. Dissertation}. \bibinfo{school}{School of
  Computer Science, Carnegie Mellon University}.
\newblock
\urldef\tempurl%
\url{http://reports-archive.adm.cs.cmu.edu/anon/2004/CMU-CS-04-151.pdf}
\showURL{%
\tempurl}


\bibitem[\protect\citeauthoryear{Nanevski and Pfenning}{Nanevski and
  Pfenning}{2005}]%
        {Nanevski05jfp}
\bibfield{author}{\bibinfo{person}{Aleksandar Nanevski} {and}
  \bibinfo{person}{Frank Pfenning}.} \bibinfo{year}{2005}\natexlab{}.
\newblock \showarticletitle{Staged computation with names and necessity}.
\newblock \bibinfo{journal}{\emph{Journal of Functional Programming}}
  \bibinfo{volume}{15}, \bibinfo{number}{6} (\bibinfo{year}{2005}),
  \bibinfo{pages}{837--891}.
\newblock
\urldef\tempurl%
\url{https://doi.org/10.1017/S095679680500568X}
\showDOI{\tempurl}


\bibitem[\protect\citeauthoryear{Nanevski, Pfenning, and Pientka}{Nanevski
  et~al\mbox{.}}{2008}]%
        {nanevski2008}
\bibfield{author}{\bibinfo{person}{Aleksandar Nanevski}, \bibinfo{person}{Frank
  Pfenning}, {and} \bibinfo{person}{Brigitte Pientka}.}
  \bibinfo{year}{2008}\natexlab{}.
\newblock \showarticletitle{Contextual modal type theory}.
\newblock \bibinfo{journal}{\emph{ACM Transactions on Computational Logic}}
  \bibinfo{volume}{9}, \bibinfo{number}{3} (\bibinfo{year}{2008}),
  \bibinfo{pages}{23}.
\newblock
\urldef\tempurl%
\url{https://doi.org/10.1145/1352582.1352591}
\showDOI{\tempurl}


\bibitem[\protect\citeauthoryear{Orchard, Liepelt, and Eades~III}{Orchard
  et~al\mbox{.}}{2019}]%
        {orch:icfp19}
\bibfield{author}{\bibinfo{person}{Dominic Orchard},
  \bibinfo{person}{Vilem-Benjamin Liepelt}, {and} \bibinfo{person}{Harley
  Eades~III}.} \bibinfo{year}{2019}\natexlab{}.
\newblock \showarticletitle{Quantitative program reasoning with graded modal
  types}.
\newblock \bibinfo{journal}{\emph{Proc. ACM Program. Lang.}}
  \bibinfo{volume}{3}, \bibinfo{number}{ICFP}, Article \bibinfo{articleno}{110}
  (\bibinfo{year}{2019}), \bibinfo{numpages}{30}~pages.
\newblock
\urldef\tempurl%
\url{https://doi.org/10.1145/3341714}
\showDOI{\tempurl}


\bibitem[\protect\citeauthoryear{Pfenning}{Pfenning}{2009}]%
        {pfenning2009}
\bibfield{author}{\bibinfo{person}{Frank Pfenning}.}
  \bibinfo{year}{2009}\natexlab{}.
\newblock \bibinfo{title}{Lecture notes on harmony}.
\newblock
\newblock
\urldef\tempurl%
\url{https://www.cs.cmu.edu/~fp/courses/15317-f09/lectures/03-harmony.pdf}
\showURL{%
\tempurl}


\bibitem[\protect\citeauthoryear{Pfenning and Davies}{Pfenning and
  Davies}{2001}]%
        {pfenning2001}
\bibfield{author}{\bibinfo{person}{Frank Pfenning} {and} \bibinfo{person}{Rowan
  Davies}.} \bibinfo{year}{2001}\natexlab{}.
\newblock \showarticletitle{A judgmental reconstruction of modal logic}.
\newblock \bibinfo{journal}{\emph{Mathematical Structures in Computer Science}}
  \bibinfo{volume}{11}, \bibinfo{number}{4} (\bibinfo{year}{2001}),
  \bibinfo{pages}{511--540}.
\newblock
\urldef\tempurl%
\url{https://doi.org/10.1017/S0960129501003322}
\showDOI{\tempurl}


\bibitem[\protect\citeauthoryear{Pientka}{Pientka}{2008}]%
        {pientka:popl08}
\bibfield{author}{\bibinfo{person}{Brigitte Pientka}.}
  \bibinfo{year}{2008}\natexlab{}.
\newblock \showarticletitle{A type-theoretic foundation for programming with
  higher-order abstract syntax and first-class substitutions}. In
  \bibinfo{booktitle}{\emph{Symposium on Principles of Programming Languages
  ({POPL})}}. \bibinfo{pages}{371--382}.
\newblock
\urldef\tempurl%
\url{https://doi.org/10.1145/1328438.1328483}
\showURL{%
\tempurl}


\bibitem[\protect\citeauthoryear{Pientka}{Pientka}{2010}]%
        {pientka:flops10}
\bibfield{author}{\bibinfo{person}{Brigitte Pientka}.}
  \bibinfo{year}{2010}\natexlab{}.
\newblock \showarticletitle{Beluga: programming with dependent types,
  contextual data, and contexts}. In \bibinfo{booktitle}{\emph{International
  Symposium on Functional and Logic Programming {(FLOPS)}}}.
  \bibinfo{pages}{1--12}.
\newblock
\urldef\tempurl%
\url{https://doi.org/10.1007/978-3-642-12251-4\_1}
\showURL{%
\tempurl}


\bibitem[\protect\citeauthoryear{Pientka and Pfenning}{Pientka and
  Pfenning}{2003}]%
        {pientka:cade03}
\bibfield{author}{\bibinfo{person}{Brigitte Pientka} {and}
  \bibinfo{person}{Frank Pfenning}.} \bibinfo{year}{2003}\natexlab{}.
\newblock \showarticletitle{Optimizing higher-order pattern unification}. In
  \bibinfo{booktitle}{\emph{International Conference on Automated Deduction
  (CADE)}}. \bibinfo{pages}{473--487}.
\newblock
\urldef\tempurl%
\url{https://doi.org/10.1007/978-3-540-45085-6\_40}
\showDOI{\tempurl}


\bibitem[\protect\citeauthoryear{Pir\'{o}g, Schrijvers, Wu, and
  Jaskelioff}{Pir\'{o}g et~al\mbox{.}}{2018}]%
        {pirog:lics18}
\bibfield{author}{\bibinfo{person}{Maciej Pir\'{o}g}, \bibinfo{person}{Tom
  Schrijvers}, \bibinfo{person}{Nicolas Wu}, {and} \bibinfo{person}{Mauro
  Jaskelioff}.} \bibinfo{year}{2018}\natexlab{}.
\newblock \showarticletitle{Syntax and semantics for operations with scopes}.
  In \bibinfo{booktitle}{\emph{Symposium on Logic in Computer Science
  ({LICS})}}. \bibinfo{pages}{809–818}.
\newblock
\showISBNx{9781450355834}
\urldef\tempurl%
\url{https://doi.org/10.1145/3209108.3209166}
\showURL{%
\tempurl}


\bibitem[\protect\citeauthoryear{Pitts}{Pitts}{2003}]%
        {pitts2003}
\bibfield{author}{\bibinfo{person}{Andrew~M. Pitts}.}
  \bibinfo{year}{2003}\natexlab{}.
\newblock \showarticletitle{Nominal logic, a first order theory of names and
  binding}.
\newblock \bibinfo{journal}{\emph{Information and Computation}}
  \bibinfo{volume}{186}, \bibinfo{number}{2} (\bibinfo{year}{2003}),
  \bibinfo{pages}{165--193}.
\newblock
\showISSN{0890-5401}
\urldef\tempurl%
\url{https://doi.org/10.1016/S0890-5401(03)00138-X}
\showDOI{\tempurl}


\bibitem[\protect\citeauthoryear{Plotkin and Power}{Plotkin and Power}{2002}]%
        {plotkin2002}
\bibfield{author}{\bibinfo{person}{Gordon Plotkin} {and} \bibinfo{person}{John
  Power}.} \bibinfo{year}{2002}\natexlab{}.
\newblock \showarticletitle{Notions of computation determine monads}. In
  \bibinfo{booktitle}{\emph{International Conference on Foundations of Software
  Science and Computation Structures (FoSSaCS)}}. \bibinfo{pages}{342--356}.
\newblock
\urldef\tempurl%
\url{https://doi.org/10.1007/3-540-45931-6_24}
\showDOI{\tempurl}


\bibitem[\protect\citeauthoryear{Plotkin and Power}{Plotkin and Power}{2003}]%
        {plotkin2003}
\bibfield{author}{\bibinfo{person}{Gordon Plotkin} {and} \bibinfo{person}{John
  Power}.} \bibinfo{year}{2003}\natexlab{}.
\newblock \showarticletitle{Algebraic operations and generic effects}.
\newblock \bibinfo{journal}{\emph{Applied Categorical Structures}}
  \bibinfo{volume}{11}, \bibinfo{number}{1} (\bibinfo{year}{2003}),
  \bibinfo{pages}{69--94}.
\newblock
\urldef\tempurl%
\url{https://doi.org/10.1023/A:1023064908962}
\showDOI{\tempurl}


\bibitem[\protect\citeauthoryear{Plotkin and Power}{Plotkin and Power}{2001}]%
        {plotkin2001}
\bibfield{author}{\bibinfo{person}{Gordon~D. Plotkin} {and}
  \bibinfo{person}{John Power}.} \bibinfo{year}{2001}\natexlab{}.
\newblock \showarticletitle{Adequacy for algebraic effects}. In
  \bibinfo{booktitle}{\emph{International Conference on Foundations of Software
  Science and Computation Structures (FoSSaCS)}}. \bibinfo{pages}{1–24}.
\newblock
\showISBNx{3540418644}
\urldef\tempurl%
\url{https://doi.org/10.1007/11548133_3}
\showDOI{\tempurl}


\bibitem[\protect\citeauthoryear{Poulsen, van~der Rest, and Schrijvers}{Poulsen
  et~al\mbox{.}}{2021}]%
        {staged:pepm21}
\bibfield{author}{\bibinfo{person}{Casper~Bach Poulsen}, \bibinfo{person}{Cas
  van~der Rest}, {and} \bibinfo{person}{Tom Schrijvers}.}
  \bibinfo{year}{2021}\natexlab{}.
\newblock \showarticletitle{Staged effects and handlers for modular languages
  with abstraction}. In \bibinfo{booktitle}{\emph{Workshop on Partial
  Evaluation and Program Manipulation ({PEPM})}}.
\newblock
\urldef\tempurl%
\url{http://casperbp.net/store/staged-effects-and-handlers.pdf}
\showURL{%
\tempurl}


\bibitem[\protect\citeauthoryear{Pretnar}{Pretnar}{2015}]%
        {pretnar2015}
\bibfield{author}{\bibinfo{person}{Matija Pretnar}.}
  \bibinfo{year}{2015}\natexlab{}.
\newblock \showarticletitle{An introduction to algebraic effects and handlers
  (invited paper)}. In \bibinfo{booktitle}{\emph{Conference on the Mathematical
  Foundations of Programming Semantics ({MFPS})}}. \bibinfo{pages}{19--35}.
\newblock
\urldef\tempurl%
\url{https://doi.org/10.1016/j.entcs.2015.12.003}
\showDOI{\tempurl}


\bibitem[\protect\citeauthoryear{Pretnar and Plotkin}{Pretnar and
  Plotkin}{2013}]%
        {pretnar2013}
\bibfield{author}{\bibinfo{person}{Matija Pretnar} {and}
  \bibinfo{person}{Gordon~D Plotkin}.} \bibinfo{year}{2013}\natexlab{}.
\newblock \showarticletitle{Handling algebraic effects}.
\newblock \bibinfo{journal}{\emph{Logical Methods in Computer Science}}
  \bibinfo{volume}{9} (\bibinfo{year}{2013}).
\newblock
\urldef\tempurl%
\url{https://doi.org/10.2168/LMCS-9(4:23)2013}
\showDOI{\tempurl}


\bibitem[\protect\citeauthoryear{Schuster, Brachth{\"{a}}user, and
  Ostermann}{Schuster et~al\mbox{.}}{2020}]%
        {schuster2020}
\bibfield{author}{\bibinfo{person}{Philipp Schuster},
  \bibinfo{person}{Jonathan~Immanuel Brachth{\"{a}}user}, {and}
  \bibinfo{person}{Klaus Ostermann}.} \bibinfo{year}{2020}\natexlab{}.
\newblock \showarticletitle{Compiling effect handlers in capability-passing
  style}.
\newblock \bibinfo{journal}{\emph{Proc. {ACM} Program. Lang.}}
  \bibinfo{volume}{4}, \bibinfo{number}{{ICFP}} (\bibinfo{year}{2020}),
  \bibinfo{pages}{93:1--93:28}.
\newblock
\urldef\tempurl%
\url{https://doi.org/10.1145/3408975}
\showDOI{\tempurl}


\bibitem[\protect\citeauthoryear{Simpson}{Simpson}{1994}]%
        {Simpson94}
\bibfield{author}{\bibinfo{person}{Alex~K. Simpson}.}
  \bibinfo{year}{1994}\natexlab{}.
\newblock \emph{\bibinfo{title}{The proof theory and semantics of
  intuitionistic modal logic}}.
\newblock \bibinfo{thesistype}{Ph.D. Dissertation}. \bibinfo{school}{University
  of Edinburgh}.
\newblock
\urldef\tempurl%
\url{http://hdl.handle.net/1842/407}
\showURL{%
\tempurl}


\bibitem[\protect\citeauthoryear{Sivaramakrishnan, Dolan, White, Kelly, Jaffer,
  and Madhavapeddy}{Sivaramakrishnan et~al\mbox{.}}{2021}]%
        {sivaramakrishnan2021}
\bibfield{author}{\bibinfo{person}{KC Sivaramakrishnan},
  \bibinfo{person}{Stephen Dolan}, \bibinfo{person}{Leo White},
  \bibinfo{person}{Tom Kelly}, \bibinfo{person}{Sadiq Jaffer}, {and}
  \bibinfo{person}{Anil Madhavapeddy}.} \bibinfo{year}{2021}\natexlab{}.
\newblock \showarticletitle{Retrofitting effect handlers onto OCaml}. In
  \bibinfo{booktitle}{\emph{Conference on Programming Language Design and
  Implementation ({PLDI})}}.
\newblock
\urldef\tempurl%
\url{https://doi.org/10.1145/3453483.3454039}
\showDOI{\tempurl}


\bibitem[\protect\citeauthoryear{Stampoulis and Shao}{Stampoulis and
  Shao}{2010}]%
        {veriml}
\bibfield{author}{\bibinfo{person}{Antonis Stampoulis} {and}
  \bibinfo{person}{Zhong Shao}.} \bibinfo{year}{2010}\natexlab{}.
\newblock \showarticletitle{{VeriML}: typed computation of logical terms inside
  a language with effects}. In \bibinfo{booktitle}{\emph{International
  Conference on Functional Programming {(ICFP)}}}. \bibinfo{pages}{333--344}.
\newblock
\urldef\tempurl%
\url{https://doi.org/10.1145/1863543.1863591}
\showURL{%
\tempurl}


\bibitem[\protect\citeauthoryear{Stampoulis}{Stampoulis}{2013}]%
        {stampoulisphd}
\bibfield{author}{\bibinfo{person}{Antonios~Michael Stampoulis}.}
  \bibinfo{year}{2013}\natexlab{}.
\newblock \emph{\bibinfo{title}{{VeriML}: A dependently-typed, user-extensible
  and language-centric approach to proof assistants}}.
\newblock \bibinfo{thesistype}{Ph.D. Dissertation}. \bibinfo{school}{Yale
  University}.
\newblock
\urldef\tempurl%
\url{http://flint.cs.yale.edu/flint/publications/ams-thesis.pdf}
\showURL{%
\tempurl}


\bibitem[\protect\citeauthoryear{Wei, Bracevac, Tan, and Rompf}{Wei
  et~al\mbox{.}}{2020}]%
        {wei2020}
\bibfield{author}{\bibinfo{person}{Guannan Wei}, \bibinfo{person}{Oliver
  Bracevac}, \bibinfo{person}{Shangyin Tan}, {and} \bibinfo{person}{Tiark
  Rompf}.} \bibinfo{year}{2020}\natexlab{}.
\newblock \showarticletitle{Compiling symbolic execution with staging and
  algebraic effects}.
\newblock \bibinfo{journal}{\emph{Proc. {ACM} Program. Lang.}}
  \bibinfo{volume}{4}, \bibinfo{number}{{OOPSLA}} (\bibinfo{year}{2020}).
\newblock
\urldef\tempurl%
\url{https://doi.org/10.1145/3428232}
\showDOI{\tempurl}


\bibitem[\protect\citeauthoryear{Wickline, Lee, Pfenning, and Davies}{Wickline
  et~al\mbox{.}}{1998}]%
        {wickline98modal}
\bibfield{author}{\bibinfo{person}{Philip Wickline}, \bibinfo{person}{Peter
  Lee}, \bibinfo{person}{Frank Pfenning}, {and} \bibinfo{person}{Rowan
  Davies}.} \bibinfo{year}{1998}\natexlab{}.
\newblock \showarticletitle{Modal types as staging specifications for run-time
  code generation}.
\newblock \bibinfo{journal}{\emph{Comput. Surveys}} \bibinfo{volume}{30},
  \bibinfo{number}{3es} (\bibinfo{year}{1998}).
\newblock
\urldef\tempurl%
\url{https://doi.org/10.1145/289121.289129}
\showDOI{\tempurl}


\bibitem[\protect\citeauthoryear{Wu, Schrijvers, and Hinze}{Wu
  et~al\mbox{.}}{2014}]%
        {wu2014}
\bibfield{author}{\bibinfo{person}{Nicolas Wu}, \bibinfo{person}{Tom
  Schrijvers}, {and} \bibinfo{person}{Ralf Hinze}.}
  \bibinfo{year}{2014}\natexlab{}.
\newblock \showarticletitle{Effect handlers in scope}. In
  \bibinfo{booktitle}{\emph{Symposium on Haskell}}. \bibinfo{pages}{1--12}.
\newblock
\urldef\tempurl%
\url{https://doi.org/10.1145/2633357.2633358}
\showDOI{\tempurl}


\bibitem[\protect\citeauthoryear{Xie, Brachth\"{a}user, Hillerstr\"{o}m,
  Schuster, and Leijen}{Xie et~al\mbox{.}}{2020}]%
        {xi:icfp2020}
\bibfield{author}{\bibinfo{person}{Ningning Xie},
  \bibinfo{person}{Jonathan~Immanuel Brachth\"{a}user}, \bibinfo{person}{Daniel
  Hillerstr\"{o}m}, \bibinfo{person}{Philipp Schuster}, {and}
  \bibinfo{person}{Daan Leijen}.} \bibinfo{year}{2020}\natexlab{}.
\newblock \showarticletitle{Effect handlers, evidently}.
\newblock  \bibinfo{volume}{4}, \bibinfo{number}{ICFP}, Article
  \bibinfo{articleno}{99} (\bibinfo{year}{2020}).
\newblock
\urldef\tempurl%
\url{https://doi.org/10.1145/3408981}
\showDOI{\tempurl}


\bibitem[\protect\citeauthoryear{Zhang and Myers}{Zhang and Myers}{2019}]%
        {zhang2019}
\bibfield{author}{\bibinfo{person}{Yizhou Zhang} {and}
  \bibinfo{person}{Andrew~C Myers}.} \bibinfo{year}{2019}\natexlab{}.
\newblock \showarticletitle{Abstraction-safe effect handlers via tunneling}.
\newblock \bibinfo{journal}{\emph{Proc. ACM Program. Lang.}}
  \bibinfo{volume}{3}, \bibinfo{number}{POPL} (\bibinfo{year}{2019}),
  \bibinfo{pages}{1--29}.
\newblock
\urldef\tempurl%
\url{https://doi.org/10.1145/3290318}
\showDOI{\tempurl}


\end{thebibliography}

\appendix
\section{Full definitions of subsidiary operations (including eval and let-fix)}\label{sec:sub_ctx_var}
\subsection{Modal Substitution}
\begin{align*}
\substm{x}{\Psi}{c}{u} &=~ x
\\
\substm{(\lam{x}{A}{e})}{\Psi}{c}{u} &=~ \lam{x}{A}{\substm{e}{\Psi}{c}{u}}
\\
\substm{(\appl{t_1}{t_2})}{\Psi}{e}{u} &=~
  \appl{\substm{t_1}{\Psi}{e}{u}}{\substm{t_1}{\Psi}{e}{u}}
\\
\substm{(\tbox{\Gamma}{c'})}{\Psi}{c}{u} &=~
  \tbox{\Gamma}{\substm{c'}{\Psi}{c}{u}}
\\
\substm{(\letbox{u'}{e_1}{e_2})}{\Psi}{c}{u} &=~
  \letbox{u'}{\substm{e_1}{\Psi}{c}{u}}{\substm{e_2}{\Psi}{c}{u}}
    \quad \text{s.t. $u \neq u'$}
\\
\substm{(\teval{\Theta}{u})}{\Psi}{c}{u} &=~
  \heval{{\hndlseq{c}{\substm{\Theta}{\Psi}{c}{u}}}}
\\
\substm{(\teval{\Theta}{u'})}{\Psi}{c}{u} &=~
  \teval{\substm{\Theta}{\Psi}{c}{u}}{u'}
    \quad \text{s.t. $u \neq u'$}
\\
\substm{(\tfix{f}{x}{A}{\Psi'}{c'}{e})}{\Psi}{c}{u} &=~ 
\\
  \tfix{&f}{x}{A}{\Psi'}{\substm{c'}{\Psi}{c}{u}}{\substm{e}{\Psi}{c}{u}}
\\
&
\\
\substm{(\ret{e})}{\Psi}{c}{u} &=~ \ret{\substm{e}{\Psi}{c}{u}}
\\
\substm{(\compappl{\textit{op}}{e}{x}{c'})}{\Psi}{c}{u} &=~
  \compappl{\textit{op}}{\substm{e}{\Psi}{c}{u}}{x}{\substm{c'}{\Psi}{c}{u}}
\\
\substm{(\contappl{k}{e_1}{e_2}{x}{c'})}{\Psi}{c}{u} &=~
  \contappl{k}{
    \substm{e_1}{\Psi}{c}{u}
  }{
    \substm{e_2}{\Psi}{c}{u}
  }{x}{
    \substm{c'}{\Psi}{c}{u}
  }
\\
\substm{(\happl{u}{\Theta}{h}{e}{x'}{c'})}{\Psi}{c}{u} &=~
  \substc{
    \hndl{
      \hndlseq{
        c
      }{
        \substm{\Theta}{\Psi}{c}{u}
      }
    }{
      \substm{h}{\Psi}{c}{u}
    }{
      \substm{e}{\Psi}{c}{u}
    }
  }{x'}{
    \substm{c'}{\Psi}{c}{u}
  }
\\
\substm{(\happl{u'}{\Theta}{h}{e}{x'}{c'})}{\Psi}{c}{u} &=~
  \\
  \happl{u'}{
    \substm{\Theta&}{\Psi}{c}{u}
  }{
    \substm{h}{\Psi}{c}{u}
  }{
    \substm{e}{\Psi}{c}{u}
  }{x'}{
    \substm{c'}{\Psi}{c}{u}
  }
  \quad \text{where $u \neq u'$}
\\
\substm{(\letbox{u'}{e}{c'})}{\Psi}{c}{u} &=~
  \letbox{u'}{\substm{e}{\Psi}{c}{u}}{\substm{c'}{\Psi}{c}{u}}
    \quad \text{where $u \neq u'$} 
\\
\substm{(\tfix{f}{x}{A}{\Psi}{c_1}{c_2})}{\Psi}{c}{u} &=~ 
  \\
  \tfix{&f}{x}{A}{\Psi}{\substm{c_1}{\Psi}{c}{u}}{\substm{c_2}{\Psi}{c}{u}}
\\
&
\\
\substm{(\hseqbase{})}{\Psi}{c}{u} &=~ \hseqbase{}
\\
\substm{(\hseqextend{\Theta}{h}{e}{x'}{c'})}{\Psi}{c}{u} &=~
  \hseqextend{
    \substm{\Theta}{\Psi}{c}{u}
  }{
    \substm{h}{\Psi}{c}{u}
  }{
    \substm{e}{\Psi}{c}{u}
  }{
    x'
  }{
    \substm{c'}{\Psi}{c}{u}
  }
\\
&
\\
\substm{(\handbase{x}{z}{c'})}{\Psi}{c}{u} &=~
  \handbase{x}{z}{\substm{c'}{\Psi}{c}{u}}
\\
\substm{(\hextend{h}{\handoparr{p}{y}{k}{z}{c'}})}{\Psi}{c}{u} &=~
  \hextend{\substm{h}{\Psi}{c}{u}}{
    \handoparr{p}{y}{k}{z}{\substm{c'}{\Psi}{c}{u}}
}
\end{align*}

\subsection{Monadic Substitution}
\begin{align*}
  \substc{\ret{e}}{x}{c} &=~ \subst{c}{e}{x} 
  \\
  \substc{\sappl{s}{y}{c'}}{x}{c} &=~ \sappl{s}{y}{\substc{c'}{x}{c}}
  \\
  \substc{\letbox{u}{e}{c'}}{x}{c} &=~ \letbox{u}{e}{\substc{c'}{x}{c}}
  \\
  \substc{\tfix{f}{x}{A}{\Psi}{c_1}{c_2}}{y}{c} &=
    \tfix{f}{x}{A}{\Psi}{c_1}{\substc{c_2}{y}{c}}
\end{align*}

\subsection{Continuation Substitution}
\begin{align*}
\substkxy{(\ret{e})}{c}{k} &=~ \ret{e} 
\\
\substkxy{(\compappl{op}{e}{x'}{c'})}{c}{k} &=~
  \compappl{op}{e}{x'}{\substkxy{c'}{c}{k}}
\\
\substkxy{(\contappl{k}{e_1}{e_2}{x'}{c'})}{c}{k} &=~
\substc{\subst{\subst{c}{e_1}{x}}{e_2}{y}}{x'}{c'}
\\
\substkxy{(\contappl{k'}{e_1}{e_2}{x'}{c'})}{c}{k} &=~
 \contappl{k'}{e_1}{e_2}{x'}{\substkxy{c'}{c}{k}}
 ~
 \text{when $k \neq k'$}
\\
\substkxy{(\letboxu{e}{c'})}{c}{k} &=~ \letboxu{e}{\substkxy{c'}{c}{k}}
\\
\substkxy{(\tfix{f}{x'}{A}{\Psi}{c_1}{c_2})}{c}{k} &=~
  \\
  \tfix{f}{x'}{&A}{\Psi}{c_1}{\substkxy{c_2}{c}{k}}
\\
\substkxy{(\happl{u}{\Theta}{h}{e}{x'}{c'})}{c}{k} &=~
  \\
  \happl{u}{&\Theta}{\substkxy{h}{c}{k}}{e}{x'}{\substkxy{c'}{c}{k}}
\\
&
\\
\substkxy{(\handbase{x'}{z}{c'})}{c}{k} &=~
  \handbase{x'}{z}{\substkxy{c'}{c}{k}}
\\
\substkxy{(\hextend{h}{\handoparr{op}{x'}{k'}{z}{c'}})}{c}{k} &=~
  \hextend{\substkxy{h}{c}{k}}{
    \handoparr{op}{x'}{k'}{z}{\substkxy{c'}{c}{k}}
  }
\end{align*}

\subsection{Handling}
\begin{align*}
  \hndl{\ret{e'}}{h}{e} &=~ \subst{\subst{c}{e'}{x}}{e}{z}
  \quad\text{where}~\handbase{x}{z}{c} \in h
  \\
  \hndl{\compappl{\textit{op}}{e'}{y}{c'}}{h}{e} &=~
    \\
    \substk{\subst{\subst{c_{\textit{op}}}{e}{z}}{&e'}{x'}}{y}{z'}{
      \hndl{c'}{h}{z'}
    }{k}
    \quad
    \text{where }
    (\handoparr{\textit{op}}{x'}{k}{z}{c_{\textit{op}}}) \in h
  \\
  \hndl{\tfix{f}{x}{A}{\Psi}{c_1}{c_2}}{h}{e} &=~
    \tfix{f}{x}{A}{\Psi}{c_1}{\hndl{c_2}{h}{e}}
  \\
  \hndl{\letboxu{e'}{c'}}{h}{e} &=~
  \letboxu{e'}{\hndl{c'}{h}{e}} 
  \\
  \hndl{\happl{u}{\Theta}{h'}{e'}{x'}{c'}}{h}{e} &=~
  \happl{u}{
    \hseqextend{\Theta}{h'}{e'}{x'}{c'}
  }{h}{e}{x}{\ret{x}} 
\end{align*}

\subsection{Handling Sequencing}
\begin{align*}
  \hndlseq{c'}{\hseqbase{}} &=~ c' 
  \\
  \hndlseq{c'}{\hseqextend{\Theta}{h}{e}{x}{c}} &=~
  \substc{\hndl{\hndlseq{c'}{\Theta}}{h}{e}}{x}{c}
\end{align*}

\subsection{Eval}
\begin{align*}
  \heval{\ret{e}} &= e
  \\
  \heval{\sappl{\shandle{u}{\Theta}{h}{e}}{x}{c}} &=
  \teval{\Theta, \hseqclause{h}{e}{x}{c}}{u}
  \\
  \heval{\letbox{u}{e}{c}} &= \letbox{u}{e}{\heval{c}}
  \\
  \heval{\tfix{f}{x}{A}{\Psi}{c_1}{c_2}} &=
    \tfix{f}{x}{A}{\Psi}{c_1}{\heval{c_2}}
\end{align*}

\section{Theorems and proofs}\label{sec:theorems}

\subsection{Structural Properties}

To avoid writing out all $20$ combinations of our variable and term
judgements, we state weakening with generic binding forms.  We use
$x_\Delta: J_{\mathit{var}}$ to stand over variable bindings in
$\Delta$, namely $\tbind{x}{A}$ and $\mhypbind{u}{A}{\Psi}$.
$x_\Gamma : J_{\textit{var}}$ ranges over $\Gamma$ variable bindings
$\chypbind{\textit{op}}{A}{B}$ and $\khypbind{k}{A}{B}{C}$.  The
generic term judgement $\jdgmt{\Delta}{\Gamma}{t : J_{\mathit{term}}}$
ranges over the judgements for
$\tbind{e}{A}$ (in this case we implicitly ignore $\Gamma$), $\cbind{c}{A}$,
$\sbind{s}{A}$, $\hbind{h}{A}{\Psi}{S}{B}$, and $\hseqbind{\Theta}{A}{\Psi}{B}$.
We also use the generic term judgement to state the principles governing the
subsidiary operations from Section~\ref{sec:reductions}.

\begin{lemma}[Weakening]\label{alem:weakening}
  If $\jdgmt{\Delta}{\Gamma}{t : J_{\textit{term}}}$, then
  \begin{enumerate}
  \item $\jdgmt{\Delta, x_\Delta: J_{\mathit{var}}}{\Gamma}{t : J_{\textit{term}}}$.
  \item $\jdgmt{\Delta}{\Gamma, x_\Gamma:J_{\mathit{var}}}{t : J_{\textit{term}}}$.
  \end{enumerate}
\end{lemma}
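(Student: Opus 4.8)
The plan is to prove both parts simultaneously by mutual structural induction on the derivation of $\jdgmt{\Delta}{\Gamma}{t : J_{\mathit{term}}}$, where the induction ranges at once over all five term judgments (for expressions, computations, statements, handlers, and handling sequences). Since the rules of Figure~\ref{fig:typing} are mutually recursive, a single simultaneous induction is the natural device: in each case I invoke the induction hypothesis on every premise and then reapply the very same rule to reassemble the conclusion with the enlarged context.

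First I would dispatch the lookup rules \textsc{var}, \textsc{mvar}, \textsc{op}, and \textsc{cont}, whose essential content is a membership check. For part~(1) the side conditions $\tbind{x}{A}\in\Delta$ and $\mhypbind{u}{A}{\Psi}\in\Delta$ persist when $\Delta$ is extended by a fresh $x_\Delta$, and symmetrically $\chypbind{op}{A}{B}\in\Gamma$ and $\khypbind{k}{A}{S}{B}\in\Gamma$ persist under extension of $\Gamma$ for part~(2); where such a rule also has subderivations (e.g.\ the handling-sequence, handler, and state premises of \textsc{mvar}) I apply the induction hypothesis to each before reassembling. The composite rules that extend a context in a premise --- $\rightarrow I$, $\Box I$, $\Box E$, $\Box E$-\textsc{comp}, \textsc{ret}, \textsc{bind}, \textsc{reth}, \textsc{oph}, \textsc{emph}, \textsc{seqh} --- are all handled uniformly: push the new binding through the premises via the induction hypothesis and reapply the rule. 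I rely throughout on the $\alpha$-renaming convention so that the freshly introduced $x_\Delta$ or $x_\Gamma$ never clashes with a variable bound locally by a rule (the $x$, $k$, $z$ of \textsc{oph}, the bound $x$ of \textsc{bind}, the $u$ of the \texttt{let-box} rules), which keeps the extended contexts well-formed.

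The one point needing explicit attention is the asymmetry between the two contexts, which stems from the rules that move between $\Gamma$ and $\Psi$. For part~(2) I observe that the expression judgment $\pjdgmt{\Delta}{e}{A}$ carries no $\Gamma$, so weakening $\Gamma$ is vacuous on expressions; correspondingly, in $\Box I$ the premise is typed in the algebraic theory $\Psi$ rather than in the ambient $\Gamma$, so that premise is left untouched and only the (expression) conclusion is affected. The same phenomenon recurs in \textsc{mvar} and \textsc{seqh}, where the handling-sequence premise is typed in a separate theory $\Psi'$ (resp.\ $\Psi''$) independent of the context being weakened, so the induction hypothesis is applied only to those premises that actually mention the weakened context. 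Because handling-sequence judgments constrain their effect context to be an algebraic theory, in part~(2) the relevant $x_\Gamma$ is an operation binding $\chypbind{op}{A}{B}$ (a continuation binding would violate that invariant and does not arise); with this reading the \textsc{emph} side condition $\Psi'\subseteq\Psi$ is preserved, since $\Psi'\subseteq\Psi\subseteq\Psi,\,\chypbind{op}{A}{B}$, and the \textsc{seqh} sequencing stays intact.

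I expect the main obstacle to be organizational rather than mathematical: arranging a single induction that covers all five judgment forms and both directions of weakening without a combinatorial explosion of cases --- precisely the difficulty that the generic judgment $J_{\mathit{term}}$ and the generic bindings $x_\Delta$, $x_\Gamma$ are introduced to tame. Once the bookkeeping is fixed, every case is an immediate reassembly, requiring no nontrivial inversion and no strengthening of the induction hypothesis; the extended constructs $\teval{\Theta}{u}$ and \texttt{let-fix} from Section~\ref{sec:extensions} fit the same template, as each of their typing rules merely threads $\Delta$ and $\Gamma$ through its premises.
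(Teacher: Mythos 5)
Your proposal is correct and follows essentially the same route as the paper, whose proof is the one-line ``mutual induction over the premise typing judgements and our assumption on the distinctness of named variables.'' Your elaboration of the asymmetry between $\Delta$ and $\Gamma$ (expressions carrying no $\Gamma$, $\Box I$ swapping in $\Psi$) and of the restriction to operation bindings when $\Gamma$-weakening a handling-sequence judgment are exactly the details the paper leaves implicit, and they are handled correctly.
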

\begin{proof}
  The cases follow by mutual induction over the premise typing judgements and
  our assumption on the distinctness of named variables.
\end{proof}

\begin{lemma}[Expression substitution principle]\label{alem:subst_expr}
  If $\pjdgmt{\Delta}{e}{A}$
  and $\jdgmt{\Delta, \tbind{x}{A}}{\Gamma}{t : J_{\mathit{term}}}$,
  then $\jdgmt{\Delta}{\Gamma}{\subst{t}{e}{x} : J_{\mathit{term}}}$.
\end{lemma}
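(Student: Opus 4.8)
The plan is to proceed by mutual induction on the derivation of the generic hypothesis $\jdgmt{\Delta, \tbind{x}{A}}{\Gamma}{t : J_{\mathit{term}}}$, keeping $\pjdgmt{\Delta}{e}{A}$ fixed as a side hypothesis. Since $\subst{t}{e}{x}$ substitutes a \emph{value} variable---which lives in $\Delta$---and $e$ is a pure expression typed without reference to $\Gamma$, the guiding observation is that $e$ remains typeable under every manipulation of the effect context $\Gamma$ that the rules perform.

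The base case is \textsc{var}. If $t = x$, then $\subst{x}{e}{x} = e$ and the required conclusion $\pjdgmt{\Delta}{e}{A}$ is exactly the side hypothesis. If $t = y$ with $y \neq x$, then $\subst{y}{e}{x} = y$ and $y$ is still declared in $\Delta$, so \textsc{var} reapplies. The remaining leaf rules \textsc{op}, \textsc{cont}, and \textsc{mvar}, together with the rules that bind no new variable (${\rightarrow}E$, \textsc{ret}, and $\Box I$), follow immediately: substitution commutes with the term constructor, the induction hypothesis applies to each subderivation, and the modal variable $u$ and the $\Gamma$-variables (operations and continuations) are untouched by substituting the $\Delta$-variable $x$.

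The cases requiring care are those that bind a variable in $\Delta$: function introduction ${\rightarrow}I$ (binds a fresh $y$), the box-elimination rules $\Box E$ and $\Box E$-\textsc{comp} (bind $u$), \textsc{bind} (binds $y$), and the handler and sequence rules \textsc{reth}, \textsc{oph}, \textsc{seqh} (which bind $x'$, $z$, and a continuation $k$). In each, the relevant subderivation types its body in $\Delta$ extended with the freshly bound variable, e.g.\ $\pjdgmt{\Delta, \tbind{x}{A}, \tbind{y}{A'}}{e'}{B}$ for ${\rightarrow}I$. Before invoking the induction hypothesis I first $\alpha$-rename the bound variable so that it does not occur free in $e$ (the capture-avoidance convention already adopted in Figure~\ref{fig:auxdef}), and then use Weakening (Lemma~\ref{alem:weakening}) to promote the side hypothesis to $\pjdgmt{\Delta, \tbind{y}{A'}}{e}{A}$. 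The induction hypothesis then yields the body with $e$ substituted, and reapplying the same typing rule rebuilds the whole term. For the rules that merely extend $\Gamma$ (the continuation hypothesis of \textsc{oph}, the intermediate theories of \textsc{seqh}), no extra work is needed on the side of $e$, precisely because $e$ is typed independently of $\Gamma$.

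I expect the main obstacle to be purely administrative: keeping the two contexts straight, and ensuring that each $\Delta$-binder triggers the correct instance of Weakening on $e$ while the $\alpha$-renaming convention is applied uniformly across all the mutually-inductive judgement forms. No new idea is needed beyond the commutation of substitution with each constructor together with the fact that $e$, being a pure expression over $\Delta$, is immune to all changes in the effect context $\Gamma$.
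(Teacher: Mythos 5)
Your proposal is correct and follows essentially the same route as the paper's proof: mutual induction on the derivation of the second premise, with the variable cases as the base and the binder cases discharged by $\alpha$-renaming, weakening the side hypothesis $\pjdgmt{\Delta}{e}{A}$ into the extended context, and reapplying the typing rule. You are in fact slightly more explicit than the paper about where Weakening and capture-avoidance are invoked, which the paper leaves implicit in its treatment of the $\lambda$-abstraction case.
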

\begin{proof}
  By mutual induction on the expansions of the second premise.
  We present the interesting cases for expressions:
  \begin{description}
  \item[Case $e' = x$.] 
    The goal is to show
    $\pjdgmt{\Delta}{e}{B}$,
    while we assume
    $\pjdgmt{\Delta, \tbind{x}{A}}{x}{B}$
    and we have $\tbind{x}{B} \in \Delta, \tbind{x}{A}$ by inversion.
    Thus we know $B = A$ and the goal follows by assumption.

  \item[Case $e' = y$.] 
    We need to show
    $\pjdgmt{\Delta}{y}{B}$,
    while we know $\tbind{y}{B} \in \Delta, \tbind{x}{A}$ by inversion.
    The goal follows by the \textsc{var} rule, where showing $\tbind{y}{B} \in
    \Delta$ is straightforward.
     
  \item[Case $e' = \lam{y}{B_1}{e''}$.]
    We need to show
    $\pjdgmt{\Delta}{\lam{y}{B_1}{\subst{e''}{e}{x}}}{\funt{B_1}{B_2}}$
    for some $B_1$ and $B_2$, with $x \neq y$ given.
    By inversion on $e'$, we know that
    $\pjdgmt{\Delta, \tbind{x}{A}, \tbind{y}{B_1}}{e''}{B_2}$
    By the \textsc{${\rightarrow}I$} rule we need to show
    $\pjdgmt{\Delta, \tbind{y}{B_1}}{\subst{e''}{e}{x}}{B_2}$
    which follows by induction.
  \end{description}
\end{proof}

\begin{lemma}[Monadic substitution principle]\label{alem:substc_snd}
  If
  $\jdgmt{\Delta}{\Gamma}{\cbind{c}{A}}$
  and
  $\jdgmt{\Delta, \tbind{x}{A}}{\Gamma}{\cbind{c'}{B}}$,
  then
  $\jdgmt{\Delta}{\Gamma}{\cbind{\substc{c}{x}{c'}}{B}}$.
\end{lemma}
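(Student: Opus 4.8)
The plan is to proceed by induction on the derivation of the first premise $\jdgmt{\Delta}{\Gamma}{\cbind{c}{A}}$, mirroring the definition of monadic substitution in Figure~\ref{fig:substc_def}, which recurses on its first argument $c$. There are three computation formers, and in each case the defining equation for $\substc{c}{x}{c'}$ dictates the shape of the result, so the proof is driven entirely by that equation together with an inversion of the typing rule that formed $c$.

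First, the base case $c = \ret{e}$. Inverting rule \textsc{ret} on the first premise yields $\pjdgmt{\Delta}{e}{A}$, and by definition $\substc{\ret{e}}{x}{c'} = \subst{c'}{e}{x}$. It therefore suffices to show $\jdgmt{\Delta}{\Gamma}{\cbind{\subst{c'}{e}{x}}{B}}$, which follows immediately from the Expression substitution principle (Lemma~\ref{alem:subst_expr}) applied to the hypothesis $\jdgmt{\Delta, \tbind{x}{A}}{\Gamma}{\cbind{c'}{B}}$ and to $\pjdgmt{\Delta}{e}{A}$. This is the only case where the expression genuinely enters $c'$; the remaining cases merely push the composition inward.

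Next, the inductive cases $c = \sappl{s}{y}{c_0}$ and $c = \letboxu{e_1}{c_0}$. Here the definition moves the composition under the binder, e.g.\ $\substc{(\sappl{s}{y}{c_0})}{x}{c'} = \sappl{s}{y}{\substc{c_0}{x}{c'}}$. Inverting \textsc{bind} (resp.\ $\Box E$-\textsc{comp}) splits the premise into a typing for the statement $s$ (resp.\ the boxed $e_1$) and a typing for the tail $c_0$ in a context extended by $\tbind{y}{A'}$ (resp.\ by the modal binding $\mhypbind{u}{A''}{\Psi}$). To apply the induction hypothesis to $c_0$ and $c'$, I first weaken $\jdgmt{\Delta, \tbind{x}{A}}{\Gamma}{\cbind{c'}{B}}$ by the freshly bound variable via Lemma~\ref{alem:weakening}, so that $c'$ becomes typed in the same extended modal context as $c_0$ (using that contexts are identified up to the named, distinct variables); the induction hypothesis then types $\substc{c_0}{x}{c'}$, and reapplying \textsc{bind} (resp.\ $\Box E$-\textsc{comp}) reassembles the conclusion. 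The appendix \texttt{let-fix} case is handled identically, recursing into its continuation $c_2$.

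The only delicate point is the bookkeeping around the bound variables $y$ and $u$: by the standing $\alpha$-renaming convention of Figure~\ref{fig:auxdef} they are chosen fresh, so that they neither coincide with $x$ nor capture free variables of $c'$, which is precisely what licenses the weakening step. Beyond this freshness discipline every case is routine, so I expect the main (though still light) obstacle to be organizing the context extensions correctly so that the induction hypothesis is invoked with $c'$ living in the appropriate weakened context.
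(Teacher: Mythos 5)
Your proof is correct and follows essentially the same route as the paper's: induction on $c$ (equivalently on its typing derivation), discharging the $\ret{e}$ case via the expression substitution principle and the remaining cases by inversion, weakening $c'$ into the extended modal context, the induction hypothesis, and reassembly with the corresponding typing rule. The paper's own proof is just a terser version of exactly this argument, including the analogous treatment of the \texttt{let-fix} case.
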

\begin{proof}
  By induction on $c$.
  Cases are:
  \begin{description}
  \item[Case $c = \ret{e}$.]
    The goal $\jdgmt{\Delta}{\Gamma}{\cbind{\subst{c'}{e}{x}}{B}}$ follows by
    the substitution lemma \ref{alem:subst_expr}.
  \item[Case $c = \sappl{s}{y}{c''}$.]
    The goal
    $\jdgmt{\Delta}{\Gamma}{\cbind{\sappl{s}{y}{\substc{c''}{x}{c'}}}{B}}$
    follows by the rule \textsc{bind}, weakening, and induction.
  \item[Case $c = \letbox{u}{e}{c''}$.]
    The goal
    $\jdgmt{\Delta}{\Gamma}{\cbind{\letbox{u}{e}{\substc{c''}{x}{c'}}}{B}}$
    follows by the rule \textsc{$\Box E$}, weakening, and induction.
  \item[Case $c = \tfix{f}{x}{A'}{\Psi}{c_1}{c_2}$]
    follows analogously to the case above.
  \end{description}
\end{proof}

\begin{lemma}[Continuation substitution principle]\label{alem:subst_cont_handler_comp}
  If $\jdgmt{\Delta, \tbind{x}{A_1}, \tbind{y}{A_2}}{\Gamma}{\cbind{c'}{A_3}}$,
  then
  \begin{enumerate}
  \item
    If $\jdgmt{\Delta}{\Gamma, \khypbind{k}{A_1}{A_2}{A_3}}{\cbind{c}{B}}$,
    then $\jdgmt{\Delta}{\Gamma}{\cbind{\substkxy{c}{c'}{k}}{B}}$.
  \item
    If $\jdgmt{\Delta}{\Gamma, \khypbind{k}{A_1}{A_2}{A_3}}{
      \hbind{h}{C}{\Psi}{S}{C'}
    }$,
    then $\jdgmt{\Delta}{\Gamma}{
      \hbind{\substkxy{h}{c'}{k}}{C}{\Psi}{S}{C'}
    }$.
  \end{enumerate}
\end{lemma}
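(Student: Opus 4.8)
The plan is to prove the two statements simultaneously, by mutual induction on the structure of $c$ (for part~1) and of $h$ (for part~2), inverting the typing derivation at each step. The mutual dependency is forced by the shape of continuation substitution: a handling statement $\shandle{u}{\Theta}{h}{e}$ sitting inside a computation pushes the substitution into the handler $h$, so part~1 appeals to part~2; conversely both handler clauses $\handbase{x}{z}{c}$ and $\handoparr{op}{x}{k'}{z}{c}$ carry computation bodies, so part~2 appeals to part~1. A recurring observation that I would isolate up front is that in every one of these cases the sequence $\Theta$ and the state $e$ are left untouched, and this is sound precisely because $\Theta$ is typed in an operation-only context $\Psi$ and $e$ is a (pure) expression, so neither can mention the continuation variable $k$; the same remark covers operation arguments and the loop body $c_1$ of a \texttt{let-fix}, which is typed under the boxed theory $\Psi$.

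First I would dispatch the routine structural cases. For $c = \ret{e}$ the substitution is the identity, and the result is immediate since $e$ is typed without reference to $\Gamma$. For each composite form in which the substitution simply commutes with the constructor—an operation bind $\sappl{\sop{op}{e}}{x'}{c''}$, a non-matching continuation call $\sappl{\sscont{k'}{e_1}{e_2}}{x'}{c''}$ with $k' \neq k$, a $\letboxu{e}{c''}$, a \texttt{let-fix}, and likewise the two handler-building cases \textsc{reth} and \textsc{oph}—one inverts the governing rule, applies the appropriate induction hypothesis to each recursive subterm, and reassembles with the same rule. The only subtlety here is threading weakening (Lemma~\ref{alem:weakening}) through binders: before invoking the induction hypothesis under a binder one must weaken the premise $\jdgmt{\Delta, \tbind{x}{A_1}, \tbind{y}{A_2}}{\Gamma}{\cbind{c'}{A_3}}$ by exactly the variables the constructor introduces (e.g.\ the bound $x'$ in a \textsc{bind}, or the $x$, $z$ and own continuation $k'$ of an \textsc{oph} clause), so that $c'$ remains well typed in the enlarged context.

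The heart of the argument is the matching continuation-call case $c = \sappl{\sscont{k}{e_1}{e_2}}{x'}{c''}$, where all three auxiliary substitution principles meet. Inverting \textsc{bind} and then \textsc{cont} against the binding $\khypbind{k}{A_1}{A_2}{A_3}$ yields $\pjdgmt{\Delta}{e_1}{A_1}$, $\pjdgmt{\Delta}{e_2}{A_2}$, and $\jdgmt{\Delta, \tbind{x'}{A_3}}{\Gamma, \khypbind{k}{A_1}{A_2}{A_3}}{\cbind{c''}{B}}$. Applying the expression substitution principle (Lemma~\ref{alem:subst_expr}) twice to the continuation body gives $\jdgmt{\Delta}{\Gamma}{\cbind{\subst{\subst{c'}{e_1}{x}}{e_2}{y}}{A_3}}$; applying the induction hypothesis of part~1 to the tail $c''$ (after weakening $c'$ by $\tbind{x'}{A_3}$) gives $\jdgmt{\Delta, \tbind{x'}{A_3}}{\Gamma}{\cbind{\substkxy{c''}{c'}{k}}{B}}$; and splicing these two derivations with the monadic substitution principle (Lemma~\ref{alem:substc_snd}) yields exactly $\jdgmt{\Delta}{\Gamma}{\cbind{\substc{\subst{\subst{c'}{e_1}{x}}{e_2}{y}}{x'}{\substkxy{c''}{c'}{k}}}{B}}$, which is the defining right-hand side of continuation substitution on this form.

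The main obstacle is conceptual bookkeeping in this crux case rather than any single hard calculation: one must see that instantiating the continuation body (expression substitution), propagating the substitution through the tail (the part~1 induction hypothesis), and sequencing the two (monadic substitution) compose to give precisely the defining equation, and—crucially in the multi-shot setting, where $k$ may recur inside $c''$—that the inductive call on the tail is exactly what removes those residual occurrences, so that the final term is genuinely typed in $\Gamma$ with $k$ gone. The mutual induction is well-founded because every appeal, whether from part~1 into part~2 or back, is on a strictly smaller subterm, and the whole development hinges on keeping the continuation body $c'$ correctly weakened as one descends under value and continuation binders.
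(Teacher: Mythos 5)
Your proposal is correct and follows essentially the same route as the paper: mutual induction over the two judgements, routine commuting cases discharged by inversion, weakening and the induction hypotheses, and a crux case for the matching continuation call resolved by combining the expression substitution principle (to instantiate the continuation body at $x$ and $y$) with the monadic substitution principle (to splice that body onto the tail via $x'$). The one place you genuinely diverge is inside that crux case. The paper's definition of continuation substitution reads
$\substkxy{(\contappl{k}{e_1}{e_2}{x'}{c''})}{c'}{k} = \substc{\subst{\subst{c'}{e_1}{x}}{e_2}{y}}{x'}{c''}$,
with the tail $c''$ left \emph{untouched}, and its proof correspondingly claims by inversion that $c''$ is typed in $\Gamma$ alone and applies monadic substitution directly. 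You instead produce $\substc{\subst{\subst{c'}{e_1}{x}}{e_2}{y}}{x'}{\substkxy{c''}{c'}{k}}$, first running the part-1 induction hypothesis on the tail. Strictly speaking your final term does not match the paper's displayed defining equation; but your reading is the more defensible one, since inversion of \textsc{bind} only gives $c''$ typed in $\Gamma, \khypbind{k}{A_1}{A_2}{A_3}$, and without the recursive call residual occurrences of $k$ in the tail (which arise exactly in the multi-shot handlers the paper advertises, e.g.\ $\mathit{handlerNDet}$) would survive and the conclusion $\jdgmt{\Delta}{\Gamma}{\cbind{\substkxy{c}{c'}{k}}{B}}$ would fail. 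So where the paper silently drops $k$ from the tail's context, you supply the missing inductive step; the rest of your argument, including the observation that $\Theta$, the state $e$, and operation arguments need no treatment because they are typed in operation-only or expression-only contexts, matches the paper's intent.
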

\begin{proof}
  By mutual induction on the first premise of each branch.
  The interesting case is the continuation statement, where the continuation is
  the one being substituted:
  \begin{description}
    \item[Case $c = \contappl{k}{e_1}{e_2}{x'}{c''}$.]
      The goal is to show $\jdgmt{\Delta}{\Gamma}{\cbind{
          \substc{
            \subst{
              \subst{c'}{e_1}{x}
            }{e_2}{y}
          }{x'}{c''}
        }{B}}$,
      while by inversion on $c$ we assume $\jdgmt{\Delta,
        \tbind{x'}{A_3}}{\Gamma}{\cbind{c''}{B}}$,
      $\pjdgmt{\Delta}{e_1}{A_1}$, and
      $\pjdgmt{\Delta}{e_2}{A_2}$.
      We first derive $\jdgmt{\Delta}{\Gamma}{\cbind{
          \subst{
            \subst{c'}{e_1}{x}
          }{e_2}{y}
        }{A_3}}$ by the substitution principle (Lemma~\ref{alem:subst_expr}).
      Then the goal follows by the monadic substitution principle (Lemma~\ref{alem:substc_snd}).
  \end{description}
\end{proof}

\begin{lemma}[Handling principle]\label{alem:hndl_snd}
  If $\jdgmt{\Delta}{\Psi}{\cbind{c}{A}}$,
  and $\jdgmt{\Delta}{\Gamma}{\hbind{h}{A}{\Psi}{B}{C}}$,
  and $\pjdgmt{\Delta}{e}{B}$,
  then $\jdgmt{\Delta}{\Gamma}{\cbind{\hndl{c}{h}{e}}{C}}$.
\end{lemma}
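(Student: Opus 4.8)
The plan is to prove the statement by structural induction on the derivation of $\jdgmt{\Delta}{\Psi}{\cbind{c}{A}}$, with one case for each clause in the definition of $\hndl{c}{h}{e}$ (Figure~\ref{fig:hndl_def}). A preliminary fact I would establish first, by inversion on $\jdgmt{\Delta}{\Gamma}{\hbind{h}{A}{\Psi}{B}{C}}$ (i.e.\ by peeling off the rules \textsc{reth} and \textsc{oph}), is that $h$ has exactly one return clause $\handbase{x}{z}{c_{\mathit{ret}}}$ typed by $\jdgmt{\Delta, \tbind{x}{A}, \tbind{z}{B}}{\Gamma}{\cbind{c_{\mathit{ret}}}{C}}$, and, for every $\chypbind{\mathit{op}}{A'}{B'}$ in $\Psi$, a matching clause $\handoparr{\mathit{op}}{x'}{k}{z}{c_{\mathit{op}}}$ typed by $\jdgmt{\Delta, \tbind{x'}{A'}, \tbind{z}{B}}{\Gamma, \khypbind{k}{B'}{B}{C}}{\cbind{c_{\mathit{op}}}{C}}$. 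The workhorses of the induction are the previously-established substitution principles: the expression substitution principle (Lemma~\ref{alem:subst_expr}) to plug state and arguments into clauses, the continuation substitution principle (Lemma~\ref{alem:subst_cont_handler_comp}) to install the handled continuation, and weakening (Lemma~\ref{alem:weakening}) to apply the induction hypothesis in enlarged contexts.

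The easy cases are return, let-box, and let-fix. For $c = \ret{e'}$, inversion on \textsc{ret} gives $\pjdgmt{\Delta}{e'}{A}$, and the result $\subst{\subst{c_{\mathit{ret}}}{e'}{x}}{e}{z}$ is typed by two applications of Lemma~\ref{alem:subst_expr} to the return-clause typing, one for $e' : A$ and one for $e : B$. For $c = \letboxu{e'}{c'}$, inversion on $\Box{E}$-\textsc{comp} yields a typing of $c'$ in $\Delta$ extended by the modal hypothesis for $u$; I would weaken $h$ and $e$ into this larger modal context, invoke the induction hypothesis on $c'$, and reassemble with $\Box{E}$-\textsc{comp}. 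The let-fix case recurses into the body analogously.

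The heart of the proof, and the step I expect to be the main obstacle, is the operation case $c = \compappl{\mathit{op}}{e'}{y}{c'}$. Inversion through \textsc{bind} and \textsc{op} gives $\chypbind{\mathit{op}}{A'}{B'} \in \Psi$, $\pjdgmt{\Delta}{e'}{A'}$, and $\jdgmt{\Delta, \tbind{y}{B'}}{\Psi}{\cbind{c'}{A}}$, and the preliminary fact supplies the clause body $c_{\mathit{op}}$. I would then (i) substitute the state $e$ for $z$ and the argument $e'$ for $x'$ into $c_{\mathit{op}}$ by two uses of Lemma~\ref{alem:subst_expr}, obtaining a computation of type $C$ still containing $k$ free at $\khypbind{k}{B'}{B}{C}$; (ii) type the handled continuation $\hndl{c'}{h}{z'}$ by applying the induction hypothesis to $c'$ in the context extended with a fresh state variable $z' : B$, using $z'$ itself (by \textsc{var}) as the state and weakening $h$, yielding a computation binding $y : B'$ and $z' : B$ at type $C$; and (iii) discharge $k$ by the continuation substitution principle (Lemma~\ref{alem:subst_cont_handler_comp}) to land on exactly $\substk{\subst{\subst{c_{\mathit{op}}}{e}{z}}{e'}{x'}}{y}{z'}{\hndl{c'}{h}{z'}}{k}$ at type $C$. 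The delicate part is aligning all the indices ($A'$, $B'$, the state type $B$, the output type $C$) and $\alpha$-renaming the continuation's bound variables so that they match $k$'s input and state types when Lemma~\ref{alem:subst_cont_handler_comp} is applied.

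The remaining case $c = \happl{u}{\Theta}{h'}{e'}{x'}{c'}$ is unusual in that handling does not recurse into a subterm but defers, producing $\happl{u}{\hseqextend{\Theta}{h'}{e'}{x'}{c'}}{h}{e}{x}{\ret{x}}$. Here I would invert the inner statement via \textsc{mvar} to recover the intermediate theory and the types of $\Theta$, $h'$, $e'$, and $c'$, rebuild the extended handling sequence by a single application of \textsc{seqh} (whose four premises are met precisely by these inverted judgments together with $\jdgmt{\Delta, \tbind{x'}{A'}}{\Psi}{\cbind{c'}{A}}$), and finally retype the whole statement with \textsc{mvar}, now using the outer handler $h$ and state $e$ from the lemma's hypotheses, closing with \textsc{bind} for the trailing $\ret{x}$. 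No induction hypothesis is needed, but the case demands careful bookkeeping of the chain of intermediate theories, so that the range of $\Theta$ matches the domain of $h'$ and the range of the extended sequence matches the domain of $h$.
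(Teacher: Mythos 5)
Your proposal is correct and follows essentially the same route as the paper's proof: induction on $c$ with the same case split, two uses of the expression substitution principle plus the continuation substitution principle in the operation case (with the induction hypothesis applied to the continuation under a fresh state variable and weakening), and the \textsc{seqh}/\textsc{mvar}/\textsc{bind} reassembly in the deferred \texttt{handle} case. The only difference is presentational: the paper applies the substitution lemmas first and lets the goal reduce to typing $\hndl{c'}{h}{z'}$, whereas you build that judgment up front, which is the same argument read in the other direction.
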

\begin{proof}
By induction on $c$. The cases are:
\begin{description}
\item[Case $c = \ret{e'}$.] 
  By the definition of handling we need to show
  $\jdgmt{\Delta}{\Gamma}{\cbind{
      \subst{
        \subst{c'}{e'}{x}
      }{e}{z}
    }{C}
  }$
  s.t. $\handbase{x}{z}{c'} \in h$ and $\jdgmt{\Delta, \tbind{x}{A},
    \tbind{z}{B}}{\Gamma}{\cbind{c'}{C}}$.
  By inversion on typing of $c$ we know that
  $\pjdgmt{\Delta}{e'}{A}$.
  Thus we show the goal with the assumption
  $\pjdgmt{\Delta}{e}{B}$ and the substitution principle (Lemma
  \ref{alem:subst_expr}).
\item[Case $c = \compappl{\textit{op}}{e'}{x'}{c'}$.]
  By the definition of handling, we need to show:
  \[
  \jdgmt{\Delta}{\Gamma}{
    \cbind{
      \substk{
        \subst{
          \subst{
            c_{\textit{op}}
          }{e}{z}
        }{e'}{x}
      }{x'}{z'}{
        \hndl{c'}{h}{z'}
      }{k}
    }{C}
  }
  \]
  where $(\handoparr{op}{x}{k}{z}{c_{\textit{op}}}) \in h$ since $\textit{op}$
  is in $\Psi$ by typing.
  Thus by inversion on the typing of $c$, we have
  $\chypbind{\textit{op}}{A_1}{A_2} \in \Psi$ for some $A_1$ and $A_2$,
  $\pjdgmt{\Delta}{e'}{A_1}$, and $\jdgmt{\Delta,
    \tbind{x'}{A_2}}{\Psi}{\cbind{c'}{A}}$.
  Also, by the typing of the handler operation clauses, we know:
  $\jdgmt{
    \Delta, \tbind{x}{A_1}, \tbind{z}{B}
  }{
    \Gamma, \khypbind{k}{A_2}{B}{C}
  }{\cbind{c_{\textit{op}}}{C}}$.

  For substitution of $z$ and $x$ we apply Lemma~\ref{alem:subst_expr}, and
  for $k$ we use Lemma~\ref{alem:subst_cont_handler_comp}.
  The expression variable substitutions immediately follow, but the continuation
  substitution changes the goal to
  $\jdgmt{\Delta, \tbind{x'}{A_2}, \tbind{z'}{B}}{\Gamma}{\cbind{
      \hndl{c'}{h}{z'}
    }{C}}$
  which we show by induction on $c'$ and weakening.
\item[Case $c = \letbox{u}{e'}{c_n}$.]
  The goal is to show
  $\jdgmt{\Delta}{\Gamma}{\cbind{
      \letbox{u}{e'}{\hndl{c_n}{h}{e}}
    }{C}}$
  and it immediately follows by the \textsc{$\Box E$} rule, and induction.
\item[Case $c = \tfix{f}{x'}{A'}{\Psi'}{c_1}{c_2}$]
  follows analogously to the case above.
\item[Case $c = \happl{u}{\Theta}{h'}{e'}{x'}{c'}$.]
  The goal is to prove
  \[
  \jdgmt{\Delta}{\Gamma}{
    \cbind{
      \happl{u}{
        \Theta, \hseqclause{h'}{e'}{x'}{c'}
      }{h}{e}{x}{\ret{x}}
    }{C}
  }
  \]
  and by the inversion on the typing of $c$ we know that
  $\mhypbind{u}{A_1}{\Psi_1} \in \Delta$,
  $\jdgmt{\Delta}{\Psi_2}{\hseqbind{\Theta}{A_1}{\Psi_1}{A_2}}$,
  $\jdgmt{\Delta}{\Psi}{\hbind{h'}{A_2}{\Psi_2}{S_{h'}}}{A_3}$,
  $\pjdgmt{\Delta}{e}{S_{h'}}$, and
  $\jdgmt{\Delta, \tbind{x'}{A_3}}{\Psi}{\cbind{c'}{A}}$ for some types $A_1$,
  $A_2$, $A_3$ and algebraic theories $\Psi_1$, $\Psi_2$.

  Thus we use the \textsc{seqh} rule to add $h'$ into the sequence. We obtain:
  $\jdgmt{\Delta}{\Psi}{\hseqbind{
      \Theta,
      \hseqclause{h'}{e'}{x'}{c'}
    }
  }{A_1}{\Psi_1}{A}$.
  Then with the \textsc{mvar} we get
  $\jdgmt{\Delta}{\Gamma}{
    \cbind{
      \shandle{u}{
        \Theta,
        \hseqclause{h'}{e'}{x'}{c'}
      }{h}{e}
    }{C}
  }$ by our typing assumptions on $h$ and $e$.
  Then the case follows by \textsc{bind} and \textsc{ret}.
\end{description}
\end{proof}

\begin{lemma}[Handling sequencing principle]\label{alem:hndl_seq_snd}
    If $\jdgmt{\Delta}{\Psi}{\cbind{c}{A}}$
    and $\jdgmt{\Delta}{\Psi'}{\hseqbind{\Theta}{A}{\Psi}{B}}$,
    then $\jdgmt{\Delta}{\Psi'}{\cbind{\hndlseq{c}{\Theta}}{B}}$.
\end{lemma}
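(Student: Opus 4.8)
The plan is to prove the statement by induction on the handling sequence $\Theta$ (equivalently, on the derivation of $\jdgmt{\Delta}{\Psi'}{\hseqbind{\Theta}{A}{\Psi}{B}}$), keeping the typing $\jdgmt{\Delta}{\Psi}{\cbind{c}{A}}$ of the handled computation fixed throughout. This is natural because the recursion of $\hndlseq{c}{\Theta}$ in Figure~\ref{fig:hndl_seq_def} follows exactly the two formation rules \textsc{emph} and \textsc{seqh} for $\Theta$, so each induction case lines up with one defining clause. In the base case $\Theta = \hseqbase{}$, inversion on \textsc{emph} yields $B = A$ together with the inclusion $\Psi \subseteq \Psi'$. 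Since $\hndlseq{c}{\hseqbase{}} = c$, the goal reduces to $\jdgmt{\Delta}{\Psi'}{\cbind{c}{A}}$, which I obtain from the hypothesis $\jdgmt{\Delta}{\Psi}{\cbind{c}{A}}$ by repeated application of Weakening (Lemma~\ref{alem:weakening}) to enlarge $\Psi$ into $\Psi'$, as permitted by the inclusion.

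In the inductive case $\Theta = \hseqextend{\Theta'}{h}{e}{x}{c_k}$, the definition unfolds to $\hndlseq{c}{\Theta} = \substc{\hndl{\hndlseq{c}{\Theta'}}{h}{e}}{x}{c_k}$, so I must typecheck this composite term by reassembling it from the previously established principles. Inversion on \textsc{seqh} supplies the intermediate data: an intermediate theory $\Psi''$ and type $B'$ with $\jdgmt{\Delta}{\Psi''}{\hseqbind{\Theta'}{A}{\Psi}{B'}}$, a handler typing $\jdgmt{\Delta}{\Psi'}{\hbind{h}{B'}{\Psi''}{S}{C}}$, an initial state $\pjdgmt{\Delta}{e}{S}$, and a continuation $\jdgmt{\Delta, \tbind{x}{C}}{\Psi'}{\cbind{c_k}{B}}$. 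I first apply the induction hypothesis to $\Theta'$ (with the fixed typing of $c$) to get $\jdgmt{\Delta}{\Psi''}{\cbind{\hndlseq{c}{\Theta'}}{B'}}$. Feeding this, the handler $h$, and the state $e$ into the Handling principle (Lemma~\ref{alem:hndl_snd}) produces $\jdgmt{\Delta}{\Psi'}{\cbind{\hndl{\hndlseq{c}{\Theta'}}{h}{e}}{C}}$. Finally, composing this result with the continuation $c_k$ via the Monadic substitution principle (Lemma~\ref{alem:substc_snd}) gives exactly $\jdgmt{\Delta}{\Psi'}{\cbind{\substc{\hndl{\hndlseq{c}{\Theta'}}{h}{e}}{x}{c_k}}{B}}$, which is the goal.

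Since the argument merely chains three already-proved results, I do not expect a deep obstacle; the proof is primarily a matter of bookkeeping. The one point requiring care is threading the intermediate theory $\Psi''$ and the intermediate types $B'$ and $C$ correctly: the handling-sequence judgement deliberately types $\Theta'$ in a possibly-smaller theory $\Psi''$, and I must verify that this $\Psi''$ is precisely the domain theory that the Handling principle expects of $h$, and that $h$'s output type $C$ is exactly the type at which $x$ is bound in $c_k$. These matchings are what \textsc{seqh} records by construction, so inversion delivers them directly and no side conditions beyond the inclusion in the base case arise.
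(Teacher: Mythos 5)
Your proof is correct and follows essentially the same route as the paper's: induction on the handling-sequence derivation, with weakening closing the \textsc{emph} base case and the inductive case assembled from the induction hypothesis, the Handling principle, and the Monadic substitution principle. The only difference is presentational --- you chain the lemmas forward from the hypothesis while the paper peels them off the goal backward --- which does not change the argument.
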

\begin{proof}
  By induction on $\jdgmt{\Delta}{\Psi'}{\hseqbind{\Theta}{A}{\Psi}{B}}$.
  The cases are:
  \begin{description}
  \item[Case $\Theta = \hseqbase$.]
    The goal $\jdgmt{\Delta}{\Psi'}{\cbind{c}{B}}$ follows from the typing of
    $\Theta$ and our assumptions.
    By the rule \textsc{emph} the input and output types of $\Theta$ must much,
    and thus $B = A$.
    From it, we also know that $\Psi \subseteq \Psi'$ and thus we apply
    weakening to finish the goal.
  \item[Case $\Theta = \Theta', \hseqclause{h}{e}{x'}{c'}$.]  By the
    typing of $\Theta$, we know:
    $\jdgmt{\Delta}{\Psi_h}{\hseqbind{\Theta'}{A}{\Psi}{A'}}$,
    $\jdgmt{\Delta}{\Psi'}{\hbind{h}{A'}{\Psi_h}{S}{A''}}$,
    $\pjdgmt{\Delta}{e}{S}$, and $\jdgmt{\Delta,
      \tbind{x'}{A''}}{\Psi'}{\cbind{c'}{B}}$.  The goal is to show
    $\jdgmt{\Delta}{\Psi'}{ \cbind{ \substc{ \hndl{
            \hndlseq{c}{\Theta'} }{h}{e} }{x'}{ c' } }{B} }$.  First,
    we apply the monadic substitution principle (Lemma
    \ref{alem:substc_snd}) and use our assumption on the typing of
    $c'$.  Then, the goal becomes $\jdgmt{\Delta}{\Psi'}{ \cbind{
        \hndl{ \hndlseq{c}{\Theta'} }{h}{e} }{A''} }$.  Thus, by the
    handling principle (Lemma \ref{alem:hndl_snd}) and our
    assumptions, the goal changes to
    $\jdgmt{\Delta}{\Psi_h}{\cbind{\hndlseq{c}{\Theta'}}{A'}}$ and
    follows by induction.
  \end{description}
\end{proof}

\begin{lemma}[Eval principle]\label{alem:eval_snd}
  If $\jdgmt{\Delta}{\ectxsym}{\cbind{c}{A}}$, then
  $\pjdgmt{\Delta}{\heval{c}}{A}$.
\end{lemma}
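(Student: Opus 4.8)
The plan is to proceed by induction on the structure of $c$ (equivalently, on the derivation of $\jdgmt{\Delta}{\ectxsym}{\cbind{c}{A}}$). The first thing I would establish is a structural observation that makes $\heval{\cdot}$ total on exactly the terms at hand: since the effect context is empty, the rules \textsc{op} and \textsc{cont} are inapplicable (there is no operation or continuation variable in $\ectxsym$ to invoke), so the only statement that can occur is a handling $\shandle{u}{\Theta}{h}{e}$, typed by \textsc{mvar}. Hence a well-typed $c$ in the empty context must have one of the four shapes $\ret{e}$, $\sappl{\shandle{u}{\Theta}{h}{e}}{x}{c'}$, $\letbox{u}{e}{c'}$, or $\tfix{f}{x}{A'}{\Psi}{c_1}{c_2}$, which are precisely the four defining clauses of $\heval{\cdot}$.

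Three of these cases are routine. For $c = \ret{e}$, inversion on \textsc{ret} gives $\pjdgmt{\Delta}{e}{A}$, and since $\heval{\ret{e}} = e$ the goal is immediate. For $c = \letbox{u}{e}{c'}$, inversion on \textsc{$\Box E$-comp} supplies $\pjdgmt{\Delta}{e}{\boxt{\Psi}{A'}}$ together with $\jdgmt{\Delta, \mhypbind{u}{A'}{\Psi}}{\ectxsym}{\cbind{c'}{A}}$; applying the induction hypothesis to the latter and recombining via the expression rule \textsc{$\Box E$} yields $\pjdgmt{\Delta}{\letbox{u}{e}{\heval{c'}}}{A}$, which is exactly $\heval{c}$. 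The \texttt{let-fix} case is analogous, using inversion on \textsc{fix-comp}, the induction hypothesis on the continuation body $c_2$, and the expression rule \textsc{fix}.

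The main obstacle is the handling case $c = \sappl{\shandle{u}{\Theta}{h}{e}}{x}{c'}$, where $\heval{c} = \teval{\Theta, \hseqclause{h}{e}{x}{c'}}{u}$. Here I would first invert \textsc{bind} to obtain an intermediate type $B$ with $\jdgmt{\Delta}{\ectxsym}{\sbind{\shandle{u}{\Theta}{h}{e}}{B}}$ and $\jdgmt{\Delta, \tbind{x}{B}}{\ectxsym}{\cbind{c'}{A}}$, and then invert \textsc{mvar} on the statement to expose $\mhypbind{u}{A'}{\Psi} \in \Delta$, a handling sequence $\jdgmt{\Delta}{\Psi'}{\hseqbind{\Theta}{A'}{\Psi}{B'}}$, a handler $\jdgmt{\Delta}{\ectxsym}{\hbind{h}{B'}{\Psi'}{S}{B}}$, and $\pjdgmt{\Delta}{e}{S}$. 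The crux is that these fragments are precisely the four premises of \textsc{seqh} specialized to ambient theory $\ectxsym$, so they assemble into $\jdgmt{\Delta}{\ectxsym}{\hseqbind{\Theta, \hseqclause{h}{e}{x}{c'}}{A'}{\Psi}{A}}$; that is, the extended sequence typechecks in the \emph{empty} effect context, which is exactly the side condition demanded by \textsc{eval}. Discharging \textsc{eval} with this sequence and the hypothesis $\mhypbind{u}{A'}{\Psi} \in \Delta$ then delivers $\pjdgmt{\Delta}{\teval{\Theta, \hseqclause{h}{e}{x}{c'}}{u}}{A}$. No induction hypothesis is consumed in this case; the only real care needed is the bookkeeping of the several intermediate types when matching the inversions against the premises of \textsc{seqh}.
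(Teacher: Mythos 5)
Your proposal is correct and follows essentially the same route as the paper's proof: induction on $c$, ruling out bare \textsc{op} and \textsc{cont} statements via the empty effect context, dispatching $\ret{e}$, \texttt{let-box}, and \texttt{let-fix} by inversion plus the induction hypothesis, and handling the $\shandle{u}{\Theta}{h}{e}$ case by reassembling the inverted \textsc{mvar} premises with \textsc{seqh} and discharging \textsc{eval}. Your explicit remark that the four surviving term shapes coincide with the defining clauses of $\heval{\cdot}$, and that the handling case consumes no induction hypothesis, are accurate refinements of what the paper leaves implicit.
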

\begin{proof}
  By induction on $c$.
  \begin{description}
  \item[Case $c = \ret{e}$.]
    The goal is to show $\pjdgmt{\Delta}{e}{A}$ and it follows by
    inversion on the typing of $\ret{e}$ and weakening of $\ectxsym$ into
    $\Gamma$.
  \item[Case $c = \compappl{\textit{op}}{e}{x}{c'}$]
    contradicts typing as the context of effects is empty.
  \item[Case $c = \contappl{k}{e_1}{e_2}{x}{c'}$]
    also contradicts typing.
  \item[Case $c = \sappl{\shandle{u}{\Theta}{h}{e}}{x}{c}$.]
    The goal is to show $\jdgmt{\Delta}{\Gamma}{\tbind{
        \teval{\Theta, \hseqclause{h}{e}{x}{c}}{u}
      }{A}
    }$, while we assume $\mhypbind{u}{A'}{\Psi'} \in \Delta$,
    $\jdgmt{\Delta}{\ectxsym}{\hbind{h}{A'}{\Psi'}{S}{A''}}$,
    $\pjdgmt{\Delta}{e}{S}$, and
    $\jdgmt{\Delta, \tbind{x}{A''}}{\ectxsym}{\cbind{c}{A}}$.
    By the rule \textsc{seqh}, we obtain $\jdgmt{\Delta}{\ectxsym}{
      \hseqbind{
        \Theta, \hseqclause{h}{e}{x}{c}
      }{A'}{\Psi'}{A}
    }$ and this, together with the typing of $u$, is sufficient
    as a premise to the rule \textsc{eval} to show the goal.
  \item[Case $c = \letbox{u}{e}{c}$.]
    The goal is to show $\pjdgmt{\Delta}{\letbox{u}{e}{\heval{c}}}{A}$ and the
    goal follows with the rules $\Box E$, $\Box E$-comp, weakening, and
    induction.
  \item[Case $c = \tfix{f}{x}{A'}{\Psi}{c_1}{c_2}$.]
    Similarly to the case above, this case follows by the rules \textsc{fix},
    \textsc{fix-comp}, weakening, and induction.
  \end{description}
\end{proof}

\begin{lemma}[Modal substitution principle]\label{alem:substm_snd}
  If $\jdgmt{\Delta}{\Psi}{\cbind{c}{A}}$
  and $\jdgmt{\Delta, \mhypbind{u}{A}{\Psi}}{\Gamma}{t : J_{\mathit{term}}}$,
  then $\jdgmt{\Delta}{\Gamma}{\substm{t}{\Psi}{c}{u} : J_{\mathit{term}}}$.
\end{lemma}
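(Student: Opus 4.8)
The plan is to proceed by mutual induction on the derivation of the second premise $\jdgmt{\Delta, \mhypbind{u}{A}{\Psi}}{\Gamma}{t : J_{\mathit{term}}}$, with one case per syntactic form of $t$ across all five categories (expressions, computations, statements, handlers, handling sequences), mirroring the clause-by-clause definition of modal substitution in Appendix~\ref{sec:sub_ctx_var}. The overwhelming majority of cases are congruences: modal substitution simply pushes through the term constructor, so after inverting the typing rule and applying the induction hypotheses to the immediate subterms, the goal follows by reassembling with the same rule. Whenever the inverted rule extends $\Delta$ with a fresh binding (e.g.\ \textsc{bind}, which adds $\tbind{x}{A'}$ when descending into the continuation, or the binders introduced by \textsc{reth}, \textsc{oph}, and \textsc{seqh}), I will first weaken the typing of $c$ via Lemma~\ref{alem:weakening} so that the induction hypothesis still applies to $c$ under the larger modal context; by the convention that bound variables are $\alpha$-renamed apart, this causes no clashes.

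The one genuinely interesting case is the computation $\happl{u}{\Theta}{h}{e}{x'}{c'}$ in which the handled modal variable is exactly the $u$ being substituted. Here modal substitution is not a congruence: by its defining clause it produces
\[
\substc{\hndl{\hndlseq{c}{\substm{\Theta}{\Psi}{c}{u}}}{\substm{h}{\Psi}{c}{u}}{\substm{e}{\Psi}{c}{u}}}{x'}{\substm{c'}{\Psi}{c}{u}},
\]
i.e.\ it actually \emph{runs} the handling sequence and handler on $c$ and then monadically sequences the outcome with the substituted continuation. For this case I will invert the \textsc{bind} and \textsc{mvar} rules to obtain $\jdgmt{\Delta, \mhypbind{u}{A}{\Psi}}{\Psi'}{\hseqbind{\Theta}{A}{\Psi}{B}}$, $\jdgmt{\Delta, \mhypbind{u}{A}{\Psi}}{\Gamma}{\hbind{h}{B}{\Psi'}{S}{C}}$, $\pjdgmt{\Delta, \mhypbind{u}{A}{\Psi}}{e}{S}$, and $\jdgmt{\Delta, \mhypbind{u}{A}{\Psi}, \tbind{x'}{C}}{\Gamma}{\cbind{c'}{D}}$ for the overall type $D$. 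Applying the induction hypotheses yields the same judgments with the substituted terms and with $u$ removed from $\Delta$.

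The result then assembles from the already-established auxiliary principles, exactly in the order in which the operations are nested. Since the premise gives $\jdgmt{\Delta}{\Psi}{\cbind{c}{A}}$, the handling sequencing principle (Lemma~\ref{alem:hndl_seq_snd}) applied with $\substm{\Theta}{\Psi}{c}{u}$ delivers $\jdgmt{\Delta}{\Psi'}{\cbind{\hndlseq{c}{\substm{\Theta}{\Psi}{c}{u}}}{B}}$; the handling principle (Lemma~\ref{alem:hndl_snd}), with the substituted handler and state, then produces a computation of type $C$ in $\Gamma$; and finally the monadic substitution principle (Lemma~\ref{alem:substc_snd}), using the substituted $c'$ at type $D$ under $\tbind{x'}{C}$, gives the desired type $D$. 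The case of a \emph{different} modal variable $u'$ in a handling statement is by contrast an ordinary congruence, closed by \textsc{mvar} and the induction hypotheses, and with the extension of Section~\ref{sec:extensions} the $\teval{\Theta}{u}$ clause is handled in the same spirit, chaining the handling sequencing principle with the eval principle (Lemma~\ref{alem:eval_snd}). I expect the main obstacle to be bookkeeping in this characteristic case: one must keep the intermediate theory $\Psi'$ and the types $B$, $C$, $D$, $S$ aligned so that the three principles compose, while respecting the discipline that handling sequences live over operation-only theories whereas handlers may also mention continuations.
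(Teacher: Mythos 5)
Your proposal is correct and follows essentially the same route as the paper's proof: mutual induction on the second premise, congruence cases closed by inversion and reassembly (with weakening for extended contexts), and the characteristic $\shandle{u}{\Theta}{h}{e}$ case discharged by composing the handling sequencing principle, the handling principle, and the monadic substitution principle, with the \texttt{eval} case chaining the sequencing and eval principles. The only difference is presentational — you build the result forward from $c$ while the paper reduces the goal backward — but the lemmas invoked and the type bookkeeping are the same.
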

\begin{proof}
  By mutual induction on the expansions of the second premise.
  The interesting cases are the substitutions for $u$ in \texttt{handle} and
  \texttt{eval}:
  \begin{description}
  \item[Case $\jdgmt{\Delta,
      \mhypbind{u}{A}{\Psi}}{\Gamma}{\cbind{\happl{u}{\Theta}{h}{e}{x'}{c'}}{B}}$.]
    According to the definition of the modal substitution, our goal is
    to show $\jdgmt{\Delta}{\Gamma}{ \cbind{ \substc{ \hndl{
            \hndlseq{c}{\substm{\Theta}{\Psi}{c}{u}}
          }{\substm{h}{\Psi}{c}{u}}{\substm{e}{\Psi}{c}{u}}
        }{x'}{(\substm{c'}{\Psi}{c}{u})} }{B} }$, and we have the
    assumptions $\jdgmt{\Delta,
      \mhypbind{u}{A}{\Psi}}{\Psi'}{\hseqbind{\Theta}{A}{\Psi}{A'}}$,
    $\jdgmt{\Delta,
      \mhypbind{u}{A}{\Psi}}{\Gamma}{\hbind{h}{A'}{\Psi'}{S}{A''}}$,
    $\pjdgmt{\Delta, \mhypbind{u}{A}{\Psi}}{e}{S}$, and
    $\jdgmt{\Delta, \mhypbind{u}{A}{\Psi},
      \tbind{x'}{A''}}{\Gamma}{\cbind{c'}{B}}$.  Then, we apply the
    monadic substitution principle (Lemma \ref{alem:substc_snd}) with
    induction on our assumption on the typing of $c'$.  The goal
    becomes $\jdgmt{\Delta}{\Gamma}{ \cbind{ \hndl{
          \hndlseq{c}{\substm{\Theta}{\Psi}{c}{u}}
        }{\substm{h}{\Psi}{c}{u}}{\substm{e}{\Psi}{c}{u}} }{A''} }$.
    By the handling principle (Lemma \ref{alem:hndl_snd}), and induction over the
    typing of $h$ and $e$, the goal changes to
    $\jdgmt{\Delta}{\Psi}{\cbind{\hndlseq{c}{\substm{\Theta}{\Psi}{c}{u}}}{A'}}$
    and follows by the theorem \ref{alem:hndl_seq_snd} and induction
    over $\Theta$.
  \item[Case $\pjdgmt{\Delta, \mhypbind{u}{A}{\Psi}}{\teval{\Theta}{u}}{B}$.]
    The goal is to show $\pjdgmt{\Delta}{
        \heval{\hndlseq{c}{\substm{\Theta}{\Psi}{c}{u}}}
    }{B}$, while we assume
    $\jdgmt{\Delta, \mhypbind{u}{A}{\Psi}}{\ectxsym{}}{
      \hseqbind{\Theta}{A}{\Psi}{B}
    }$.
    By induction on $\Theta$, we obtain $\jdgmt{\Delta}{\ectxsym{}}{
      \hseqbind{\substm{\Theta}{\Psi}{c}{u}}{A}{\Psi}{B}
    }$ and we use this as a premise to the handling sequencing principle lemma
    \ref{alem:hndl_seq_snd} to obtain $\jdgmt{\Delta}{\ectxsym{}}{
      \cbind{\hndlseq{c}{\substm{\Theta}{\Psi}{c}{u}}}{B}
    }$.
    Finally, we use this assumption as a premise to the eval principle lemma
    \ref{alem:eval_snd} to show the goal.
  \end{description}
\end{proof}

\begin{lemma}[Identity handler]\label{alem:id_handler_snd}
  $\jdgmt{\Delta}{\Psi}{\hbind{\handid{\Psi}}{A}{\Psi}{\unitt{}}{A}}$.
\end{lemma}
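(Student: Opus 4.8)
The plan is to proceed by induction on the structure of the algebraic theory $\Psi$, mirroring both the meta-level recursion that defines $\handid{\Psi}$ and the inductive shape of the handler-typing rules \textsc{reth} and \textsc{oph}. Since $\handid{\Psi}$ is built by attaching one operation clause per operation of $\Psi$ on top of a fixed \textit{return} clause, the induction on $\Psi$ lines up exactly with the derivation I want to construct.

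For the base case $\Psi = \ectxsym$, the identity handler is just the \textit{return} clause $\handbase{x}{z}{\ret{x}}$, and here the effect context is $\Gamma = \Psi = \ectxsym$. I would apply \textsc{reth} with state type $\unitt$, which reduces the goal to $\jdgmt{\Delta, \tbind{x}{A}, \tbind{z}{\unitt}}{\ectxsym}{\cbind{\ret{x}}{A}}$; this follows immediately from \textsc{ret} together with \textsc{var}, since $\tbind{x}{A}$ is in the modal context.

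For the inductive case $\Psi = \Psi_0, \chypbind{op}{A'}{B'}$, the handler decomposes as $\hextend{\handid{\Psi_0}}{\handoparr{op}{x}{k}{z}{\sappl{\sop{op}{x}}{y}{\sscont{k}{y}{z}}}}$, so I would apply \textsc{oph}, which yields two premises. The first asks for the sub-handler to be typed as $\hbind{\handid{\Psi_0}}{A}{\Psi_0}{\unitt}{A}$; the induction hypothesis supplies exactly this, except that it types the sub-handler in effect context $\Psi_0$ whereas \textsc{oph} demands the full context $\Psi$, so I bridge the gap by weakening the effect context with $\chypbind{op}{A'}{B'}$ using Lemma~\ref{alem:weakening}. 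The second premise requires typing the new clause body under $\Delta, \tbind{x}{A'}, \tbind{z}{\unitt}$ and $\Psi, \khypbind{k}{B'}{\unitt}{A}$. After expanding the syntactic sugar whereby a trailing statement abbreviates a bind into $\ret{}$, I would type the body by \textsc{bind}: the statement $\sop{op}{x}$ receives type $B'$ by \textsc{op} (since $op$ is in $\Psi$ and $\tbind{x}{A'}$ is available), and the continuation call $\sscont{k}{y}{z}$ receives type $A$ by \textsc{cont} (since $\khypbind{k}{B'}{\unitt}{A}$ is in the effect context, $\tbind{y}{B'}$ from the bind, and $\tbind{z}{\unitt}$), with the result finally returned via \textsc{ret}.

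I do not expect a genuine conceptual obstacle, as every step is a direct rule application. The only points that demand care are purely bookkeeping: (i) remembering to weaken the effect context so that the induction-hypothesis derivation for $\handid{\Psi_0}$ fits the context expected by \textsc{oph}, and (ii) correctly unfolding the statement-terminated computation $\sappl{\sop{op}{x}}{y}{\sscont{k}{y}{z}}$ into its desugared bind-then-$\ret{}$ form before matching it against the computation judgment. Neither is a real difficulty, but overlooking the weakening step is the most likely place to get stuck.
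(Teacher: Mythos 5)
Your proof is correct and follows essentially the same route as the paper's: induction on $\Psi$, with \textsc{reth} and \textsc{ret} discharging the base case, and \textsc{oph} in the inductive step with the sub-handler premise obtained by weakening the induction hypothesis and the clause body typed via \textsc{bind}, \textsc{op}, \textsc{cont}, and \textsc{ret}. The weakening step you flag as the likely pitfall is exactly the one the paper also invokes.
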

\begin{proof}
  By induction on $\Psi$.
  \begin{description}
  \item[Case $\Psi = \ectxsym{}$.]
    The goal is to show $\jdgmt{\Delta}{\ectxsym{}}{\hbind{
        \handbase{x}{z}{\ret{x}}
    }{A}{\ectxsym{}}{\unitt{}}{A}}$.
    The goal immediately follows by the \textsc{reth} and \textsc{ret} rules.
  \item[Case $\Psi = \Psi', \chypbind{op}{B}{B'}$.]
    The goal is to show $\jdgmt{\Delta}{\Psi', \chypbind{op}{B}{B'}}{
      \hbind{
        \handid{\Psi'}, \handoparr{\textit{op}}{x}{k}{z}{
          \sappl{
            \sop{\textit{op}}{x}
          }{y}{\sscont{k}{y}{z}}
        }
      }{A}{\Psi', \chypbind{op}{B}{B'}}{\unitt{}}{A}
    }$.
    By the rule \textsc{oph}, we have to show
    $\jdgmt{\Delta}{\Psi',
      \chypbind{\textit{op}}{B}{B'}}{\hbind{\handid{\Psi'}}{A}{\Psi'}{\unitt{}}{A}}$
    which follows by weakening on the inductive hypothesis.
    Then, we also must show $\jdgmt{\Delta, \tbind{x}{B},
      \tbind{z}{\unitt{}}}{\Psi', \chypbind{\textit{op}}{B}{B'},
      \khypbind{k}{B'}{\unitt{}}{A}}{
      \cbind{
        \sappl{
          \sop{\textit{op}}{x}
        }{y}{\sscont{k}{y}{z}}
      }{A}
    }$ which follows by rules \textsc{bind}, \textsc{op}, \textsc{cont}, and
    \textsc{ret}.
  \end{description}
\end{proof}

\subsection{Main Theorems}

\begin{theorem}[Local soundness]\label{alem:beta_red}
If $\jdgmt{\Delta}{\Psi}{\cbind{c}{A}}$, then the following
$\beta$-reductions are well typed.
\begin{enumerate}
\item If $\jdgmt{\Delta}{\Gamma}{\cbind{\letboxu{\tbox{\Psi}{c}}{c'}}{B}}$,
  then $\jdgmt{\Delta}{\Gamma}{\cbind{\substm{c'}{\Psi}{c}{u}}{B}}$.
\item If $\pjdgmt{\Delta}{\letboxu{\tbox{\Psi}{c}}{e'}}{B}$,
  then $\pjdgmt{\Delta}{\substm{e'}{\Psi}{c}{u}}{B}$.
\end{enumerate}
\end{theorem}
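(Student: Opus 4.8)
The plan is to derive local soundness as an almost immediate corollary of the Modal substitution principle (Lemma~\ref{lem:substm_snd}), which already carries all the real work. For each part I would first invert the typing derivation of the \texttt{let-box} term to expose the typing of its body in the extended modal context, and then instantiate the modal substitution principle. Since the global premise $\jdgmt{\Delta}{\Psi}{\cbind{c}{A}}$ supplies one of the two hypotheses of that lemma directly, the only task is to recover the other by inversion.

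For part (1), I would start from $\jdgmt{\Delta}{\Gamma}{\cbind{\letboxu{\tbox{\Psi}{c}}{c'}}{B}}$. The sole rule typing a \texttt{let-box} at the computation level is $\Box E$-\textsc{comp}, so inversion yields $\pjdgmt{\Delta}{\tbox{\Psi}{c}}{\boxt{\Psi}{A}}$ together with $\jdgmt{\Delta, \mhypbind{u}{A}{\Psi}}{\Gamma}{\cbind{c'}{B}}$. Now both hypotheses of Lemma~\ref{lem:substm_snd} are in place: the global premise gives $\jdgmt{\Delta}{\Psi}{\cbind{c}{A}}$, and inversion gives the body $c'$ typed under $\Delta, \mhypbind{u}{A}{\Psi}$. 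Instantiating the modal substitution principle with the computation judgement for $t = c'$ delivers exactly $\jdgmt{\Delta}{\Gamma}{\cbind{\substm{c'}{\Psi}{c}{u}}{B}}$, which is the goal.

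Part (2) proceeds identically, except that the \texttt{let-box} is an expression, so inversion uses the expression-level rule $\Box E$, giving $\pjdgmt{\Delta}{\tbox{\Psi}{c}}{\boxt{\Psi}{A}}$ and $\pjdgmt{\Delta, \mhypbind{u}{A}{\Psi}}{e'}{B}$. Applying Lemma~\ref{lem:substm_snd} with the expression judgement for $t = e'$ yields $\pjdgmt{\Delta}{\substm{e'}{\Psi}{c}{u}}{B}$.

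The one step requiring a moment's care is the inversion on $\tbox{\Psi}{c}$: I must confirm that the type and theory recovered through $\Box E$(-\textsc{comp}) and then through $\Box I$ agree with the $A$ and $\Psi$ fixed in the theorem's premise. This is not really an obstacle, since the boxed theory $\Psi$ is syntactically present in the term and the typing of $c$ under $\Psi$ is determined; but it is precisely the place where the global premise $\jdgmt{\Delta}{\Psi}{\cbind{c}{A}}$ is genuinely used, namely to pin down the type $A$ at which the modal variable $u$ is introduced. Everything else is bookkeeping, and the substantive mathematical content lives entirely in Lemma~\ref{lem:substm_snd}, whose proof handles the interesting \texttt{handle} and \texttt{eval} cases by appealing in turn to the handling and handling-sequencing principles.
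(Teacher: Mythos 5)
Your proposal matches the paper's proof exactly: the paper also proves this theorem by inverting the \texttt{let-box} typing derivation and then applying the modal substitution principle (Lemma~\ref{lem:substm_snd}). Your extra remarks on reconciling the $A$ and $\Psi$ recovered by inversion with those in the global premise are a sensible elaboration of what the paper leaves implicit.
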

\begin{proof}
  Immediately follows from inversion on the \texttt{let-box} premises and Lemma
  \ref{alem:substm_snd}.
\end{proof}

\begin{theorem}[Local completeness]\label{alem:eta_box_snd}
  If $\pjdgmt{\Delta}{e}{\boxt{\Psi}{A}}$,
  then the $\eta$-expansion of $e$ is well-typed, i.e.
  $\pjdgmt{\Delta}{
    \letbox{u}{e}{
      \tbox{\Psi}{
        \shandlesimp{u}{\handid{\Psi}}{\tunit{}}
      }
    }
  }{\boxt{\Psi}{A}}$.
\end{theorem}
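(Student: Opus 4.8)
The plan is to construct the typing derivation directly, bottom-up; in fact this $\eta$-expansion is precisely derivation (10) from Section~\ref{sec:typing} specialized to the case $\Psi' = \Psi$, so once we observe that the inclusion $\Psi \subseteq \Psi$ holds by reflexivity, the result is essentially a corollary of that construction. Since the $\eta$-expanded term is a \texttt{let-box}, I first apply the \textsc{$\Box E$} rule with result type $B = \boxt{\Psi}{A}$. Its left premise $\pjdgmt{\Delta}{e}{\boxt{\Psi}{A}}$ is exactly the hypothesis of the theorem, so that branch is discharged immediately. This leaves the right premise
\[
\pjdgmt{\Delta, \mhypbind{u}{A}{\Psi}}{\tbox{\Psi}{\shandlesimp{u}{\handid{\Psi}}{\tunit{}}}}{\boxt{\Psi}{A}},
\]
i.e., that the boxed body typechecks in the modal context extended by $\mhypbind{u}{A}{\Psi}$.

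For that premise I apply \textsc{$\Box I$}, which reduces the goal to typing the boxed computation
\[
\jdgmt{\Delta, \mhypbind{u}{A}{\Psi}}{\Psi}{\cbind{\shandlesimp{u}{\handid{\Psi}}{\tunit{}}}{A}},
\]
where the effect context is exactly $\Psi$, matching the theory carried by $u$. Because the handling term is a bare statement, I expand the standard sugar $s \equiv (\sappl{s}{x}{\ret{x}})$ and discharge the wrapping with \textsc{bind} and \textsc{ret} (the residual $\ret{x}$ typechecks by \textsc{ret} and \textsc{var}); this pushes the real content onto typing the statement $\shandle{u}{\hseqbase}{\handid{\Psi}}{\tunit{}}$ via the \textsc{mvar} rule.

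Applying \textsc{mvar} then generates four obligations, each of which matches a previously established fact once the theory indices are aligned uniformly at $\Psi$: (i) $\mhypbind{u}{A}{\Psi}$ is in the modal context, which holds by construction; (ii) the empty handling sequence carries $\Psi$ to $\Psi$, i.e. $\jdgmt{\Delta, \mhypbind{u}{A}{\Psi}}{\Psi}{\hseqbind{\hseqbase}{A}{\Psi}{A}}$, which follows from \textsc{emph} using the reflexive inclusion $\Psi \subseteq \Psi$; (iii) the identity handler is well typed, $\jdgmt{\Delta, \mhypbind{u}{A}{\Psi}}{\Psi}{\hbind{\handid{\Psi}}{A}{\Psi}{\unitt{}}{A}}$, which is the Identity handler lemma (Lemma~\ref{alem:id_handler_snd}) instantiated at the extended modal context; and (iv) the initial state is well typed, $\pjdgmt{\Delta, \mhypbind{u}{A}{\Psi}}{\tunit{}}{\unitt{}}$, since $\tunit{}$ inhabits $\unitt{}$.

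I do not expect a genuine obstacle: there is no induction and no appeal to a substitution principle, only assembly of existing rules together with Lemma~\ref{alem:id_handler_snd}. The single point demanding care is the \emph{bookkeeping of theory indices} in \textsc{mvar}, \textsc{emph}, and the handler judgment, namely keeping straight the theory annotating $u$, the domain and range theories of the handling sequence, and the domain theory of the handler. All of these collapse to $\Psi$ precisely because we handle with $\handid{\Psi}$ and an empty sequence, and because $\subseteq$ is reflexive; making that collapse explicit is effectively the whole content of the proof.
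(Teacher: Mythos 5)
Your proposal is correct and follows essentially the same route as the paper's proof: apply \textsc{$\Box E$}, then \textsc{$\Box I$}, desugar the bare statement and discharge the wrapper with \textsc{bind} and \textsc{ret}, and finish with \textsc{mvar} together with the identity handler lemma. Your added detail on the four \textsc{mvar} obligations (in particular typing the empty handling sequence via \textsc{emph} with the reflexive inclusion $\Psi \subseteq \Psi$) is just a more explicit rendering of what the paper leaves implicit.
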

\begin{proof}
  By the \textsc{$\Box E$} rule and our assumption on the typing of $e$, we need
  to show
  $\pjdgmt{\Delta, \mhypbind{u}{A}{\Psi}}{
    \tbox{\Psi}{
      \sappl{
        \shandlesimp{u}{\handid{\Psi}}{\tunit{}}
      }{x}{\ret{x}}
    }
  }{\boxt{\Psi}{A}}$.
  Thus, by the \textsc{$\Box I$} rule, we need to show
  $\pjdgmt{\Delta, \mhypbind{u}{A}{\Psi}}{
    \sappl{
      \shandlesimp{u}{\handid{\Psi}}{\tunit{}}
    }{x}{\ret{x}}
  }{A}$.
  By the rules \textsc{ret} and \textsc{bind}, it suffices to show
  $\jdgmt{\Delta, \mhypbind{u}{A}{\Psi}}{\Psi}{
    \sbind{
      \shandlesimp{u}{\handid{\Psi}}{\tunit{}}
    }{A}
  }$.
  The goal follows by the rule \textsc{mvar} and the identity handler lemma
  \ref{alem:id_handler_snd}.
\end{proof}

\begin{theorem}[Preservation on expressions]\label{athm:preservation_exp}
  If $\vdash {\tbind{e}{A}}$ and $e \reduces{} e'$, then
  $\vdash {\tbind{e'}{A}}$.
\end{theorem}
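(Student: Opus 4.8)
The plan is to prove the statement by rule induction on the reduction derivation $e \reduces{} e'$, using inversion on the expression typing $\vdash \tbind{e}{A}$ at each step. Since $e$ is closed and effect-free, the modal context is empty throughout, so the only applicable reduction rules from Figure~\ref{fig:op_sem} are the two application congruences, the function $\beta$-rule, the \texttt{let-box} congruence, and the \texttt{let-box} $\beta$-rule. The substantive technical content has already been discharged by the substitution principles and by local soundness, so the bulk of this proof is routine case analysis, with the two $\beta$-cases reducing to previously established lemmas.

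First I would handle the three congruence rules. For $\redrel{\appl{e_1}{e_2}}{\appl{e_1'}{e_2}}$, inversion on \textsc{${\rightarrow}E$} gives $\vdash \tbind{e_1}{\funt{A'}{A}}$ and $\vdash \tbind{e_2}{A'}$ for some $A'$; the induction hypothesis applied to the subreduction yields $\vdash \tbind{e_1'}{\funt{A'}{A}}$, and re-applying \textsc{${\rightarrow}E$} retypes the contractum. The case $\redrel{\appl{v}{e_2}}{\appl{v}{e_2'}}$ is symmetric, preserving the type of the value $v$ obtained from inversion and stepping $e_2$ by the induction hypothesis. For the \texttt{let-box} congruence $\redrel{\letbox{u}{e_1}{e_2}}{\letbox{u}{e_1'}{e_2}}$, inversion on \textsc{$\Box E$} gives $\vdash \tbind{e_1}{\boxt{\Psi}{A'}}$ together with $\jdgmt{\mhypbind{u}{A'}{\Psi}}{}{\tbind{e_2}{A}}$; the induction hypothesis preserves the type of $e_1$, and \textsc{$\Box E$} reassembles the term.

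The two remaining cases are the genuine $\beta$-reductions, and each is essentially already proved. For $\redrel{\appl{(\lam{x}{A'}{e_0})}{v}}{\subst{e_0}{v}{x}}$, inversion on \textsc{${\rightarrow}E$} and \textsc{${\rightarrow}I$} gives $\jdgmt{\tbind{x}{A'}}{}{\tbind{e_0}{A}}$ and $\vdash \tbind{v}{A'}$, whence the expression substitution principle (Lemma~\ref{alem:subst_expr}) delivers $\vdash \subst{e_0}{v}{x} : A$. For the modal $\beta$-rule $\redrel{\letbox{u}{\tbox{\Psi}{c}}{e_0}}{\substm{e_0}{\Psi}{c}{u}}$, the conclusion is exactly part~(2) of local soundness (Theorem~\ref{alem:beta_red}): inversion on \textsc{$\Box I$} supplies $\jdgmt{}{\Psi}{\cbind{c}{A'}}$ and inversion on \textsc{$\Box E$} supplies $\jdgmt{\mhypbind{u}{A'}{\Psi}}{}{\tbind{e_0}{A}}$, and local soundness then gives $\vdash \substm{e_0}{\Psi}{c}{u} : A$.

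I do not expect a serious obstacle, precisely because the heavy lifting lives in the substitution and local-soundness lemmas. The only points requiring care are bookkeeping: performing inversion correctly in the empty contexts, respecting the call-by-value value restrictions built into the congruence rules so that no reducible position is overlooked, and matching each $\beta$-case to the correct substitution principle. I would also note that expression reduction never recurses into computation reduction (the \texttt{let-box} congruence steps only the boxed subexpression $e_1$), so this theorem is self-contained and does \emph{not} require mutual induction with preservation on computations.
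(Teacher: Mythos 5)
Your proof is correct and follows essentially the same route as the paper's: induction on the reduction derivation, with the congruence cases discharged by inversion plus the induction hypothesis, the function $\beta$-case by the expression substitution principle (Lemma~\ref{lem:subst_expr}), and the modal $\beta$-case by the modal substitution principle (Lemma~\ref{lem:substm_snd}) --- you route this last case through local soundness, but that theorem is itself just inversion plus modal substitution, so the argument is identical. The only divergence is that the paper's appendix version of the proof additionally handles the \texttt{let-fix} reduction from the extended calculus, which lies outside the core semantics of Figure~\ref{fig:op_sem} to which you (legitimately) restrict.
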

\begin{proof}
By induction over $e \reduces{} e'$. The cases are:
\begin{description}
\item[Case $\appl{e_1}{e_2}~\reduces{}~\appl{e_1'}{e_2}$.]  We need to
  show $\vdash {\tbind{\appl{e_1'}{e_2}}{A}}$, while assuming $e_1
  \reduces{} e_1'$ and $\vdash {\tbind{\appl{e_1}{e_2}}{A}}$. 
  The goal follows by inversion on the second assumption,
  \textsc{${\rightarrow} E$} rule and induction.

\item[Case $\appl{v}{e_2}~\reduces{}~\appl{v}{e_2'}$.]
  Follows analogously.
\item[Case $\appl{(\lamx{A}{e_1})}{e_2} ~\reduces{}~
  \subst{e_1}{e_2}{x}$.]  Follows by inversion on the typing of $e$
  and the expression substitution principle (Lemma
  \ref{alem:subst_expr}).
\item[Case $\letboxu{e_1}{e_2} ~\reduces{}~ \letboxu{e_1'}{e_2}$.]
  The goal follows by inversion on the typing of $e$, the \textsc{$\Box E$}
  rule, and induction.
\item[Case $\letboxu{\tbox{\Psi}{c}}{e} ~\reduces{}~
  \substm{e}{\Psi}{c}{u}$.]  Follows by inversion on the typing of $e$ and
  the modal substitution principle (Lemma \ref{alem:substm_snd}).
\item[Case $e = 
    \tfix{f}{x}{B}{\Psi}{c}{e_1}
  $]
  which reduces to
  \[
  \subst{e_1}{
    \lam{x}{B}{
      \tbox{\Psi}{
        \tfix{f}{x}{B}{\Psi}{c}{c}
      }
    }
  }{f}
  \]
  The goal is to show
  $\pjdgmt{\ectxsym}{
    \subst{e_1}{
      \lam{x}{B}{
        \tbox{\Psi}{
          \tfix{f}{x}{B}{\Psi}{
            c
          }{c}
        }
      }
    }{f}
  }{A}$,
  under the assumptions that (1)
  $\pjdgmt{\tbind{f}{\funt{B}{\boxt{\Psi}{C}}}}{e_1}{A}$
  and (2) $\jdgmt{\tbind{f}{\funt{B}{\boxt{\Psi}{C}}},
    \tbind{x}{B}}{\Psi}{\cbind{c}{C}}$ for some type $C$. (These
  assumptions are obtained by inversion on the typing of $\vdash e :
  A$.)

  The proof follows immediately by applying the substitution principle
  (Lemma~\ref{lem:subst_expr}) on the variable $f$ in assumption
  (1). For the principle to apply, we must show that 
  \[
  \pjdgmt{\ectxsym{}}{
    \lam{x}{B}{
      \tbox{\Psi}{
        \tfix{f}{x}{B}{\Psi}{c}{c}
      }
    }
  }{\funt{B}{\boxt{\Psi}{C}}}
  \]
  but this is easy to establish from assumption (2) by the
  \textsc{${\rightarrow} I$}, \textsc{$\Box I$} and \textsf{Fix}
  typing rules.
\end{description}
\end{proof}

\begin{theorem}[Preservation on computations]
  If $\vdash {\cbind{c}{A}}$ and $c \reduces{} c'$, then
  $\vdash {\tbind{c'}{A}}$.
\end{theorem}
\begin{proof}
  By case analysis over $c \reduces{} c'$. The cases are:
  \begin{description}
  \item[Case $\ret{e}~\reduces{}~\ret{e'}$.]  We need to show $\vdash
    {\cbind{\ret{e'}}{A}}$, while assuming $e \reduces{} e'$.  By the
    preservation on expressions (Theorem \ref{athm:preservation_exp})
    and inversion on the typing of $\ret{e}$ we obtain $\vdash
    {\tbind{e'}{A}}$.  Thus the goal follows by the \textsc{ret} rule.
  \item[Case $\letbox{u}{e}{c''}~\reduces{}~\letbox{u}{e'}{c''}$.]  In
    this case, under assumption that $e \reduces{} e'$, we need to
    show that $\vdash {\cbind{\letbox{u}{e'}{c''}}{A}}$. By the
    preservation on expressions (Theorem \ref{athm:preservation_exp})
    and inversion on the typing of $\letbox{u}{e}{c''}$ we obtain
    $\vdash {\tbind{e'}{A'}}$ for some $A'$.  Thus the goal follows by
    the \textsc{$\Box E$} rule.
  \item[Case
    $\letbox{u}{\tbox{\Psi}{c_1}}{c_2}~\reduces{}~\substm{c_2}{\Psi}{c_1}{u}$.]
    We need to show $\vdash {\cbind{\substm{c_2}{\Psi}{c_1}{u}}{A}}$.  The
    goal follows by inversion on the typing of $c$ and the modal
    substitution principle (Lemma \ref{alem:substm_snd}).
  \item[Case $
      \tfix{f}{x}{B}{\Psi}{c_1}{c_2}
    $.]
    which reduces to
    \[
    \subst{c_2}{
      \lam{x}{B}{
        \tbox{\Psi}{
          \tfix{f}{x}{B}{\Psi}{c_1}{c_1}
        }
      }
    }{f}
    \]
    This case proceeds analogously to the one for expressions.
  \end{description}
\end{proof}

\begin{theorem}[Progress on expressions]\label{athm:progress_exp}
  If $\vdash {\tbind{e}{A}}$, then either
  (1) $e$ is a value, or
  (2) there exists $e'$ s.t. $e \reduces{} e'$.
\end{theorem}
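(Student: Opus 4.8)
The plan is to proceed by induction on the typing derivation $\vdash \tbind{e}{A}$, equivalently by case analysis on the head constructor of $e$ (which determines the last typing rule), keeping in mind that both the modal context $\Delta$ and the effect context are empty throughout. First I would record the \emph{canonical forms} facts I expect to need, each obtained by inversion on the typing rules: a value of type $\funt{A}{B}$ must be a $\lambda$-abstraction $\lam{x}{A}{e'}$ (only \textsc{${\rightarrow}I$} can conclude a function type for a value), and a value of type $\boxt{\Psi}{A}$ must be a box $\tbox{\Psi}{c}$ (only $\Box I$ can conclude a modal type for a value).

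Next I would dispatch the easy cases. The variable case $e = x$ is vacuous, since $\tbind{x}{A} \in \Delta$ is impossible for the empty $\Delta$; likewise the case $e = \teval{\Theta}{u}$ is vacuous, since rule \textsc{eval} requires a modal hypothesis $\mhypbind{u}{A}{\Psi} \in \Delta$ that the empty context does not provide, matching the observation in Section~\ref{sec:extensions} that an \texttt{eval} expression is never closed. The value cases $e = \lam{x}{A}{e'}$ and $e = \tbox{\Psi}{c}$ discharge alternative~(1) immediately, and the fixed-point case $e = \tfix{f}{x}{B}{\Psi}{c}{e_1}$ always takes a step by the \texttt{let-fix} reduction rule, discharging alternative~(2).

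The two interesting cases are application and \texttt{let-box}, both following the pattern forced by the congruence rules of Figure~\ref{fig:op_sem} that encode left-to-right, call-by-value order. For $e = \appl{e_1}{e_2}$ I would invert to get $\vdash \tbind{e_1}{\funt{A}{B}}$ and $\vdash \tbind{e_2}{A}$, then apply the induction hypothesis to $e_1$: if $e_1$ steps, the first congruence rule steps $\appl{e_1}{e_2}$; if $e_1$ is a value, canonical forms yields $e_1 = \lam{x}{A}{e'}$, and I then run the induction hypothesis on $e_2$, using the second congruence rule if $e_2$ steps and the $\beta$-rule if $e_2$ is a value. The case $e = \letboxu{e_1}{e_2}$ is analogous: invert to obtain $\vdash \tbind{e_1}{\boxt{\Psi}{A}}$, apply the induction hypothesis to $e_1$, and if it is a value, canonical forms forces $e_1 = \tbox{\Psi}{c}$, so the $\beta$-reduction to $\substm{e_2}{\Psi}{c}{u}$ fires.

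I do not expect a deep obstacle, as the argument is routine once canonical forms is in place; the points demanding genuine care are bookkeeping rather than mathematical. The crucial one is confirming that $x$ and $\teval{\Theta}{u}$ are the \emph{only} head forms lacking an applicable reduction rule, and that both are ruled out by the empty context---if this were overlooked, \texttt{eval} would appear as a stuck term. The secondary point is checking that the congruence rules fully cover the intermediate states of application and \texttt{let-box}, so that progress is never blocked while an immediate subexpression is still reducing.
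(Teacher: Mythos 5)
Your proposal matches the paper's proof essentially step for step: induction on the typing derivation, vacuous cases for $x$ and $\teval{\Theta}{u}$ (both ruled out by the empty $\Delta$), immediate values for $\lambda$ and \texttt{box}, an unconditional step for \texttt{let-fix}, and the application/\texttt{let-box} cases driven by the induction hypothesis on the left subterm plus inversion to pin down the canonical form of the value. The only cosmetic difference is that you factor the inversion arguments into explicit canonical-forms facts, whereas the paper performs the inversions inline; the mathematical content is identical.
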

\begin{proof}
By induction on $\vdash {\tbind{e}{A}}$.
Cases are:
\begin{description}
\item[Case $e = x$] contradicts typing.
\item[Case $e = \lam{x}{A}{e'}$] is a value.
\item[Case $e = \appl{e_1}{e_2}$.] By induction on $e_1$.
  If it is a value $v_1$, then by induction on $e_2$.
  If it is also a value $v_2$, then from typing and by inversion we know that
  $v_1$ must be of the form $\lam{x}{B}{e'}$ for some type $B$.
  Thus $e$ reduces to $\subst{e'}{v_2}{x}$ by the semantic step relation.
  If $e_2$ is not a value, but there exists $e_2'$ s.t. $e_2 \reduces{} e_2'$,
  then $e$ reduces to $\appl{v_1}{e_2'}$ by the semantic step relation.
  If $e_1$ is not a value, but there exists $e_1'$ s.t. $e_1 \reduces{} e_1'$,
  then $e$ reduces to $\appl{e_1'}{e_2}$ by the semantic step relation.
\item[Case $e = \tbox{\Psi}{c}$] is a value.
\item[Case $e = \letbox{u}{e_1}{e_2}$.] By induction on $e_1$, it is either a
  value or there exists $e_1'$ s.t. $e_1 \reduces{} e_1'$.
  In the former case, $e_1$ must be of the form $\tbox{\Psi}{c}$ from typing
  and by inversion.
  Thus in this case $e$ reduces to $\substm{e_2}{{\Psi}c}{u}$ by the semantic step
  relation.
  In the latter case, $e$ reduces to $\letbox{u}{e_1'}{e_2}$ by the semantic
  step relation.
\item[Case $e = \teval{\Theta}{u}$] contradicts typing since $u$ is a variable.
\item[Case $c = \tfix{f}{x}{B}{\Psi}{c}{e}$.] $c$ reduces to
  \[
  \subst{e}{\lam{x}{B}{\tbox{\Psi}{
        \tfix{f}{x}{B}{\Psi}{c}{c}
  }}}{f}
  \]
  by the reduction relation.
\end{description}
\end{proof}

\begin{theorem}[Progress on computations]
  If $\vdash {\cbind{c}{A}}$, then either
  (1) $c$ is $\ret{v}$ where $v$ is a value, or
  (2) there exists $c'$ s.t. $c \reduces{} c'$.
\end{theorem}
\begin{proof}
  By induction on $\vdash {\cbind{c}{A}}$.
  Cases are:
  \begin{description}
  \item[Case $c = \ret{e}$.] By inversion on $c$, and the progress on expressions
    theorem \ref{athm:progress_exp}, two cases possible on $e$: either it is a
    value, by which $c$ is $\ret{v}$ and the first statement holds, or there
    exists $e'$ s.t. $e \reduces{} e'$.
    Then, $c$ reduces to $\ret{e'}$ and the second statement holds.
  \item[Case $c = \sappl{s}{x}{c'}$] contradicts typing, as all statements use
    variables.
  \item[Case $c = \tfix{f}{x}{B}{\Psi}{c_1}{c_2}$.] $c$ reduces to
    \[
    \subst{c_2}{\lam{x}{B}{\tbox{\Psi}{
          \tfix{f}{x}{B}{\Psi}{c_1}{c_1}
    }}}{f}
    \]
    by the reduction relation.
  \item[Case $c = \letbox{u}{e}{c_1}$.] By inversion on $c$, and the theorem
    \ref{athm:progress_exp}, $e$ is either a value or there exists $e'$ s.t. $e
    \reduces{} e'$.
    In the first case, $e$ must be of a form $\tbox{\Psi}{c_b}$ for some
    $\Psi$, by typing and inversion.
    Thus $c$ reduces to $\substm{c_1}{\Psi}{c_b}{u}$.
    In the second case, $c$ reduces to $\letbox{u}{e'}{c_1}$.
  \end{description}
\end{proof}

\end{document}